\definecolor{linkblue}{HTML}{001487}
\crefname{enumi}{Step}{Steps}
\newtheorem{theorem}{Theorem}[section]
\newtheorem*{theorem*}{Theorem}
\newtheorem{lemma}[theorem]{Lemma}
\newtheorem{claim}[theorem]{Claim}
\theoremstyle{remark}
\newtheorem{condition}[theorem]{Condition}
\theoremstyle{definition}
\newtheorem{definition}[theorem]{Definition}
\newtheorem{protocol}{Protocol}
\numberwithin{equation}{section}
\newcommand\numberthis{\addtocounter{equation}{1}\tag{\theequation}}
\newcommand{\ket}[1]{|#1\rangle}
\newcommand{\bra}[1]{\langle#1|}
\newcommand{\proj}[1]{\ket{#1}\!\bra{#1}}
\newcommand{\projs}[1]{\ket{#1}\!\bra{~}}
\DeclarePairedDelimiterX\braket[2]{\langle}{\rangle}{#1 \delimsize\vert #2}
\newcommand{\tr}[1]{\mbox{\rm Tr}\!\left[ #1 \right]}
\newcommand{\ptr}[2]{\mbox{\rm Tr}_{#1}\!\left[ #2 \right]}
\newcommand{\pr}[1]{{\rm Pr}\!\left[ #1 \right]}
\newcommand{\prs}[2]{{\rm Pr}_{#1}\!\left[ #2 \right]}
\newcommand{\1}{\mathds{1}}
\newcommand{\N}{\ensuremath{\mathbb{N}}}
\newcommand{\R}{\ensuremath{\mathbb{R}}}
\let\H\relax
\newcommand{\H}{\mathcal{H}}
\DeclareMathOperator*{\E}{\mathbb{E}}
\newcommand{\ot}{\ensuremath{\otimes}}
\newcommand{\deq}{\coloneqq}
\newcommand{\cA}{\ensuremath{\mathcal{A}}}
\newcommand{\cB}{\ensuremath{\mathcal{B}}}
\newcommand{\cC}{\ensuremath{\mathcal{C}}}
\newcommand{\cF}{\ensuremath{\mathcal{F}}}
\newcommand{\cH}{\ensuremath{\mathcal{H}}}
\newcommand{\cI}{\ensuremath{\mathcal{I}}}
\newcommand{\cM}{\ensuremath{\mathcal{M}}}
\newcommand{\cN}{\ensuremath{\mathcal{N}}}
\newcommand{\cP}{\ensuremath{\mathcal{P}}}
\newcommand{\cR}{\ensuremath{\mathcal{R}}}
\newcommand{\cS}{\ensuremath{\mathcal{S}}}
\newcommand{\cT}{\ensuremath{\mathcal{T}}}
\newcommand{\cU}{\ensuremath{\mathcal{U}}}
\newcommand{\cV}{\ensuremath{\mathcal{V}}}
\newcommand{\cX}{\ensuremath{\mathcal{X}}}
\newcommand{\cY}{\ensuremath{\mathcal{Y}}}
\newcommand{\cZ}{\ensuremath{\mathcal{Z}}}
\newcommand{\mbP}{\ensuremath{\mathbb{P}}}
\newcommand*{\freq}[1]{\mathsf{freq}(#1)}
\newcommand*{\Var}[1]{\mathsf{Var}(#1)}
\newcommand*{\Max}[1]{\mathsf{Max}(#1)}
\newcommand*{\MinSigma}[1]{\mathsf{Min}_{\Sigma}(#1)}
\newcommand*{\Min}[1]{\mathsf{Min}(#1)}
\newcommand{\norm}[1]{\left\lVert#1\right\rVert}
\newcommand{\setft}[1]{\textnormal{#1}}
\newcommand{\pos}{\setft{Pos}}
\newcommand{\states}{\setft{S}}
\newcommand{\bits}{\ensuremath{\{0, 1\}}}
\newcommand{\eps}{\varepsilon}
\newcommand{\ext}{\textnormal{\textsc{Ext}}}
\newcommand{\hash}{\textnormal{\textsc{Hash}}}
\newcommand{\ec}{\textnormal{\textsc{ec}}}
\newcommand{\kv}{\textnormal{\textsc{kv}}}
\newcommand{\pa}{\textnormal{\textsc{pa}}}
\newcommand{\ca}{\textnormal{\textsc{ca}}}
\newcommand{\pd}{\textnormal{\textsc{pd}}}
\newcommand{\rk}{\textnormal{\textsc{rk}}}
\newcommand{\ev}{\textnormal{\textsc{ev}}}
\newcommand{\err}{\texttt{err}}
\newcommand{\sth}{{\rm~s.t.~}}
\newcommand{\tand}{\textnormal{~and~}}
\let\eps\varepsilon
\newcommand{\hmin}{H_\setft{min}}
\newcommand{\hmax}{H_{\ensuremath{\text{max}}}}
\newcommand{\dnormal}[2]{D \! \left( #1 \, \middle\Vert \, #2 \right)}
\newcommand{\hau}{H^{\shortuparrow}_\alpha}
\newcommand{\cptp}{\setft{CPTP}}
\newcommand{\id}{\setft{id}}
\newcommand{\Z}{\mathsf{Z}}
\newcommand{\X}{\mathsf{X}}
\newcommand{\zx}{\{\Z, \X\}}
\tikzstyle{porte} = [fill=gray!25, draw]
\Crefname{appsec}{Supplementary Note}{Supplementary Notes}
\renewcommand\onecolumngrid{% <<<<<<
\do@columngrid{one}{\@ne}%
\def\set@footnotewidth{\onecolumngrid}% <<<<<<<<<<<<<<<<
\def\footnoterule{\kern-6pt\hrule width 1.5in\kern6pt}%
}
\begin{document}

\title{Security of quantum key distribution from generalised entropy accumulation}

\author{Tony Metger}
\email{tmetger@ethz.ch}

\author{Renato Renner}
\affiliation{Institute for Theoretical Physics, ETH Z{\"u}rich, 8093 Z{\"u}rich, Switzerland}

%\date{}

\begin{abstract}
The goal of quantum key distribution (QKD) is to establish a secure key between two parties connected by an insecure quantum channel. To use a QKD protocol in practice, one has to prove that a finite size key is secure against \textit{general attacks}: no matter the adversary's attack, they cannot gain useful information about the key. A much simpler task is to prove security against \textit{collective attacks}, where the adversary is assumed to behave identically and independently in each round. 
In this work, we provide a formal framework for general QKD protocols and show that for any protocol that can be expressed in this framework, security against general attacks reduces to security against collective attacks, which in turn reduces to a numerical computation.
Our proof relies on a recently developed information-theoretic tool called \textit{generalised entropy accumulation} and can handle generic \textit{prepare-and-measure protocols} directly without switching to an entanglement-based version.
\end{abstract}

\maketitle

\section{Introduction} \label{sec:intro} Quantum key distribution (QKD) considers the following scenario: two parties, Alice and Bob, can communicate via an insecure quantum channel and an authenticated classical channel.\footnote{An insecure quantum channel allows the adversary to intercept and tamper with any quantum state sent across the channel. An authentic classical channel is one where an adversary can read every message sent across the channel, but cannot impersonate either party; for example, the adversary cannot convince Bob that a certain message was sent by Alice when in fact, it was not.} 
Using these resources, Alice and Bob would like to establish a secure shared key, i.e.~a piece of information that is known to both of them, but entirely unknown to an adversary Eve~\cite{bb84, e91}.
%As QKD moves closer to real-world deployment~\cite{practical_qkd_review}, it is increasingly important to develop techniques that can be used to prove the security of realistic QKD protocols, e.g.~protocols that are implemented photonically using laser pulses as an information carrier.

The key difficulty in establishing the security of a QKD protocol is that one has to take into consideration any possible attack that the adversary Eve may perform. 
For example, in one round of the protocol, Eve may gather a piece of quantum side information about the quantum state sent via the insecure channel. 
This piece of side information could be combined with side information from previous rounds to plan Eve's attack for the next round, resulting in a very complicated multi-round attack.
Additionally, Alice and Bob can only execute a certain finite number of rounds, introducing statistical finite-size effects.
A security proof that takes both of these challenges into account is called a \textit{finite-size security proof against general attacks} (also referred to as \textit{coherent attacks})~\cite{bruss_finite,scarani_renner_finite,Cai_2009}.
Such a proof is required to safely deploy a QKD protocol in practice.

Due to the difficulty of proving finite-size security against general attacks, many protocols are first analysed for \textit{collective attacks}, for which very general numerical techniques have been developed (see e.g.~\cite{coles2016numerical, Winick2018reliablenumerical, wang2019characterising, primaatmaja2019versatile, brown2021device, brown2021computing, tan2021computing, hu2021robust, araujo2022quantum}). 
For a security proof against collective attacks one makes the assumption that Alice and Bob execute infinitely many rounds of the protocol and Eve behaves independently and identically in each round. 
This is also called the \textit{i.i.d.~asymptotic setting}.
These assumptions are of course unrealistic, but a collective attack proof is a useful theoretical tool as it can often be converted into a finite-size proof against general attacks using techniques such as the quantum de Finetti theorem~\cite{renner_thesis,christandl2009postselection}. 
Such general techniques are very powerful, but typically require additional assumptions on the protocol and can significantly lower the amount of key that can be extracted compared to the collective attack scenario.

In this work, we show that  security against collective attacks implies finite-size security against general attacks for a broad class of protocols. 
The main feature of our security proof is its generality: while many existing security proofs work well for particular protocols, our approach works for any generic protocol satisfying a few structural assumptions.
Furthermore, it provides a natural way of proving security against general attacks, with the proof being in close correspondence to the structure of the original protocol, whereas previous techniques often required the protocol to be transformed into a theoretically equivalent one to fit into the framework of a particular proof technique. 
In particular, our technique can be applied directly to \textit{prepare-and-measure protocols} without transforming them into an entanglement-based version.
Furthermore, our technique provides bounds that are independent of the dimension of the underlying Hilbert space; instead, the bound depends only on the number of possible classical outputs that Alice and Bob may receive. 
This is particularly important for photonic QKD protocols, where the underlying Hilbert space is a Fock space with unbounded dimension~\cite{lutkenhaus_pra,inamori2007unconditional}, and is also useful for (semi-)device-independent protocols.
We give a more detailed comparison between our technique and previous ones below.

Our main result, \cref{thm:general_qkd_pm}, reduces the task of showing security against general attacks to the (generally much simpler) task of proving security against collective attacks.
In \cref{sec:pm_general}, we present this result by providing a very general template prepare-and-measure QKD protocol (\cref{prot:qkd-pm}) and proving its finite-size security against general attacks assuming a security statement for collective attacks.
Furthermore, we adapt existing numerical techniques~\cite{coles2016numerical,Winick2018reliablenumerical} to show that a collective attack bound for any instance of this general protocol can be computed by solving a convex optimisation problem (\cref{sec:col_attack_bounds}).
Hence, to show security for a particular protocol of interest, one can simply express this protocol as an instance of our general template protocol, numerically determine a collective attack bound, and deduce security against general attacks immediately from our \cref{thm:general_qkd_pm}.
We illustrate the procedure of proving security using our framework by applying it to the B92 protocol in \cref{sec:b92};
this yields the first finite-size security proof against general attacks for the B92 protocol that converges to the optimal key rate in the asymptotic limit of infinitely many rounds.
We also note that the proof technique underlying the security proof of our template protocol is very general and can easily be adapted to protocols that do not fit precisely into our framework.
For example, one could incorporate advantage distillation~\cite{maurer_advantage,gottesman_two_way} by applying our technique in a block-wise rather than a round-wise manner.

Our proof employs a recent information-thereotic result, called the \emph{generalised entropy accumulation theorem} (GEAT)~\cite{gen_eat}.
This theorem is an information-theoretic statement about the min-entropy produced by sequential quantum processes; see \cref{sec:geat_prelim} for a more detailed explanation.
For our security proof, we show that generic QKD protocols can be expressed in such a form that the GEAT provides a tight lower-bound on a certain min-entropy associated with the protocol, which implies security of the protocol against \textit{general} attacks.
This entropic lower bound depends on the security statement against \textit{collective} attacks; it is in this sense that our proof reduces security against general attacks to security against collective attacks.

There are a number of existing techniques for converting security proofs against collective attacks into ones against general attacks.
The most widely used ones are either based on the quantum de Finetti theorem~\cite{renner_thesis} (or the related post-selection technique~\cite{christandl2009postselection}), or they employ the entropy accumulation theorem (EAT)~\cite{eat}, of which the GEAT is a generalisation.
We briefly describe how each of these relate to our technique using the GEAT.

The quantum de Finetti theorem and related methods such as the post-selection technique rely on the permutation-symmetry between different rounds of the protocol to reduce general to collective attacks.
While not every protocol possesses this permutation symmetry naturally, it can usually be enforced by including an additional ``symmetrisation step'' in the protocol.
The main downside of these techniques is that the bounds they achieve scale unfavourably with the dimension of the underlying Hilbert space, i.e.~the Hilbert space that contains the states sent from Alice to Bob.
This means that these techniques only yield useful bounds for protocols with a small Hilbert space dimension, e.g.~the BB84 or B92 protocols~\cite{bb84, b92}.
However, practical implementations of QKD protocols do not always satisfy this requirement; for example, many protocols use laser pulses as the means by which Alice sends a quantum state to Bob~\cite{qkd_scarani_review,practical_qkd_review}, and such laser pulses are described in a Fock space whose dimension is in principle unbounded.
While methods for truncating the Fock space have been developed~\cite{renner2009finetti,beaudry_squashing}, this introduces additional complications and may lead to weak bounds if the dimension of the truncated Fock space remains large.

In contrast, the EAT and GEAT provide bounds that do not depend on the dimension of the underlying Hilbert space.
In fact, the EAT and GEAT share (almost) the same second-order terms, so one does not incur a noticeable loss in parameters when using the GEAT compared to the EAT. 
This dimension-independence of the second-order terms means that the EAT can also be used to prove security for device-independent or semi-device-independent proto\-cols~\cite{arnon2019simple}.
For an example of a device-independent protocol that can be treated with the GEAT, but not the EAT, see Ref.~\cite{gen_eat}.

The main difference between the EAT and the GEAT is that the EAT deals with a more restrictive model of how side information can be generated during the protocol.
The  GEAT allows Eve's side information to be updated in an arbitrary way.
In contrast, the EAT requires that new side information must be output in a round-by-round manner subject to a Markov condition between rounds, and once side information has been output it cannot be updated anymore.
In general, it is not possible to model the way that Eve actively intercepts quantum states and updates her side information in a prepare-and-measure protocol by a process that outputs side information in a round-by-round manner subject to the Markov condition.
As a consequence, the EAT cannot ``naturally'' deal with general prepare-and-measure protocols.
Instead, one first has to convert a prepare-and-measure protocol into an entanglement-based protocol.
This can be done as follows: if Alice prepares one of a set of pure states $\{\ket{\psi^j}_Q\}_j$ with probability $p(j)$ and stores the index $j$ specifying the state in her register $A$, we can replace this by Alice preparing a state $\ket{\tilde \psi}_{AQ} = \sum_{j} \sqrt{p(j)} \ket{j}_A \ket{\psi^j}_Q$ and later measuring her system $A$.
Then, we can model Eve's attack by replacing this state $\ket{\tilde \psi}_{AQ}$ by an arbitrary state $\ket{\hat \psi}_{AQE}$ prepared by Eve, subject to the constraint that Alice's marginal, which Eve cannot access in the prepare-and-measure protocol, is ``correct'', i.e.~$\tilde \psi_A = \hat \psi_A$.
This additional constraint is an artificial one in the sense that it is not something that Alice and Bob check in the actual protocol, and it is unclear how it can be incorporated into a security proof using the EAT in a natural way.
As a result, it appears difficult or impossible to use the EAT to obtain reasonable finite-size key rates for prepare-and-measure protocols except in very simple cases.
In contrast, the GEAT is able to deal with prepare-and-measure protocols directly, circumventing this issue entirely.
The GEAT's ability to deal with prepare-and-measure protocols without any dependence on the dimension of the underlying Hilbert space makes it particularly useful for photonic prepare-and-measure protocols, which are of  practical interest.

In addition to these general techniques for reducing security against general attacks to security against collective attacks, there are also more specialised techniques that directly prove security against general attacks without an explicit reduction to collective attacks.
Perhaps the most common of these are phase-error correction and entropic uncertainty techniques, both of which use the complementarity of different measurements in the protocol as the starting point for a security proof (see e.g.~\cite{lo-chau,shor-preskill,tamaki2003unconditionally,three-state,Koashi_2009,tomamichel2012tight}).
%For example, for the BB84 protocol, one can exploit this complementarity by means of an uncertainty relation for one-shot entropies to prove tight bounds against general attacks~\cite{tomamichel2012tight}.
These security proofs usually give very tight bounds for ``symmetric'' protocols (i.e.~protocols relying on mutually unbiased measurement bases, even though these bases need not be chosen with equal probability) where they can be applied naturally, and can also be extended to symmetric protocols with experimental imperfections that slightly break the symmetry, e.g.~using the reference state technique~\cite{pereira2020quantum,pereira2022modified}.
In addition, various other proof techniques that use the symmetry of specific protocols have been developed (see e.g.~\cite{christandl2004generic, rgk_infotheoretic, akmb_renyi}).
Our general security proof is not meant to replace more specialised techniques in cases where the protocol is symmetric enough for these to be applied; instead, we provide a unifying framework that can also prove the security of protocols for which no specialised techniques are available, or for which such techniques do not yield asymptotically tight bounds.
An example of this is the B92 protocol, for which we give a full security proof in~\cref{sec:b92}: for this protocol, complementarity-based methods have been used to prove security against general attacks~\cite{tamaki2003unconditionally,tamaki2004unconditional}, but the resulting key rates do not converge to the collective attack rate in the limit of infinitely many rounds.
In contrast, a direct application of our general framework yields the first asymptotically tight finite-size security proof against general attacks for the B92 protocol.

\section{Results} \label{sec:pm_general}

\subsection{Framework for prepare-and-measure protocols} \label{sec:gen_pm_protocol}

Our main result, \cref{thm:general_qkd_pm}, shows that for a broad class of prepare-and-measure protocols, security against collective attacks implies security against general attacks.
To make this result easy to use, we phrase it as a security statement for a general ``template protocol''; many existing prepare-and-measure protocols can be viewed as an instance of this template protocol, and their security then follows from the security of the general template protocol.
For protocols that do not fit exactly into this template, the security proof can usually easily be adapted from our proof of \cref{thm:general_qkd_pm}.

\begin{figure*}
\begin{longfbox}[breakable=false, padding=1em, padding-right=1.8em, padding-top=1.2em, margin-top=1em, margin-bottom=1em]
\begin{protocol} {\bf General prepare-and-measure QKD protocol} \label{prot:qkd-pm} \end{protocol}
\noindent\underline{Protocol arguments}  \vspace{-0.8ex}
\vspace{-0.6ex}
\begin{center}
\begin{tabularx}{\textwidth}{r c X}
$n \in \N$ &:& number of rounds \\
$\psi_{UQ}$ &:& quantum state prepared by Alice, where $U$ is classical with alphabet $\cU$ and $Q$ is quantum \\
$\{N^{(v)}\}_{v \in \cV}$ &:& POVM acting on Hilbert space $\H_Q$ describing Bob's trusted measurements (where $\cV$ is some finite set of possible outcomes) \\
$\pd: \cU \times \cV \to \cI$ &:& function describing transcript of public discussion (where $\cI$ is some finite alphabet) \\
$\rk: \cU \times \cI \to \cS$ &:& function describing Alice's raw key generation (where $\cS$ is the alphabet of the raw key) \\
$\ev: \cV \times \cI \times \cS \to \cC$ &:& function ``evaluating'' each round by assigning a label from the alphabet $\cC$ \\ 
$\lambda_{\ec} \in \N_0$ &:& length of bit string exchanged during error correction step \\ 
$k_{\ca} > 0$ &:& required amount of single-round entropy generation \\
$\eps_{\kv}, \eps_{\pa} > 0$ &:& tolerated errors during key validation and privacy amplification steps \\
$\ca: \mbP(\cC) \to \R$ &:& affine function corresponding to collective attack bound \\
$l \in \N$ &:& length of final key \\
\end{tabularx}
\end{center}

\vspace{0.1ex}

\noindent\underline{Protocol steps}
\begin{enumerate}[label=(\arabic*)]
    \setlength{\itemsep}{0.4ex}
    \setlength{\parskip}{0.4ex}
    \setlength{\parsep}{0.4ex} 
\item \emph{Data generation.} Alice prepares $\psi_{U^n Q^n} = \psi_{UQ}^{\ot n}$ and sequentially sends the systems $Q_1, \dots, Q_n$ to Bob via a public quantum channel.
For each $i \in \{1, \dots, n\}$, Bob measures $\{N^{(v)}\}_{v \in \cV}$ on register $Q_i$ and records the outcome in register $V_i$. \label{step_pm:data_gen}
\item \emph{Public discussion.}
For each $i \in \{1, \dots, n\}$, Alice and Bob publicly exchange information $I_i = \pd(U_i, V_i)$. \label{step_pm:pd}
\item \emph{Raw key generation.} For each $i \in \{1, \dots, n\}$, Alice computes $S_i = \rk(U_i, I_i)$. \label{step_pm:raw_key_gen}
\item \emph{Error correction.} Alice and Bob publicly exchange information $\ec \in \bits^{\lambda_{\ec}}$, which can depend on $U^n, V^n$, and $I^n$. Bob computes $\hat S^n(\ec, V^n, I^n) \in \cS^n$. \label{step_pm:ec}
\item \emph{Raw key validation.} Alice chooses a function $\hash: \cS^n \to \bits^{\lceil \log(1/\eps_{\kv}) \rceil}$ from a universal hash family $\cF$ (\cref{def:univ_hash}) according to the associated probability distribution $P_{\cF}$ and publishes a description of $\hash$ and the value $\hash(S^n)$. \label{step_pm:rkv}
Bob computes $\hash(\hat S^n)$ and aborts the protocol if $\hash(S^n) \neq \hash(\hat S^n)$.
\item \emph{Statistical check.} For each $i \in \{1, \dots, n\}$, Bob sets $\hat C_i = \ev(V_i, I_i, \hat S_i)$. Bob then computes $k = \ca(\freq{\hat C^n})$. If $k < k_\ca$, he aborts the protocol. \label{step_pm:stat}
\item \emph{Privacy amplification.} Alice and Bob convert their registers $S^n$ and $\hat S^n$ to a binary representation, obtaining strings of length $m$.
Alice chooses a seed $\mu \in \bits^m$ uniformly at random and publishes her choice.
Alice and Bob compute $l$-bit strings $K = \ext(S^n, \mu)$ and $\hat K = \ext(\hat S^n, \mu)$, respectively, where $\ext: \bits^m \times \bits^m \to \bits^l$ is a quantum-proof strong $(l + \lceil 2 \log(1/\eps_{\pa})\rceil, \eps_{\pa})$-extractor (\cref{def:extractor}). \label{step_pm:pa}
\end{enumerate} 
\end{longfbox}
\end{figure*}

Our template protocol is described formally in \cref{prot:qkd-pm}; here, we make a few additional remarks regarding this general protocol, using the notation introduced in \cref{prot:qkd-pm}.
Firstly, without loss of generality we can assume that the cq-state $\psi_{UQ}$ is of the form $\psi_{UQ} = \sum_{u} p(u) \proj{u} \ot \proj{\psi}_{Q|u}$ for a probability distribution $p(u)$ and \emph{pure} states $\proj{\psi}_{Q|u}$.
This means that Alice chooses a value $u$ according to $p(u)$ and then sends the pure state $\proj{\psi}_{Q|u}$ to Bob.
The reason that we can assume that $\proj{\psi}_{Q|u}$ is pure is that if Alice wanted to send a mixed state, she could express that mixed state as a mixture of pure states, send one of those pure states, and later ``forget'' which of the pure states she sent as part of the map $\rk$.

Secondly, in \cref{prot:qkd-pm} Bob measures a POVM $\{N^{(v)}\}$ with outcomes $v \in \cV$.
More commonly, we think of Bob as choosing an input $y$ according to some distribution $q(y)$ and receiving an output $b \in \cB$.
This can be described by a collection of POVMs $\{\tilde N^{(b)}_y\}_{b \in \cB}$, one for each possible input $y$.
For example, Bob might choose uniformly at random whether to measure a qubit in the computational or Hadamard basis.
In that case, $y$ would be the basis choice, and for each $y$, $\{\tilde N^{(b)}_y\}_{b \in \cB}$ is the measurement in the chosen basis.
However, since Bob's measurements are trusted, the distinction between inputs and outputs is unnecessary: we can convert a set of POVMs $\{\tilde N^{(b)}_y\}_{b \in \cB}$ with an input distribution $q(y)$ into an equivalent single POVM $\{N^{(v)}\}_{v \in\cV}$ by choosing $\cV = \cY \times \cB$ and $N^{(y, b)} = q(y) \tilde N^{(b)}_y$.
This satisfies the required property of a POVM: 
\begin{align*} 
\sum_{y, b} N^{(y, b)} = \sum_{y} q(y) \sum_{b} \tilde N^{(b)}_y = \sum_y q(y) \1 = \1 \,,
\end{align*}
where we used the fact that $\{\tilde N^{(b)}_y\}_{b \in \cB}$ is a POVM for the first equality and the fact that $q(y)$ is a probability distribution for the second.
One can think of $N^{(y, b)}$ as first choosing $y \in \cY$ according to $q(y)$ and then measuring $\{\tilde N^{(b)}_y\}$ on the state, providing $(y, b)$ as output.

Thirdly, the function $\pd$ describes the total information exchanged during the public discussion (\cref{step_pm:pd}) for one round $i$ of the protocol.
The details of how the public discussion takes place are of no concern to the protocol: in general, Alice and Bob may exchange multiple rounds of back-and-forth communication during this step, and $\pd$ describes the transcript of the entire exchange.
For example, in a protocol that includes a sifting step, the public discussion would include the information necessary to decide which rounds to sift out; the actual sifting would occur in the raw key generation step, where Alice's function $\rk$ can use the information from the public discussion to put a special symbol (e.g.~$\bot$) as the raw key for rounds that are sifted out.

Additionally, the protocol distinguishes between information $I_i$ published during \cref{step_pm:pd} and error correction information $\ec$ published during \cref{step_pm:ec}.
The difference between these two steps is that $I_i$ may only depend on the inputs $U_i$ and $V_i$ generated during the $i$-th round of measurements.
This means that $I_i$ is generated in a round-by-round manner and will enter in the single-round security statement (or collective attack bound, see \cref{def:coll_attack}).
In contrast, $\ec$ is \emph{global} information of a fixed length $\lambda_\ec$, i.e.~it can depend arbitrarily on information generated during all rounds of the protocol, but to obtain a good key rate, $\lambda_{\ec}$ should be as short as possible.\footnote{The bound on the length of $\ec$ is needed in \cref{eqn:sub_lambdaec}, where we use it to remove the error correction information from the conditioning system.
We note that one can replace \cref{eqn:sub_lambdaec} by a slightly more sophisticated chain rule that subtracts a (one-shot) mutual information between $\ec$ and $S^n$.
In that case, the protocol needs to specify an upper bound on this mutual information instead of the length $\lambda_\ec$.}

Finally, we note that in \cref{prot:qkd-pm}, Alice and Bob first perform error correction, and afterwards Bob uses his error-corrected guess for Alice's raw key for the purposes of the statistical check.
An alternative that is commonly used in existing QKD protocols is that Alice and Bob publish part of their data in a separate \textit{parameter estimation step} before the error correction step and use this public information to run a statistical check.
Our \cref{prot:qkd-pm} can easily be modified to include protocols of this form.\footnote{For the modified protocol, the security proof stays exactly the same, except that the reduction from \cref{thm:general_qkd_pm} to \cref{claim:eat_qkd_pm} now follows almost trivially and does not need the argument from \cref{app:entr_reduction}.}

\textbf{Example: BB84 protocol as an instance of \cref{prot:qkd-pm}.}
To gain further intuition for \cref{prot:qkd-pm}, we describe how to reproduce the well-known BB84 protocol as an instance of our general \cref{prot:qkd-pm}.
In the BB84 protocol, Alice sends a random state from the set $\{\ket{0}, \ket{1}, \ket{+}, \ket{-}\}$, where $\ket{\pm} = \frac{\ket{0} \pm \ket{1}}{\sqrt{2}}$ are the Hadamard basis states.
As her information $U_i$, Alice records which state she sent, i.e.~she records the basis $x \in \bits$ and the value $a \in \bits$.
Hence, for the BB84 protocol, 
\begin{align*}
\psi_{UQ} = \frac{1}{4} \sum_{x, a \in \bits} \proj{x, a}_{U} \ot H^x \proj{a}_Q H^x \,,
\end{align*}
where $H$ is the Hadamard gate and $H^0 = \id, H^1 = H$.
Bob's measurements output a basis choice $y \in \bits$ and the outcome $b$ of a single-qubit measurement in that basis (with $y = 0$ corresponding to the computational and $y=1$ to the Hadamard basis).
Therefore, his measurements are described by a POVM on system $Q$ consisting of elements 
\begin{align*}
N^{(y, b)} = \frac{1}{2} H^y \proj{b} H^y \,.
\end{align*}

During the public discussion phase, Alice and Bob publish their basis choices $x_i$ and $y_i$ for each of the rounds. Therefore, for $U_i = (x_i, a_i)$ and $V_i = (y_i, b_i)$, 
\begin{align*}
I_i = \pd(U_i, V_i) = (x_i, y_i) \,.
\end{align*}

To generate her raw key, for each round Alice checks whether the basis choices $x_i$ and $y_i$ are the same: if so, she uses her measurement outcome $a_i$ for the raw key, and otherwise she discards that round. Formally, 
\begin{align*}
S_i = \rk(U_i, I_i) = \rk((x_i, a_i), (x_i,  y_i)) = 
\begin{cases}
a_i & \textnormal{if $x_i = y_i$,} \\
\bot & \textnormal{otherwise.}
\end{cases}
\end{align*}

Finally, for the statistical check in \cref{step_pm:stat}, Bob checks whether his guess $\hat S^n$ for Alice's string matches his own raw data.
% \footnote{More traditionally, Alice and Bob would publish their outputs $a_i$ and $b_i$ on a subset of rounds and check whether they are the same.
% However, for~\cref{prot:qkd-pm}, this is not necessary: after passing the check in \cref{step_pm:rkv}, Bob's guess $\hat S^n$ matches Alice's actual raw  key $S^n$ with high probability. Therefore, if Bob compares $\hat S^n$ to his own raw data $V^n$ to decide whether or not to abort the protocol, with high probability the conclusion he reaches is the same that Alice and Bob would have reached if they had compared the actual raw key $S^n$ to $V^n$. Of course, one can also use the ``traditional'' version by publishing the required information in \cref{step_pm:pd}, in which case \cref{step_ent:stat} is independent of $\hat S^n$ and can be performed immediately after \cref{step_pm:pd}.}
In fact, Bob can only do this check on a small subset of indices $i$.
The reason is that for our definition of collective attack bounds (\cref{def:coll_attack}) and the security proof (\cref{thm:general_qkd_pm}), we are bounding the entropy conditioned on the systems $C^n$, i.e.~we are essentially assuming that all of the statistical information gets leaked to Eve.\footnote{In practice, Bob can of course keep the statistical information $C^n$ private and Eve would only learn the (single-bit) abort decision. However, conditioning on $C^n$ is necessary in the proof of \cref{claim:eat_qkd_pm} in order to satisfy \cref{eqn:measurement_condition}. It is an interesting open question to see whether the GEAT and our proof can be improved to get rid of the conditioning on $C^n$. In that case, Bob could base his abort decision on data $V_i$, $I_i$, and $\hat S_i$ from \emph{all} rounds. However, this would only provide a relatively minor improvement to the key rate as Bob can already make a well-informed abort decision based on a small fraction of rounds.}
Hence, Bob chooses a value $T_i$ at random with $\pr{T_i = 1} = \gamma$ (where $\gamma$ is the \emph{testing probability}, and the choice of $T_i$ can formally be included into $V_i$), and then sets
\begin{align*}
\hat C_i &= \ev(V_i, I_i, \hat S_i) = \ev((y_i, b_i), (x_i, y_i), \hat S_i) \\
&= \begin{cases}
\bot & \textnormal{if $x_i \neq y_i$ or $T_i = 0$,} \\
1 & \textnormal{if $x_i = y_i$, $T_i = 1$, and $b_i = \hat S_i$,}\\
0 & \textnormal{otherwise.}
\end{cases}
\end{align*}
Intuitively, $\bot$ denotes that no useful check can be performed in this round, ``1'' means the check has passed, and ``0'' means the check has failed.

\subsection{Modelling Eve's attack} \label{sec:eve_attack}
In \cref{prot:qkd-pm}, Eve can obtain information about the final key $K$ in two ways: firstly, Eve can observe the classical information published by Alice and Bob during the protocol, e.g.~the error correction information $\ec$.
In a security proof, this is easy to handle, as Alice and Bob have full control over what information they publish.
Secondly, Eve can intercept the quantum systems $Q_i$ sent from Alice to Bob in \cref{step_pm:data_gen}.
This is much harder to analyse in a security proof as Eve can perform arbitrary operations on the systems $Q_i$ and we need to bound the amount of information Eve can gain about Alice's and Bob's raw key from tampering with the systems $Q_i$ without being detected.
The set of actions Eve performs on the systems $Q_i$ is called Eve's attack.

In principle, Eve could collect all of the $n$ systems $Q_1, \dots, Q_n$, perform an arbitrary quantum channel $\cA: Q^n \to E Q^n$, and send the output on systems $Q^n$ to Bob.
The system $E$ would be kept by Eve and would contain her (potentially quantum) side information about the final key.

To analyse the security of a prepare-and-measure protocol with the GEAT, we need to introduce an extra condition.
\begin{condition} \label{cond:eve_seq}
Eve can only be in possession of one of the systems $Q_i$ at the same time.
\end{condition}
Since Alice sends the systems $Q_1, \dots, Q_n$ sequentially in \cref{step_pm:data_gen}, this means that with this additional condition, Eve's most general attack also takes a sequential form.
More formally, with this condition, the most general attack Eve can perform is described by a sequence of maps $\cA_i: E'_{i-1} Q_i \to E'_i Q_i$, where $E'_i$ are arbitrary quantum systems that contain Eve's side information after having intercepted the $i$-th system $Q_i$.
(The system $E_0$ can be chosen to be trivial without loss of generality, but we will not need this for our security proof.)

In fact, it is easy for Alice and Bob to enforce \cref{cond:eve_seq} by checking that system $Q_i$ has arrived on Bob's side before $Q_{i+1}$ is sent.
The downside of this simple strategy is that if Alice and Bob are far apart, it limits the number of signals that can be sent per unit time.

To circumvent this, Alice and Bob can agree on a ``schedule'' on which signals are transmitted, i.e.~they decide when Alice will send out each signal, so Bob, being aware of its travel time without Eve's interference, knows when to expect to receive it.
Then, assuming that Eve cannot significantly speed up the transmission of signals, this would ensure that \cref{cond:eve_seq} is satisfied without Alice having to wait for Bob's confirmation to send the next signal (see \cref{fig:spacetime} for an illustration of this).
%\footnote{The reason that it is necessary for Eve to speed up the signals to violate the sequentiality assumption is most easily seen if one considers Eve's attack as described in a space-time diagram: if Eve wants to violate the sequentiality condition between signals $i$ and $j$,  there must be a space-time point with information about both signals present. Without loss of generality, we can assume this space-time point to be located just outside of Bob's lab (because the signals must ultimately be transmitted to Bob, and if the signals were co-located elsewhere already, Eve could equivalently first transmit them to just outside Bob's lab before running an attack channel) and just before Bob expects the earlier signal to arrive.Hence, Alice's later signal~$j$ must already be present at Bob's lab at the time when Bob expects the earlier signal~$i$. This shows that Eve must speed up signal $j$ to the extent that it catches up with where signal~$i$ would be according to scheduling by the honest parties.}
Whether or not the assumption that Eve cannot significantly speed up the transmission of signals is realistic depends on the specific QKD setup: for example, if signals are transmitted from Alice to Bob through vacuum (e.g.~in satellite-to-satellite QKD), they travel at the speed of light and cannot be sped up further by Eve, so~\cref{cond:eve_seq} can be enforced by sending signals on a pre-agreed schedule without issues.

On the other hand, if Alice and Bob exchange signals via a (very long) optical fiber, Eve could in principle extract the signal at the start of the fiber, transmit it through free space, and then re-insert it into the fiber on Bob's side. Since the speed of light in a fiber is slower than in free space, this would enable Eve to have simultaneous access to a (relatively small) set of $s$ sped-up signals, perform some attack involving this set of signals, and then feed the ``first'' of these signals to Bob in such a way that it arrives at the time expected by Bob;
then, Eve could add the next sped-up signal to her set, apply another attack to that set of $s$ signals, and so on.
Such an attack would violate \cref{cond:eve_seq}, but it would go unnoticed by Alice and Bob since the signals do arrive at the expected times on Bob's end.

Setting aside the question of how realistic it is for Eve to perform such an attack, this issue can be addressed by relaxing \cref{cond:eve_seq} so that instead of requiring Eve to be in possession of only one signal at a time, we allow her to be in possession of $s$ signals at a time.
To prove security under this weakened condition, we can divide the signals into interleaved groups such that any two signals within a group are $s$ rounds apart, use a standard chain rule for min-entropies (or Renyi entropies) to divide the total entropy into a sum of group-wise entropies, and simply apply our analysis at the level of these groups.
Our proof then goes through essentially unchanged, although the resulting second-order terms in the key rate will depend on the allowed number $s$ of signals available to Eve at a time.
We explain this modification in more detail in \cref{app:block_seq} and focus on the case where \cref{cond:eve_seq} holds exactly in the main text for simplicity.

We have now seen how to model Eve's general attack under \cref{cond:eve_seq}.
In contrast to such general sequential attacks, collective attacks only allow Eve to perform the same independent attack in each round of the protocol. 
Hence, a collective attack can be modelled by a map $\cA: Q \to E Q$, which Eve applies in each round of the protocol, so Eve's full attack over $n$ rounds is given by the tensor product map $\cA^{\ot n}: Q^n \to E^n Q^n$.
Proving security against this restricted class of attacks is typically much easier than proving security against general attacks.
However, we stress that, unlike \cref{cond:eve_seq}, the assumption that Eve performs only a collective attack cannot be enforced by Alice and Bob.
Therefore, a security proof that only considers collective attacks is insufficient for practical applications.

% \textbf{Outline.}
% We begin by introducing the necessary technical background, namely security definitions, entropies, basic QKD primitives such as randomness extraction, and the GEAT in \cref{sec:prelim}.
% \cref{sec:pm_general} contains our main result, a method for converting security proofs against collective attacks into security proofs against general attacks.
% For this, we consider a general template protocol (\cref{prot:qkd-pm}) for prepare-and-measure QKD protocols.
% We describe how to model Eve's attack against this general protocol (\cref{sec:eve_attack}) and adapt known numerical techniques to reduce the problem of finding a collective attack bound to a convex optimisation problem (\cref{sec:col_attack_bounds}).
% We prove the finite-size security of \cref{prot:qkd-pm} against general attacks in \cref{sec:gen_attacks}, and the formal security guarantee is stated as \cref{thm:general_qkd_pm}.
% To illustrate our framework, we use it to derive key rates for the B92 protocol against general attacks in \cref{sec:b92} and also sketch its application to decoy state protocols (\cref{sec:bb84_decoy}), which are widely used in practice.
% One of the main advantages of our result is that prepare-and-measure protocols can be treated directly, but one can use the same technique for entanglement-based protocols, too, which we show in \cref{sec:entanglement_based}.

\subsection{Collective attack bounds} \label{sec:col_attack_bounds}
If one restricts Eve to performing collective attacks, it is known that in the limit $n \to \infty$ of many rounds the key rate is given by a simple entropic expression that only involves quantities corresponding to a single round of the protocol~\cite{devetak2005distillation}.\footnote{The entropic expression for the key rate in~\cite{devetak2005distillation} already includes information leaked to Eve during the error correction step assuming an optimal error correcting protocol. Our \cref{def:coll_attack} does not include a term corresponding to this -- instead in \cref{prot:qkd-pm} we assume that the error correction information has length at most $\lambda_{\ec}$, which we can later subtract from the length of the final key that can be generated.}
More formally, we can view a collective attack bound as a map that takes as input the statistics corresponding to a single round of the protocol and outputs a lower bound on a certain conditional entropy, which specifies how much key can safely be extracted from a state with those statistics.

\begin{definition}[Collective attack bound for \cref{prot:qkd-pm}] \label{def:coll_attack}
Fix arguments $\psi_{UQ}$, $\{N^{(v)}\}_{v \in \cV}$, $\pd$, $\rk$, and $\ev$ for \cref{prot:qkd-pm}.
Suppose that Alice and Bob run a single round (i.e.~$n=1$) of \cref{prot:qkd-pm} with these arguments up to (and including) \cref{step_pm:raw_key_gen}.\footnote{Note that for this, the other arguments in the description of \cref{prot:qkd-pm} are not used, so we do not need to specify them. The collective attack bound only depends on the protocol arguments specified in \cref{def:coll_attack}.}
For a collective attack $\cA: Q \to QE$, denote the state at the end of \cref{step_pm:raw_key_gen} as $\nu_{UVSIE}$.
Let $\nu_{UVSIEC}$ be an extension of this state, where $C = \ev(V, I, S)$.
A collective attack bound (for the choice of parameters fixed above) is a map $\ca: \mbP(\cC) \to \R$ such that for any collective attack $\cA$, the state $\nu_{UVSIEC}$ (which depends on $\cA$) satisfies 
\begin{align}
H(S | I E C)_{\nu} \geq \ca(\nu_{C}) \,. \label{eqn:def_ca_ineq}
\end{align}
\end{definition}

\subsection{Security against general attacks} \label{sec:gen_attacks}

Having introduced our framework for general prepare-and-measure protocols and collective attack bounds, we can now state the main technical result of this paper, namely that a collective attack bound implies a security statement against general attacks.
For this, we first recall the security definition for QKD, namely the notions of correctness, secrecy, and complete\-ness~\cite{renner_thesis}.
This security definition is \textit{composable}, meaning that the key generated by a protocol satisfying this definition can safely be used for other protocols~\cite{portmann2021security}.

\begin{definition}[Correctness, secrecy, and completeness] \label{def:qkd_sec}
Consider a QKD protocol in which Alice and Bob can decide whether or not to abort the protocol.
Let $\rho_{K \hat K E}$ be the final state at the end of the protocol (for a given initial state), where $K$ and $\hat K$ are Alice's and Bob's version of the final key, respectively, and $E$ contains all side information available to the adversary Eve at the end of the protocol.
The protocol is called $\eps^{\setft{cor}}$-correct, $\eps^{\setft{sec}}$-secret, and $\eps^{\setft{comp}}$-complete if the following holds: 
\begin{enumerate}
\item \emph{Correctness.} For any actions of the adversary Eve:
\begin{align*}
\pr{K \neq \hat K \wedge \setft{not abort}} \leq \eps^{\setft{cor}} \,.
\end{align*}
\item \emph{Secrecy.} For any actions of the adversary Eve:\footnote{Note that here and throughout the paper, we use the difference in trace norm, not the trace distance. The latter has an additional normalisation factor of $\frac{1}{2}$.}
\begin{align*}
\norm{\rho_{KE \wedge \Omega} - \tau_{K} \ot \rho_{E \wedge \Omega}}_1 \leq \eps^{\setft{sec}} \,,
\end{align*}
where $\tau_K$ is the maximally mixed state on system $K$, $\Omega$ is the event that the protocol does not abort, and $\rho_{\wedge \Omega} = \pr{\Omega} \rho_{|\Omega}$ is the subnormalised state conditioned on $\Omega$ (see \cref{sec:notation} for details).
\item \emph{Completeness.} For a given noise model for the protocol there exists an honest behaviour for the adversary Eve such that 
\begin{align*}
\pr{\setft{abort}} \leq \eps^{\setft{comp}} \,.
\end{align*}
\end{enumerate}
\end{definition}
Note that correctness and secrecy must hold for any behaviour of Eve (and also any noise model), while completeness is concerned with the honest implementation of the protocol.
Correctness and secrecy bound the probability of Alice and Bob receiving different or insecure keys without detecting this fact and aborting the protocol.
Completeness says that the protocol is robust against a given noise model in the sense that for this noise model, the probability of aborting the protocol is small if Eve behaves honestly.
It is common to combine the correctness and secrecy parameters and call a protocol $(\eps^{\setft{cor}} + \eps^{\setft{sec}}/2)$-secure, where the factor of $1/2$ arises because our definition of secrecy uses the difference in trace norm, not the trace distance, which has an additional factor of $1/2$.

Our main result is that \cref{prot:qkd-pm} satisfies the correctness and secrecy conditions. Formally, we show the following.
\begin{theorem} \label{thm:general_qkd_pm}
Fix any choice of arguments $n$, $\psi_{UQ}$, $\{N^{(v)}\}_{v \in \cV}$, $\pd$, $\rk$, $\ev$, $k_\ca$, $\lambda_{\ec}$, $\eps_{\kv}$, and $\eps_{\pa}$ for \cref{prot:qkd-pm}.
Let $\ca: \mbP(\cC) \to \R$ be an affine collective attack bound for this choice of arguments.
For any $\eps_s, \eps_a > 0$ and $\alpha \in (1, 3/2)$, choose a final key length $l$ that satisfies
\begin{multline}
l \leq n \, k_{\ca} - n \, \frac{\alpha-1}{2-\alpha} \, \frac{\ln(2)}{2} V^2 - \frac{g({\eps_s}) + \alpha \log(1/\eps_a)}{\alpha-1} \\
-  n \, \left( \frac{\alpha-1}{2-\alpha} \right)^2 K'(\alpha) 
- \lceil 2 \log(1/\eps_{\pa})\rceil - \lceil \log(1/\eps_{\kv}) \rceil - \lambda_{\ec} \,, \label{eqn:key_length}
\end{multline}
where $g({\eps_s})$, $V$, and $K'(\alpha)$ are defined in \cref{thm:with_testing}.
With this choice of parameters and assuming that \cref{cond:eve_seq} holds, \cref{prot:qkd-pm} is $\eps^{\setft{cor}}$-correct and $\eps^{\setft{sec}}$-secret for 
\begin{align*}
\eps^{\setft{cor}} = \eps_{\kv} \,, \qquad \eps^{\setft{sec}} = \max\{\eps_{\pa} + 4 \, \eps_s, 2\,\eps_a\} + 2 \, \eps_{\kv}\,.
\end{align*}
\end{theorem}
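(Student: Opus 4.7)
The proof splits cleanly into correctness and secrecy. Correctness is immediate from the raw key validation step: since $\hash$ is drawn from a universal hash family whose range has $\lceil \log(1/\eps_{\kv}) \rceil$ bits, for any fixed pair $S^n \neq \hat S^n$ the collision probability is at most $\eps_{\kv}$; Bob aborts on mismatch, so $\eps^{\setft{cor}}=\eps_{\kv}$. All of the remaining work is in proving secrecy, which I would approach by a standard two-stage reduction: first from the trace-norm secrecy criterion to a lower bound on the smooth min-entropy of $S^n$ given Eve's side information, and then by invoking the generalised entropy accumulation theorem (GEAT) with the affine collective attack bound $\ca$ playing the role of a min-tradeoff function.

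\textbf{Reducing secrecy to a smooth min-entropy bound.} Using the defining property of the quantum-proof strong extractor $\ext$, the secrecy distance is controlled by $H_{\min}^{\eps_s}(S^n|T)_{\rho_{\wedge\Omega}}$, where $T$ collects Eve's quantum register $E$, the public-discussion transcript $I^n$, the error-correction string $\ec$, the seed/description of $\hash$ and its value $\hash(S^n)$; if this quantity exceeds $l+\lceil 2\log(1/\eps_{\pa})\rceil$ then the extractor bound gives a contribution $\eps_{\pa}+4\eps_s$ to $\eps^{\setft{sec}}$, while the identification of $\hat C^n$ with $C^n$ on the joint event ``no abort and successful error correction'' contributes the additional $2\eps_{\kv}$. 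Two applications of a min-entropy chain rule then remove $\ec$ and $\hash(S^n)$ from the conditioning at a price equal to their lengths $\lambda_{\ec}$ and $\lceil \log(1/\eps_{\kv}) \rceil$. What remains to prove is a lower bound on $H_{\min}^{\eps_s}(S^n| I^n E C^n)_{\rho_{\wedge\Omega}}$, conditioning also on the per-round test register $C^n$; the event $\Omega$ enforces $\ca(\freq{\hat C^n})\geq k_\ca$, which on the successful-error-correction event is exactly $\ca(\freq{C^n})\geq k_\ca$.

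\textbf{Applying the GEAT.} The key step is to express one round of \cref{prot:qkd-pm} as a GEAT channel $\cM_i$ from a ``memory'' register $R_{i-1}=E'_{i-1}$ to $R_i A_i B_i$: Alice prepares $\psi_{UQ}$ producing $U_i$ and $Q_i$; Eve applies her sequential map $\cA_i:E'_{i-1}Q_i\to E'_i Q_i$ (this is where \cref{cond:eve_seq} enters, ensuring Eve really operates round-by-round); Bob measures $\{N^{(v)}\}$ to obtain $V_i$; and the classical post-processing yields $I_i=\pd(U_i,V_i)$, $S_i=\rk(U_i,I_i)$, $C_i=\ev(V_i,I_i,S_i)$. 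Taking $A_i=S_i$ and $B_i=(I_i,C_i)$, the GEAT non-signaling condition is satisfied because $R_i=E'_i$ is produced solely by $\cA_i$ from $R_{i-1}$ and $Q_i$ and is fixed before the measurement and classical processing happen; marginalising $A_i B_i$ therefore factors through a channel on $R_{i-1}$ alone. The affine bound $\ca$ is a valid min-tradeoff function because for any single-round strategy the inequality \cref{eqn:def_ca_ineq} is precisely the per-round entropy bound that GEAT demands. The ``with testing'' variant of the GEAT applied to the event $\{\ca(\freq{C^n})\geq k_\ca\}$ then yields a smooth min-entropy lower bound of $n k_\ca$ minus the second-order terms parameterised by $V$, $K'(\alpha)$, $g(\eps_s)$ and $\alpha\log(1/\eps_a)/(\alpha-1)$ appearing in \cref{eqn:key_length}.

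\textbf{Main obstacles.} The principal technical subtlety, and where I expect most of the care to be needed, is the interplay between the abort condition, which is defined via Bob's guessed register $\hat C^n$, and the GEAT bound, which must be applied to the real $C^n$ produced by Alice's raw key $S^n$; bridging these requires showing that on the complement of a failure event of probability at most $\eps_{\kv}$ the two registers agree, which is what propagates a second factor of $\eps_{\kv}$ into $\eps^{\setft{sec}}$. A second delicate point is verifying the GEAT non-signaling condition rigorously in the presence of Alice's classical but globally retained register $U^n$: because $U^n$ is generated i.i.d.\ on Alice's side and used only inside the round in which it is produced (via $\pd$, $\rk$, $\ev$), it can be absorbed into the round output rather than the memory, preserving the required Markov-type structure. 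Once these bookkeeping issues are settled, the key-length inequality \cref{eqn:key_length} arises from summing the losses contributed by the extractor, the two chain rules, the error-correction leakage, and the GEAT second-order corrections.
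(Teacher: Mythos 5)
Your proposal follows essentially the same route as the paper: correctness from the universal hash family; secrecy by reducing the trace-norm criterion to a smooth min-entropy bound via the extractor and a chain rule that removes $\ec$ and $\hash(S^n)$ at cost $\lambda_\ec+\lceil\log(1/\eps_\kv)\rceil$; handling the $\hat C^n$ versus $C^n$ mismatch through the probability-$\eps_\kv$ failure event of the hash check (whence the $2\eps_\kv$); and then applying the GEAT with the affine collective attack bound as min-tradeoff function. One bookkeeping point needs fixing, because it touches the very feature that makes the GEAT applicable here: you place Eve's register $E'_{i}$ in the GEAT \emph{memory} register $R_i$, but the GEAT only lower-bounds $\hmin^{\eps}(A^n|E_n)$ and does not condition on $R_n$, so with that identification the conclusion would not condition on Eve's quantum side information and would not imply secrecy. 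The paper instead sets the GEAT side-information register to $E_i = I^iC^iE'_i$ (the register that the GEAT, unlike the EAT, allows to be updated each round) and takes $R_i$ trivial, which is also precisely why the no-signalling condition holds trivially rather than for the reason you give. A second, smaller gloss: verifying that $\ca$ is a min-tradeoff function is not purely definitional — for each fixed round $i$, input state $\omega^{i-1}$, and attack map $\cA_i$, one must construct a ``custom'' collective attack $\cA'$ (prepare $\omega^{i-1}$, then apply $\cA_i$) whose single-round output reproduces $\tilde\cM_i(\omega^{i-1})$ so that \cref{def:coll_attack} can be invoked; and the $2\eps_a$ branch of $\eps^{\setft{sec}}$ arises from the explicit case distinction $\pr{\Omega_C}<\eps_a$ or $\pr{\Omega_g|\Omega_C}<\eps_s$, under which the secrecy bound holds trivially. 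Neither point changes the architecture of your argument, which matches the paper's.
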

We prove this theorem in \cref{sec:main_proof}.
In addition, we also show completeness; since this is much more straightforward and only uses standard techniques, we defer this to \cref{app:completeness}.

\subsection{Sample application: B92 protocol}\label{sec:b92}
We now demonstrate how to apply our framework, using the B92 protocol as an example.
The B92 protocol has no natural entanglement-based analogue\footnote{By this, we mean an equivalent entanglement-based protocol that does not require ``artificial'' constraints on the reduced state on Alice's side and still achieves the same key rate as the prepare-and-measure version of B92. We note that some works~\cite{tamaki2003unconditionally,tamaki2004unconditional} do use an entanglement-based version of B92, but this causes a significant loss in asymptotic key rate.} and therefore cannot be analysed with the original EAT.
Nonetheless, it is very simple, and therefore provides arguably the easiest example to demonstrate the application of our framework to a protocol that cannot be analysed with the EAT.
Furthermore, while there exist analytic security proofs of B92 using entropic uncertainty relations~\cite{tamaki2003unconditionally,tamaki2004unconditional}, these techniques yield key rates that are far from optimal even in the asymptotic regime.
This is in contrast to highly symmetric protocols such as BB84, where entropic uncertainty relations yield essentially tight proofs~\citep{tomamichel2012tight}.

% For the B92 protocol, we provide a full analysis (including deriving a collective attack bound and computing the key rate for a realistic noise model in \cref{fig:b92_curves}) to demonstrate all steps of our framework.
% The decoy-state BB84 protocol is the simplest decoy-state protocol and therefore well-suited for demonstrating  how to incorporate the general decoy-state idea into our framework. 
% For this example, we therefore only explain how to cast this protocol as an instance of our framework, but forego a detailed key rate analysis: the decoy-state BB84 protocol is symmetric enough so that it can be analysed using entropic uncertainty relations (see e.g.~\cite{lim2014concise}), which already yield near-optimal bounds.
% However, of course our framework can also be used for more complicated decoy-state protocols that cannot be analysed optimally with entropic uncertainty relations or other techniques, e.g.~variants of BB84 that use measurement bases that are not complementary to each other or protocols that lack any symmetry at all.
% The proofs for such protocols follow exactly along the lines of our examples here, except with a somewhat more complicated collective attack bound; we leave detailed computations for such protocols for future work.

We emphasise that the purpose of this section is to illustrate our general results with a simple example, not to derive the tightest possible key rates for a particular protocols.
We leave the analysis of more complicated protocols, where deriving the collective attack bound may be more involved, for future work.
In \cref{app:decoy}, we also sketch how to express the decoy state BB84 protocol as an instance of our framework and how to derive a collective bound for it, demonstrating that the widely-used decoy state technique also naturally fits within our framework.

We also note that very recent work~\cite{george2022finite} has analysed the performance of the EAT on entanglement-based QKD protocols (and prepare-and-measure protocols that have a natural entanglement-based analogue) and found that it provides better key rates than previous methods.
Since our GEAT-based security proof produces essentially the same key rates as the EAT in cases where both methods can be applied, this suggests that our framework will provide very good key rates also in cases where the EAT cannot be applied.

% The B92 protocol~\cite{b92} is a particularly simple QKD protocol in which Alice only sends one of \emph{two} possible states, not four as in the BB84 protocol.
% However, its security analysis is more involved than that for the BB84 protocol, and in particular no security analysis using the EAT~\cite{eat} is known.
% It therefore is a suitable (yet simple) example to demonstrate the broader applicability of the the results in this work.
We start by giving an informal description of the B92 protocol and the intuition behind it.
Then, we show how to view the B92 protocol as an instance of our general \cref{prot:qkd-pm}.
Using the technique from~\cref{sec:col_attack_bounds} to derive a collective attack bound, we can then apply~\cref{thm:general_qkd_pm} to obtain a security statement for general attacks.
To illustrate the result, we numerically compute the key rate for different choices of the number of rounds and tolerated noise level in \cref{sec:b92_key_rate}.

Each round of the B92 protocol works as follows: Alice chooses a bit $u \in \bits$ uniformly at random.
If $u = 0$, she prepares the state $\ket{\psi}_Q = \ket{0}$, whereas if $u = 1$, she prepares $\ket{\psi}_Q = \ket{+}$.\footnote{More generally, instead of $\ket{0}$ and $\ket{+}$, any two non-orthogonal states can be used. It has been observed that using states that are at a different angle to each other than $\ket{0}$ and $\ket{+}$ can be advantageous~\cite{coles2016numerical}. Since our goal is to provide an illustration, not optimise the key rate, we pick $\ket{0}$ and $\ket{+}$ for simplicity.}
She sends $\ket{\psi}_Q$ to Bob, who chooses $y \in \bits$ uniformly at random and measures the system $Q$ in the computational basis if $y = 0$ and the Hadamard basis if $y = 1$.
If he obtains outcome ``1'' (when measuring in the computational basis) or ``-'' (when measuring in the Hadamard basis), he sets $v = y \oplus 1$.
Otherwise, he sets $v = \bot$.
In the sifting step, Bob announces in which rounds he recorded $v = \bot$, and Alice sets $u = \bot$ for those rounds, too.
The bits $u$ and $v$ from all of the rounds form  the raw key.
To detect possible tampering by Eve, Alice and Bob compare their values of $u$ and $v$ on a subset of rounds.

The intuition behind this protocol is the following: the secret information that will make up the key is encoded in Alice's basis choice $u$ (where $u = 0$ corresponds to the computational and $u=1$ to the Hadamard basis).
When Bob receives the system $Q$ he tries to find out which basis the state was prepared in.
For this, he guesses a basis $y$ and measures $Q$ in this basis.
Suppose he chose $y = 0$, i.e.~the computational basis, and assume that Eve did not tamper with the system $Q$.
Then, if he obtains outcome ``1'' he concludes that Alice cannot have prepared the state $\ket{0}$ and therefore must have chosen $u = 1$.
Accordingly, he sets $v = 1 = y \oplus 1$.
If Bob obtains outcome ``0'' he cannot deduce Alice's basis choice as both the states $\ket{0}$ and $\ket{+}$ may produce outcome ``0'' when measured in the computational basis, so he sets $v = \bot$.
Likewise, if he chose $y = 1$ and obtains outcome ``-'', this  provides conclusive evidence that Alice cannot have prepared the state $\ket{+}$, so he sets $v = 0 = y \oplus 1$, whereas the outcome ``+'' is inconclusive.
If Eve tries to tamper with the system $Q$, she is likely to disturb the state as she does not know which basis it was prepared in.
Therefore, Alice and Bob will detect this tampering  when comparing their values of $u$ and $v$.

\subsubsection{B92 as an instance of {\cref{prot:qkd-pm}}} \label{sec:b92_formal}
We now give a more formal description of  the B92 protocol as an instance of \cref{prot:qkd-pm}.
As for the BB84 protocol described in \cref{sec:gen_pm_protocol}, this means specifying the arguments $\psi_{UQ}$, $\{N^{(v)}\}_{v \in\cV}$, $\pd$, $\rk$, and $\ev$.
For each round Alice chooses a bit $U_i$ uniformly at random and prepares $\ket{0}$ or $\ket{+}$ based on her choice, so
\begin{align*}
\psi_{UQ} = \frac{1}{2} ( \proj{0}_U \ot \proj{0}_Q + \proj{1}_U \ot \proj{+}_Q) \,.
\end{align*}
Bob measures in either the computational or Hadamard basis and uses the outcome to determine $V_i \in \{0, 1, \bot\}$ as described before. This measurement is described by the following POVM:
\begin{align*}
N^{(0)} = \frac{1}{2} \proj{-}\,, \; N^{(1)} = \frac{1}{2} \proj{1}, \; N^{(\bot)} = \frac{1}{2} (\proj{0} + \proj{+}) \,.
\end{align*}
During the public discussion phase, Bob informs Alice which rounds were inconclusive, i.e.~yielded outcome $\perp$.
Therefore, 
\begin{align*}
I_i = \pd(U_i, V_i) = \begin{cases}
\bot & \textnormal{if $V_i = \bot$,} \\
\top & \textnormal{otherwise.}
\end{cases}
\end{align*}
To generate her raw key $S^n$, Alice uses her bits $U_i$ and discards the rounds for which Bob's measurement outcome was inconclusive, which she knows from the value of $I_i$: 
\begin{align*}
S_i = \rk(U_i, I_i) = \begin{cases}
\bot & \textnormal{if $I_i = \bot$,} \\
U_i & \textnormal{otherwise.}
\end{cases}
\end{align*}
To generate the statistics $\hat C_i$, Bob will check whether his guess $\hat S^n$ for Alice's raw key agrees with his own raw data $V^n$.
As for the BB84 protocol described in \cref{sec:gen_pm_protocol}, Bob can only do so on a small fraction $\gamma$ of rounds because \cref{def:coll_attack} includes the classical statistics as a conditioning system.
Therefore, Bob chooses a value $T_i$ at random with $\pr{T_i = 1} = \gamma$ (the choice of $T_i$ can formally be included into $V_i$ or one can view $\ev$ as a randomised rather than deterministic function).
If $T_i = 0$, he sets $\hat C_i = \bot$, i.e.~$\ev_{T_i = 0}(V_i, I_i, \hat S_i) = \bot$.
Otherwise, he sets $\hat C_i = \ev_{T_i = 1}(V_i, I_i, \hat S_i)$ to
\begin{align}
 \begin{cases}
\texttt{fail} & \textnormal{if $\hat S_i = 0 \wedge V_i = 1$ or $\hat S_i = 1 \wedge V_i = 0$\,,} \\
\texttt{inc} & \textnormal{if $V_i = \bot$\,,}\\
\varnothing & \textnormal{else}.
\end{cases} \label{eqn:b92_c}
\end{align}
Of course, the functions $\ev_{T_i = 0}$ and $\ev_{T_i = 1}$ can be combined into  a single function $\ev$ to formally fit into the framework of \cref{prot:qkd-pm}.

\subsubsection{Collective attack bound} \label{sec:b92_col_attack}
We need to derive an affine collective bound $\ca(\nu_C) = \vec \lambda \cdot \vec \nu_C + c_{\vec \lambda}$ for the B92 protocol, where $\vec \nu_C$ denotes the probability vector of distribution $\nu_C$ as in \cref{sec:col_attack_bounds}.
For this, we use the steps and notation from \cref{sec:deriving_col_attack_bounds};
we recommend skipping this subsection on a first reading and returning to it after understanding \cref{sec:deriving_col_attack_bounds}.

In the notation of \cref{sec:deriving_col_attack_bounds}, the state $\tilde \psi_{PQ}$ is given by 
\begin{align*}
\tilde \psi_{PQ} = \frac{1}{\sqrt 2} ( \ket{0}_P \ot \ket{0}_Q + \ket{1}_P \ot \ket{+}_Q) \,.
\end{align*}
For any state $\hat \psi_{PQ}$ chosen by Eve, the statistics observed by Alice and Bob are described by 
\begin{align*}
\vec \nu_C = \tr{\vec \Gamma \hat \psi_{PQ}} \,,
\end{align*}
where $\vec \Gamma = (\Gamma_{\texttt{fail}}, \Gamma_{\texttt{inc}}, \Gamma_\varnothing, \Gamma_\bot)$ with
\begin{align*}
\Gamma_{\texttt{fail}} &= \gamma (\proj{0}_P \ot N^{(1)}_Q + \proj{1}_P \ot N^{(0)}_Q) \,,\\ 
\Gamma_{\texttt{inc}} &= \gamma \1_P \ot N^{(\bot)}_Q \,,\\
\Gamma_\varnothing &= \gamma (\1 - \Gamma_{\texttt{fail}} - \Gamma_{\texttt{inc}}) \,,\\
\Gamma_\bot &= (1 - \gamma) \1_P \ot \1_Q \,,
\end{align*}
and $\tr{\vec \Gamma \hat \psi_{PQ}}$ is shorthand for the vector of the traces with the individual elements of $\vec \Gamma$.
We can now directly apply the method from \cref{sec:deriving_col_attack_bounds} to find a collective attack bound $\ca(\nu_C) = \vec \lambda \cdot \vec \nu_C + c_{\vec \lambda}$: we can heuristically choose a $\vec \lambda$ and then determine $c_{\vec \lambda}$ by solving the convex optimisation problem from \cref{eqn:c_opt_psi} using the package Matlab \texttt{CVXQUAD}~\cite{fawzi2019semidefinite}.\footnote{One can pick $\vec \lambda$ by any numerical optimisation technique such as Matlab's $\texttt{fminsearch}$. Note that since $\vec \lambda$ can be chosen heuristically, it is not an issue if such an optimisation method does not have a convergence guarantee. In contrast, to determine $c_{\vec \lambda}$ one must use an optimisation method that guarantees a lower bound in order to ensure that the collective attack bound is valid. This is why it is important that $c_{\vec \lambda}$ be determined via a convex optimisation problem for which one can certify the solution by duality.}
For our numerical implementation, we employ additional simplifications to the optimisation problem from \cref{eqn:c_opt_psi} using the steps described in~\cref{app:b92_simpler}.
This helps with numerical performance, but is not strictly necessary.

\subsubsection{Key rate} \label{sec:b92_key_rate}

\begin{figure}[t!]
\centering
\begin{tikzpicture}
	\begin{axis}[
		height=6cm,
		width=8.3cm,
		xlabel=Depolarising probability $p$ in $\%$,
		ylabel=Key rate,
		xmin=0,
		xmax=0.095,
		ymax=0.25,
		ymin=0,
	     xtick={0,0.03,0.06,0.09,0.12,0.15},
	     xticklabels={0, 3, 6, 9, 12, 15},
          ytick={0,0.05,0.1,0.15,0.2,0.25},
          yticklabels={0,0.05,0.1,0.15,0.2,0.25},
          reverse legend,
		legend style={legend cell align=left,font=\footnotesize} 
	]
	
\input{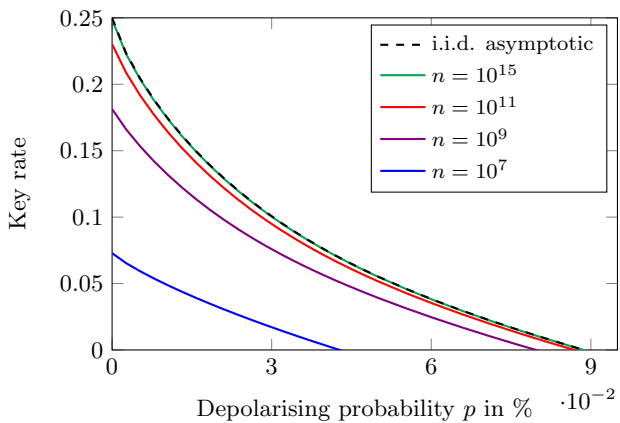}
\end{axis}
\end{tikzpicture}
\caption{Key rates for the B92 protocol as a function of the depolarising probability $p$ for $\eps^{\setft{cor}} = 5\cdot10^{-11}$, $\eps^{\setft{sec}} = 10^{-9}$, and $\eps^{\setft{comp}}=10^{-2}$.
The dashed line shows the key rate in the i.i.d.~asymptotic setting, i.e.~assuming that Eve behaves the same in each round and infinitely many rounds are executed.
We see that as the number $n$ of rounds in the protocol increases, the finite-size key rates against general attacks approach the i.i.d.~asymptotic rate.}
\label{fig:b92_curves}
\end{figure}

As our noise model for an honest implementation, we consider the depolarising channel with depolarising probability $p$, i.e.~the channel that maps $\rho \mapsto (1 - p) \rho + p \tau$, where $\tau$ is the maximally mixed state.
We determine the key rate as a function of $p$, i.e.~we determine the amount of key that can safely be generated from \emph{any potentially dishonest} implementation that produces the same statistics as the honest implementation with noise level $p$.
To this end, for every value of $p$ we first determine the statistics produced by an honest implementation with that noise level.
We then choose a collective attack bound and parameters for \cref{thm:general_qkd_pm} that ensure that the protocol is $\eps^{\setft{cor}}$-correct, $\eps^{\setft{sec}}$-secret, and $\eps^{\setft{comp}}$-complete for that noise level and $\eps^{\setft{cor}} = 5\cdot10^{-11}$, $\eps^{\setft{sec}} = 10^{-9}$, and $\eps^{\setft{comp}}=10^{-2}$.
Finally, we choose the key length to be the largest integer $l$ that satisfies the condition in \cref{eqn:key_length}.
We provide the choice of parameters in detail in \cref{sec:b92_params} and plot the resulting key rate in \cref{fig:b92_curves} for different numbers of rounds $n$.
We again note that the choice of parameters here is largely arbitrary and not optimised as the purpose of this example is only to illustrate the use of our general framework.

\section{Discussion}
We have introduced a proof technique for analysing the security of QKD protocols in the finite-size regime against general attacks.
This technique is best understood as a general procedure for converting a security proof in the i.i.d.~asymptotic setting into a finite-size security proof against general attacks.
To apply our technique, one can express a protocol of interest as an instance of our template \cref{prot:qkd-pm}, derive a collective attack bound (either using the general numerical technique described in \cref{sec:col_attack_bounds} or by reusing an existing analysis in the i.i.d.~asymptotic setting), and apply our \cref{thm:general_qkd_pm} to obtain finite-size key rates against general attacks.
Unlike previous techniques, our method can be applied directly to prepare-and-measure protocols and does not depend on the dimension of the underlying Hilbert space, allowing for a simple analysis of photonic prepare-and-measure protocols.

While we have provided a simple illustrative example of applying our framework to the well-known B92 protocol (\cref{sec:b92}), which is not amenable to treatment with the EAT, and sketched the analysis of the BB84 decoy-state protocol (\cref{app:decoy}), we leave it for future work to analyse more practical protocols and optimise the bounds one can obtain for those protocols.
This is especially relevant given that commercial QKD systems may become increasingly prevalent in the near future.
In particular, it would be interesting to see whether our framework can be used to prove the security of the differential phase-shift~\cite{dps_protocol} and coherent one-way~\cite{stucki2005fast} QKD protocols.
These protocols (and related ones using similar ideas) are relatively practical to implement, but notoriously hard to analyse.

% Furthermore, as mentioned above, the GEAT is also a useful tool for device-independent cryptography.
% We restrict our attention to device-dependent protocols in this work, but note that the GEAT could be used to address open problems in device-independent cryptography, in particular in scenarios with multiple mutually distrustful parties (see~\cite{gen_eat} for an example).

\section{Methods}

\subsection{Notation} \label{sec:notation}
The set of states for a quantum system $A$ (with associated Hilbert space $\H_A$) is given by $\states(A) = \{\rho \in \pos(A) \, | \, \tr{\rho} = 1\}$, where $\pos(A)$ is the set of positive operators on $\H_A$. 
If $A$ is a quantum system and $X$ is a classical system with alphabet $\cX$, we call $\rho \in S(XA)$ a cq-state and can expand it as 
$\rho_{XA} = \sum_{x \in \cX} \proj{x} \ot \rho_{A, x}$
for subnormalised $\rho_{A, x} \in \pos(A)$. 
For $\Omega \subset \cX$, we define the partial and conditional states
\begin{align*}
\rho_{XA\wedge\Omega} = \sum_{x \in \Omega} \proj{x} \ot \rho_{A, x} \tand \rho_{XA|\Omega} = \frac{1}{\prs{\rho}{\Omega}} \rho_{XA\wedge\Omega} \,,
\end{align*}
where $\prs{\rho}{\Omega} \deq \tr{\rho_{XA\wedge\Omega}}$.
If $\Omega = \{x\}$, we also write $\rho_{XA|x}$ for $\rho_{XA|\Omega}$.
The set of quantum channels from system $A$ to $A'$ is denoted as $\cptp(A, A')$.
The trace norm (sum of the singular values) of an operator $L$ on $\cH_A$ is denoted as $\norm{L}_1$.

We will deal with two different entropies, the von Neumann entropy and the min-entropy, which are defined as follows.
Let $\rho_{AB} \in \states(AB)$ be a quantum state.
Then the conditional von Neumann entropy of  $A$ conditioned on $B$ is given by 
\begin{align*}
H(A|B)_\rho = - \tr{\rho_{AB} \log \rho_{AB}} + \tr{\rho_{B} \log \rho_{B}} \,.
\end{align*}
For $\eps \in [0,1]$, the $\eps$-smoothed min-entropy of $A$ conditioned on $B$ is
\begin{align*}
    H_{\min}^\eps(A|B)_{\rho} = - \log \inf_{\tilde \rho_{AB}} \inf_{\sigma_{B} \in \states(B)} \norm{\sigma_B^{-\frac{1}{2}} \tilde \rho_{AB} \sigma_B^{-\frac{1}{2}} }_{\infty} \, ,
\end{align*}
where $\norm{\cdot}_{\infty}$ denotes the spectral norm and the first infimum is taken over all states $\tilde \rho_{AB} \in \cB_{\eps}(\rho_{AB})$ in the $\eps$-ball around $\rho_{AB}$ (in terms of the purified distance~\cite{tomamichel2015quantum}).

\subsection{Universal hashing and randomness extraction}

To check that Alice's and Bob's keys are the same, our general QKD protocol will make use of a universal hash family, and to extract a secure key from Alice's and Bob's raw data we will use a randomness extractor.
Here, we briefly define what these primitives achieve.
We refer to~\cite{renner_thesis} for a more detailed exposition and explanation of their construction.

\begin{definition}[Universal hash family] \label{def:univ_hash}
Let $M$ be a set.
A family $\cF$ of functions from $M$ to $\bits^l$ with a probability distribution $P_{\cF}$ over $\cF$ is called a universal hash family if for any $x \neq x' \in M$, $\setft{Pr}_{f}[f(x) = f(x')] \leq 2^{-l}$.
\end{definition}

\begin{definition}[Quantum-proof strong extractor \cite{renner_thesis, konig2011sampling, de2012trevisan}] \label{def:extractor}
A function $\ext: \bits^m \times \bits^d \to \bits^l$ is a quantum-proof strong $(k, \eps_{\ext})$-extractor if for any $\rho_{SE} \in \pos(SE)$ with $\tr{\rho} \leq 1$ (and $S$ classical with dimension $2^m$) for which $\hmin(S|E)_{\rho} \geq k$, we have
\begin{align*}
\norm{\ext(\rho_{SE} \otimes \tau_D) - \tau_K \ot \rho_E \ot \tau_D}_1 \leq \eps_{\ext} \,,
\end{align*}
where $\tau_D$ and $\tau_K$ are maximally mixed states of dimension $2^d$ and $2^l$, respectively, and the map $\ext$ acts on the classical systems $S$ and $D$.
The input on system $D$ is called the \emph{seed} of the extractor.
\end{definition}

This definition of extractors makes use of the non-smoothed min-entropy $\hmin(S|E)_{\rho}$.
It is straightforward to modify this condition so that it only requires a lower bound on the smooth min-entropy:
if $\ext$ is a quantum-proof strong $(k, \eps_{\ext})$-extractor as in \cref{def:extractor} and $\rho_{SE}$ satisfies $\hmin^\eps(S|E)_{\rho} \geq k$, then
\begin{align}
\norm{\ext(\rho_{SE} \otimes \tau_D) - \tau_L \ot \rho_E \ot \tau_D}_1 \leq \eps_{\ext} + 4 \eps\,. \label{eqn:smoothed_extractor}
\end{align}
To see that this is the case, note that $\hmin^\eps(S|E)_{\rho} \geq k$ means  that there exists a $\rho'$ within $\eps$ purified distance of $\rho$ for which $\hmin(S|E)_{\rho'} \geq k$.
By the relation between purified distance and trace distance~\cite{tomamichel2015quantum}, we have $\norm{\rho - \rho'}_1 \leq 2 \eps$.
Then, \cref{eqn:smoothed_extractor} follows from the triangle inequality and because applying the map $\ext$ cannot increase the trace distance.

For the purposes of QKD, a simple construction based on two-universal hashing~\cite{renner_thesis} provides sufficiently good parameters.
We also note that more involved constructions exist that require shorter seeds, but this if typically not a concern for QKD applications (see e.g.~\cite{de2012trevisan} for a very efficient example using Trevisan's extractor).
\begin{lemma}[{\cite{renner_thesis}}]
There exist quantum-proof strong $(k, \eps_{\ext})$-extractors $\ext: \bits^m \times \bits^d \to \bits^l$ for $d = m$ and $l \leq k - 2 \log(1/\eps_{\ext})$.
\end{lemma}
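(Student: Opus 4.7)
The plan is to realise the extractor via two-universal hashing and appeal to the quantum leftover hashing lemma. First, I would construct an explicit two-universal hash family $\cF$ from $\bits^m$ to $\bits^l$ whose seed space is exactly $\bits^m$, so that $d = m$ as required. Identifying $\bits^m$ with the finite field $\mathrm{GF}(2^m)$ and interpreting the seed $\mu \in \bits^m$ as a field element, set $f_\mu(s)$ to be the first $l$ bits of the product $\mu \cdot s$ computed in $\mathrm{GF}(2^m)$. Two-universality follows from the fact that for any fixed $s \neq s'$, the map $\mu \mapsto \mu \cdot (s - s')$ is a bijection on $\mathrm{GF}(2^m)$, so $\mathrm{Pr}_\mu[f_\mu(s) = f_\mu(s')] = 2^{-l}$.

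Next I would define $\ext(s, \mu) \deq f_\mu(s)$ and verify the strong-extractor condition of \cref{def:extractor}. Because the seed register $D$ is classical with uniform distribution, $\ext(\rho_{SE} \otimes \tau_D)$ is block-diagonal in $\mu$, and the trace-norm distance decomposes as
\begin{align*}
\bigl\| \ext(\rho_{SE} \otimes \tau_D) - \tau_K \otimes \rho_E \otimes \tau_D \bigr\|_1 = \E_{\mu} \bigl\| \rho_{f_\mu(S) E} - \tau_K \otimes \rho_E \bigr\|_1,
\end{align*}
where $\mu$ is drawn uniformly from $\bits^m$. Hence it suffices to bound the expected single-seed distance.

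The key ingredient is the quantum leftover hashing lemma~\cite{renner_thesis}, which states that for any cq-state $\rho_{SE}$ with $\mathrm{tr}[\rho] \leq 1$ and any two-universal family $\cF$ of hash functions into $\bits^l$,
\begin{align*}
\E_{f \leftarrow \cF} \bigl\| \rho_{f(S) E} - \tau_K \otimes \rho_E \bigr\|_1 \leq 2^{-\tfrac{1}{2}(\hmin(S|E)_\rho - l)}.
\end{align*}
Plugging in $\hmin(S|E)_\rho \geq k$ and the constraint $l \leq k - 2 \log(1/\eps_{\ext})$ gives $(k-l)/2 \geq \log(1/\eps_{\ext})$, hence $2^{-(k-l)/2} \leq \eps_{\ext}$, matching the required bound.

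The proof involves no serious obstacle: the only points demanding care are (i) checking that the leftover hashing lemma applies in the subnormalised regime, since \cref{def:extractor} only requires $\mathrm{tr}[\rho] \leq 1$, and (ii) tracking the factor-of-two convention, since the statement is phrased in the trace norm rather than the trace distance. Both are covered by the formulation in~\cite{renner_thesis}, so the argument is essentially an assembly of a standard $\mathrm{GF}(2^m)$-multiplication hash family with the quoted quantum leftover hashing lemma.
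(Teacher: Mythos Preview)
Your proposal is correct and follows exactly the approach the paper points to: the paper does not give its own proof of this lemma but cites \cite{renner_thesis} and remarks that ``a simple construction based on two-universal hashing provides sufficiently good parameters,'' which is precisely the $\mathrm{GF}(2^m)$-multiplication family plus the quantum leftover hashing lemma that you outline. There is nothing to add.
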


\subsection{Generalised entropy accumulation} \label{sec:geat_prelim}

In this section, we introduce the GEAT from Ref.~\cite{gen_eat}.
Most of this section is taken directly from \cite{gen_eat} and we refer to the introduction of that paper for a more detailed description of the setting and how it compares to the EAT~\cite{eat}.
Consider a sequence of channels $\cM_i \in \cptp(R_{i-1} E_{i-1}, C_i A_i R_i E_i)$ for $i \in \{1, \dots, n\}$, where $C_i$ are classical systems with common alphabet $\cC$.
In the context of cryptographic protocols, one should think of $E_i$ as Eve's side information after the $i$-th round, $R_i$ as some internal system of a device, $A_i$ as the protocol's output in the $i$-th round, and $C_i$ as classical statistics that determine whether the protocol aborts (e.g.~by checking the number of rounds on which $A_i$ does not satisfy a certain property).
For all results in this paper, $R_i$ can be chosen to be trivial.
However, for (semi-)device-independent applications, the systems $R_i$ are important because they can be used to describe the internal memory of the untrusted devices.
As this is an interesting direction for future work, we state the theorem in full generality here.

We require that these channels $\cM_i$ satisfy the following condition: defining $\cM'_i  = \setft{Tr}_{C_i} \circ \cM_i$ (where $\setft{Tr}_{C_i}$ is the partial trace over system $C_i$ and $\circ$ is the composition of channels), there exists a channel $\cT \in \cptp(A^n E_n, C^n A^n E_n)$ such that $\cM_n \circ \dots \circ \cM_1 = \cT \circ \cM'_n \circ \dots \circ \cM'_1$ and $\cT$ has the form
\begin{multline}
\cT(\omega_{A^n E_n}) = \sum_{y \in \cY , z \in \cZ} (\Pi_{A^n}^{(y)} \otimes \Pi_{E_n}^{(z)}) \omega_{A^n E_n} (\Pi_{A^n}^{(y)} \otimes \Pi_{E_n}^{(z)}) \\ \otimes \proj{r(y,z)}_{C^n} \,, \label{eqn:measurement_condition}
\end{multline}
where $\{\Pi_{A^n}^{(y)}\}$ and $\{\Pi_{E_n}^{(z)}\}$ are families of mutually orthogonal projectors on $A_i$ and $E_i$, and  $r : \cY \times \cZ  \to \cC$ is a deterministic function.  
Intuitively, this condition says that the classical statistics can be reconstructed ``in a projective way'' from systems $A^n$ and $E_n$ at the end of the protocol.
In particular, this requirement is always satisfied if the statistics are computed from classical information contained in $A^n$ and $E_n$, which is the case for the applications in this paper.
We note that the statistics are still generated in a round-by-round manner; \cref{eqn:measurement_condition} merely asserts that they could be \textit{reconstructed} from the final state.

Let $\mbP$ be the set of probability distributions on the alphabet $\cC$ of $C_i$, and let $\tilde E_{i-1}$ be a system isomorphic to $R_{i-1} E_{i-1}$.
For any $q \in \mbP$ we define the set of states 
\begin{multline}
\label{eqn:def_sigma}
  \Sigma_i(q) = \bigl\{\nu_{C_i A_i R_i E_i \tilde E_{i-1}} = \cM_i(\omega_{R_{i-1} E_{i-1} \tilde E_{i-1}}) \,|\, \\ \omega \in \states(R_{i-1}E_{i-1}\tilde E_{i-1}) \text{ and } \nu_{C_i} = q \bigr\}  \ ,
\end{multline}
where $\nu_{C_i}$ denotes the probability distribution over $\cC$ with the probabilities given by $\pr{c} = \bra{c} \nu_{C_i} \ket{c}$. In other words, $\Sigma_i(q)$ is the set of states that can be produced at the output of the channel $\mathcal{M}_i$ and whose reduced state on $C_i$ is equal to the probability distribution $q$.
\begin{definition} \label{def:tradeoff}
A  function $f: \mbP \to \R$ is called a \emph{min-tradeoff function} for $\{\cM_i\}$ if it satisfies 
\begin{align*}
f(q) \leq \min_{\nu \in \Sigma_i(q)} H(A_i|E_i \tilde E_{i-1})_{\nu} \quad \forall i = 1, \dots, n\, .
\end{align*}
Note that if $\Sigma_i(q) = \emptyset$, then $f(q)$ can be chosen arbitrarily.
\end{definition}
Our result will depend on some simple properties of the tradeoff function, namely the maximum and minimum of $f$, the minimum of $f$ over valid distributions, and the maximum variance of $f$:
\begin{align*}
\Max{f} &\deq \max_{q \in \mathbb{P}} f(q) \,,\\
\Min{f} &\deq \min_{q \in \mathbb{P}} f(q) \,,\\
\MinSigma{f} &\deq \min_{q : \Sigma(q) \neq \emptyset} f(q) \,,\\
\Var{f} &\deq \max_{ q : \Sigma(q) \neq \emptyset} \sum_{x \in \cC} q(x) f(\delta_{x})^2 - \left(\sum_{x \in \cC} q(x) f(\delta_x) \right)^2 \,, \numberthis \label{eqn:def_var_f}
\end{align*}
where $\Sigma(q) = \bigcup_i \Sigma_i(q)$ and $\delta_x$ is the distribution with all the weight on element $x$.
We write $\freq{C^n}$ for the distribution on $\cC$ defined by $\freq{C^n}(c) = \frac{|\{i \in \{1,\dots,n\} : C_i = c\}|}{n}$.  
We also recall that in this context, an event $\Omega$ is defined by a subset of $\cC^n$, and for a state $\rho_{C^n A^n E_n R_n}$ we write $\prs{\rho}{\Omega} = \sum_{c^n \in \Omega}\tr{\rho_{A_1^n E_n R_n, c^n}}$ for the probability of the event $\Omega$ and
\begin{align*}
  \rho_{C^n A^n E_n R_n | \Omega} = \frac{1}{\prs{\rho}{\Omega}} \sum_{c^n \in \Omega} \proj{c^n}_{C^n} \otimes \rho_{A^n E_n R_n, c^n}
\end{align*}
for the state conditioned on~$\Omega$.
With this, we can finally state the GEAT of~\cite{gen_eat}.

\begin{theorem}[{GEAT~\cite{gen_eat}}]\label{thm:with_testing}
Consider a sequence of channels $\cM_i \in \cptp(R_{i-1}E_{i-1},\allowbreak C_i A_i R_i E_i)$ for $i \in \{1, \dots, n\}$, where $C_i$ are classical systems with common alphabet $\cC$ and the sequence $\{\cM_i\}$ satisfies \cref{eqn:measurement_condition} and the following \emph{no-signalling condition}:\footnote{In the special case relevant for prepare-and-measure QKD, where the systems $R_i$ are trivial, this no-signalling condition is trivially satisfied.} for each $\cM_i$, there exists a channel $\cR_i \in \cptp(E_{i-1}, E_i)$ such that $\setft{Tr}_{A_i R_i C_i} \circ \cM_i = \cR_i \circ \setft{Tr}_{R_{i-1}}$.
Let $\eps \in (0,1)$, $\alpha \in (1, 3/2)$, $\Omega \subset \cC^n$, $\rho_{R_0 E_0} \in \states(R_0 E_0)$, and $f$ be an affine min-tradeoff function with $h = \min_{c^n \in \Omega} f(\freq{c^n})$. Then,
\begin{multline}
\hmin^\eps(A^n | E_n)_{\cM_n \circ \dots \circ \cM_1(\rho_{R_0 E_0})_{|\Omega}} 
\geq n \, h - n \, \frac{\alpha-1}{2-\alpha} \, \frac{\ln(2)}{2} V^2 \\
- \frac{g(\eps) + \alpha \log(1/\prs{\rho^n}{\Omega})}{\alpha-1} -  n \, \left( \frac{\alpha-1}{2-\alpha} \right)^2 K'(\alpha)\, ,  \label{eqn:alpha_to_choose}
\end{multline}
where $\pr{\Omega}$ is the probability of observing event $\Omega$, and
\begin{align*}
g(\eps) &= - \log(1 - \sqrt{1-\eps^2}) \leq \log(2/\eps^2) \,,\\
V &= \log (2d_A^2+1) + \sqrt{2 + \Var{f}} \,,\\
K'(\alpha) &= \begin{multlined}[t]
    \frac{(2-\alpha)^3}{6 (3-2\,\alpha)^3 \ln 2} \, 2^{\frac{\alpha-1}{2-\alpha}(2\log d_{A}  + \Max{f} - \MinSigma{f})} \\ \ln^3\left( 2^{2\log d_{A} + \Max{f} - \MinSigma{f}} + e^2 \right) \,,
\end{multlined}
\end{align*}
with $d_A = \max_i \dim(A_i)$.
\end{theorem}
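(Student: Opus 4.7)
The plan is to prove the GEAT by a five-step reduction: (i) replace the smooth min-entropy by a sandwiched Rényi entropy, (ii) pass from the conditional state on $\Omega$ to the unnormalised state and absorb the probability loss, (iii) insert the classical statistics $C^n$ into the conditioning system using the measurement structure of \cref{eqn:measurement_condition}, (iv) iteratively apply a new chain rule for sandwiched Rényi divergences that only requires the no-signalling condition, and (v) lower bound each per-round contribution via the min-tradeoff function and a Taylor expansion of $\hau$ around $\alpha = 1$.

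\textbf{Steps (i)--(iii): reduction to a Rényi bound on the unconditional state.} The smooth min-entropy is lower bounded by the ``up'' sandwiched Rényi entropy via the standard inequality $\hmin^\eps(A|B)_\sigma \ge \hau(A|B)_\sigma - g(\eps)/(\alpha-1)$, which already produces the $g(\eps)/(\alpha-1)$ term. Passing from $\sigma_{|\Omega}$ to the subnormalised $\sigma_{\wedge \Omega}$ costs $\tfrac{\alpha}{\alpha-1}\log(1/\prs{\rho^n}{\Omega})$, accounting for the $\alpha \log(1/\prs{\rho^n}{\Omega})/(\alpha-1)$ term. Using the channel $\cT$ from \cref{eqn:measurement_condition}, which reconstructs $C^n$ from $A^n E_n$ by a pair of commuting orthogonal projective measurements, adjoining $C^n$ to the conditioning system does not change $\hau$, so the event $\Omega$ can be enforced by an operator acting solely on $C^n$.

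\textbf{Step (iv): the iterated chain rule.} This is the technical heart of the GEAT and where I expect the main obstacle. The goal is a chain rule of the schematic form
\begin{align*}
\hau(A_1^i | E_i R_i)_{\cM_i(\omega)} &\ge \hau(A_1^{i-1} | E_{i-1} R_{i-1})_\omega \\
&\quad + \inf_{\omega'}\; \hau(A_i | E_i R_i \tilde E_{i-1})_{\cM_i(\omega')},
\end{align*}
valid for every channel $\cM_i$ satisfying the no-signalling condition $\setft{Tr}_{A_i R_i C_i} \circ \cM_i = \cR_i \circ \setft{Tr}_{R_{i-1}}$. Unlike the EAT analogue, this must not rely on a Markov condition between rounds. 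The approach I would take is to rewrite $\hau$ as a sandwiched Rényi divergence against a carefully chosen reference state built from $\cR_i$, then use a ``differential'' data-processing inequality for a regularised divergence of the form $\drega{\cdot}{\cdot}$, together with a variational expression that isolates the per-round contribution. Iterating $n$ times yields $\hau(A^n | E_n R_n) \ge \sum_{i=1}^n \inf_{\omega'_i} \hau(A_i | E_i R_i \tilde E_{i-1})_{\cM_i(\omega'_i)}$, where each infimum runs over inputs producing the prescribed marginal on $C_i$.

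\textbf{Step (v): single-round bound and assembly.} For each round, I would Taylor-expand $\hau(A_i | E_i R_i \tilde E_{i-1})$ around $\alpha = 1$ with explicit remainder. The zeroth-order term is the conditional von Neumann entropy, which the min-tradeoff function lower bounds by $f(\nu_{C_i})$ on $\Sigma_i(\nu_{C_i})$ (\cref{def:tradeoff}). The second-order term is dominated by a variance controlled by $\Var{f}$ and a $\log(2 d_A^2 + 1)$ operator-norm contribution, giving the $n\,\tfrac{\alpha-1}{2-\alpha}\tfrac{\ln 2}{2} V^2$ term. The higher-order remainder is bounded by $n\,(\tfrac{\alpha-1}{2-\alpha})^2 K'(\alpha)$ using coarse operator bounds involving $d_A$, $\Max{f}$, and $\MinSigma{f}$. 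Since $f$ is affine, $\sum_{i=1}^n f(\nu_{C_i}) = n\,f(\tfrac{1}{n}\sum_i \nu_{C_i}) = n\,f(\freq{C^n})$ after averaging over realisations of $C^n$, and restriction to $\Omega$ replaces this by $n h$ by definition of $h$. Combining all five steps yields the claimed bound.
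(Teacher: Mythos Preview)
The paper does not prove \cref{thm:with_testing}; it is quoted verbatim from~\cite{gen_eat} as an input to the QKD security proof, and the present paper's own contribution (\cref{thm:general_qkd_pm}, \cref{claim:eat_qkd_pm}) consists in constructing the channels $\tilde\cM_i$ and verifying that a collective attack bound is a min-tradeoff function so that \cref{thm:with_testing} can be \emph{invoked}. So there is no ``paper's own proof'' to compare against here.

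That said, your outline is broadly the strategy of~\cite{gen_eat}, and steps (i)--(iii) are correct and standard. Two places where your sketch would not go through as written: first, in step~(iv) you state the chain rule with $R_i$ in the conditioning system on both sides, but the theorem bounds $\hmin^\eps(A^n|E_n)$ without $R_n$; the actual chain rule in~\cite{gen_eat} is for the channel divergence and uses the no-signalling condition precisely to remove $R_{i-1}$ from the conditioning, so the role of $\cR_i$ is more delicate than ``a reference state built from $\cR_i$''. Second, in step~(v) the affine function $f$ does not enter by ``averaging over realisations of $C^n$''. Rather, one augments each round's R\'enyi quantity by an additive classical term $f(\delta_{C_i})$, proves the chain rule for this augmented quantity, and then the affinity of $f$ makes the sum $\sum_i f(\delta_{C_i}) = n f(\freq{C^n})$ deterministically on each branch of $C^n$; conditioning on $\Omega$ then gives $\geq nh$. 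Your ``averaging'' phrasing suggests an expectation over $C^n$, which would not interact correctly with the worst-case nature of $\hmin^\eps$.
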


We briefly comment on the main differences between the GEAT as stated above and the EAT from~\cite{eat}.
The GEAT deals with a sequence of channels $\cM_i \in \cptp(R_{i-1}E_{i-1},\allowbreak C_i A_i R_i E_i)$ that can update both the internal memory register $R_i$ and the side information register $E_i$ (subject to the no-signalling condition), i.e.~change these states to e.g.~incorporate additional side information obtained in the protocol or account for measurements performed in response to the user's input.
In contrast, the EAT does not allow the side information register to be updated.
More formally, the EAT deals with channels $\cM_i' \in \cptp(R_{i-1},\allowbreak C_i A_i R_i I_i)$, where $I_i$ is side information produced in each round that cannot be updated in the future.
The final side information at the end of such a process is $E I^n$, where $E$ can be any additional side information from the initial state of the process that was never updated during the process.
If the side information registers $I_i$ satisfy the Markov condition $A^{i-1} \leftrightarrow  I^{i-1} E \leftrightarrow I_i$ (see~\cite{eat} for a more detailed explanation), then the EAT gives a lower bound on $\hmin^\eps(A^n | I^n E)_{\cM_n' \circ \dots \circ \cM_1'(\rho_{R_0})_{|\Omega}}$ similar to the one in \cref{thm:with_testing}.

We can now see at a high level why the EAT cannot be used to deal with prepare-and-measure protocols directly: in a prepare-and-measure protocol, the adversary Eve intercepts the quantum state sent from Alice to Bob in each round and updates her side information based on that.
Therefore, any technique used to deal with such protocols must allow for the side information to be updated like in the GEAT; the more restrictive scenario considered in the EAT does not capture this kind of protocol.

We also note that the GEAT is strictly more general than the EAT (see~\cite[Section 1]{gen_eat} for a proof).
Hence, any application that can be treated with the EAT can also be treated with the GEAT (up to some very minor loss in second-order parameters), and the resulting proofs are often much more straightforward; see~\cite[Section 5.2]{gen_eat} for an example.

\subsection{Proof of \cref{thm:general_qkd_pm}} \label{sec:main_proof}

In this section, we prove our main result, \cref{thm:general_qkd_pm}, i.e.~we show that \cref{prot:qkd-pm} is correct and secret.

\begin{proof}[Proof of \cref{thm:general_qkd_pm}]
For the correctness statement,  we need to show that $\pr{K \neq \hat K \wedge \setft{not abort}} \leq \eps_\kv$.
To see that this is the case, we note that due to the check in~\cref{step_pm:rkv}, the protocol not aborting implies that $\hash(S^n) = \hash(\hat S^n)$.
Furthermore, from~\cref{step_pm:pa} we see that $K \neq \hat K$ implies that $S^n \neq \hat S^n$.
Therefore, it suffices to show that 
\begin{align*}
\pr{S^n \neq \hat S^n \wedge \hash(S^n) = \hash(\hat S^n)} \leq \eps_{\kv} \,.
\end{align*}
Since Alice chooses the function $\hash$ at random from a universal hash family, this follows directly from~\cref{def:univ_hash} and completes the correctness proof.

The remainder of the proof will be concerned with the secrecy condition.
As explained in \cref{sec:eve_attack}, assuming~\cref{cond:eve_seq} we can model a general attack by a sequence of channels  
\begin{align*}
\cA_i: E'_{i-1} Q_i \to E'_i Q_i \,.
\end{align*}
Alice, Bob, and Eve's joint final state at the end of the protocol therefore contains systems 
\begin{align*}
U^nV^nI^nS^n\hat S^n \hat C^n K \hat K E'_n E' \,.
\end{align*}
Here, $E'_n$ is Eve's system after using the maps $\cA_1, \dots, \cA_n$, $E'$ stores the additional classical information published after~\cref{step_pm:ec}, i.e., the error correction information $\ec$, a description of the hash function $\hash$, the hash value $\hash(S^n)$, and the seed $\mu$, and the other systems are labelled as in \cref{prot:qkd-pm}.
This means that Eve's full side information is given by $I^n E'_n E'$.
Throughout the proof, we will denote the final state at the end of the protocol by $\rho_{U^nV^nI^nS^n\hat S^n C^n K \hat K E'_n E'}$.

By \cref{def:qkd_sec}, we need to show that
\begin{multline}
\norm{\rho_{K I^n E'_n E' \wedge \Omega} - \tau_{K} \ot \rho_{I^n E'_n E' \wedge \Omega}}_1 \\
\leq \max\{\eps_{\pa} + 4 \, \eps_s,  2\, \eps_a\} + 2 \, \eps_{\kv} \,, \label{eqn:secrecy_cond}
\end{multline}
where $\Omega$ is the event that the protocol does not abort and $\tau_{K}$ is the maximally mixed state on system $K$ of dimension $|K| = 2^l$.
Since the protocol's final state arises by application of a strong extractor in \cref{step_pm:pa}, we can reduce \cref{eqn:secrecy_cond} to an entropic statement.
This step requires careful technical treatment because the statistical check in~\cref{step_pm:stat} uses the systems $\hat C^n$, which are  computed from $\hat S^n$. However, $\hat S^n$ is Bob's guess for Alice's string $S^n$ and depends on the \emph{global} error correction information $\ec$, i.e., it cannot be generated in a round-by-round manner as required for the GEAT.
The intuition for circumventing this issue is as follows: 
if $\hat S^n \neq S^n$, then the protocol is likely to abort anyway because of~\cref{step_pm:rkv}; on the other hand, if $\hat S^n = S^n$, then we can replace $\hat S^n$ by $S^n$, and the latter is generated in a round-by-round manner.
Following this intuition, we can show that the entropy bound in \cref{claim:eat_qkd_pm} implies \cref{thm:general_qkd_pm}.
We give a formal proof of this step in \cref{app:entr_reduction} and continue here with proving the required entropy bound.
We also note that for protocols that include a separate parameter estimation step rather than using Bob's guess for Alice's raw key, \cref{claim:eat_qkd_pm} implies \cref{thm:general_qkd_pm} almost immediately.
\end{proof}

\begin{claim}\label{claim:eat_qkd_pm}
Let $\Omega_{C}$ be the event that $\ca(\freq{C^n}) \geq k_{\ca}$ (i.e.~the statistical check (\cref{step_pm:stat}) passes using the values $C^n$).
Continuing with the notation from before, for any $\alpha \in (1, 3/2)$:
\begin{multline}
\hmin^{\eps_s}(S^n |  I^n C^n E'_n)_{\rho_{| \Omega_C}} 
\geq n \, k_{\ca} - n \, \frac{\alpha-1}{2-\alpha} \, \frac{\ln(2)}{2} V^2 \\
- \frac{g({\eps_s}) + \alpha \log(1/\pr{\Omega_C})}{\alpha-1} -  n \, \left( \frac{\alpha-1}{2-\alpha} \right)^2 K'(\alpha) \,, \label{eqn:pm_entropy_bound}
\end{multline}
with $g({\eps_s})$, $V$, and $K'(\alpha)$ as in \cref{thm:with_testing}.
\end{claim}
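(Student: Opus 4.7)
The plan is to realise \cref{claim:eat_qkd_pm} as a direct application of the GEAT (\cref{thm:with_testing}) to a sequence of channels $\cM_i$ that each implement one round of \cref{prot:qkd-pm}.  Since \cref{cond:eve_seq} permits Eve's general attack to be written as a sequence of round-wise maps $\cA_i\colon E'_{i-1} Q_i \to E'_i Q_i$, the natural choice is to let $\cM_i$ (i) sample $\psi_{UQ}$ to produce $U_i, Q_i$, (ii) apply $\cA_i$, (iii) measure $Q_i$ with $\{N^{(v)}\}$ to obtain $V_i$, (iv) compute $I_i = \pd(U_i, V_i)$, $S_i = \rk(U_i, I_i)$, and $C_i = \ev(V_i, I_i, S_i)$, and (v) discard $U_i$ and $V_i$.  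I would take the GEAT output register to be $A_i = S_i$, take the memory register $R_i$ trivial, and collect all accumulated classical and quantum side information into $E_i = (I^i, C^i, E'_i)$, so that $E_n$ precisely matches the conditioning system $I^n C^n E'_n$ appearing in \cref{claim:eat_qkd_pm}.

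With this setup, the two structural prerequisites of the GEAT are immediate.  The no-signalling condition is trivial because $R_{i-1}$ is trivial.  For the measurement condition \cref{eqn:measurement_condition}, the statistics $C^n$ are redundantly stored as classical subregisters of $E_n$, so the required channel $\cT$ can simply measure these subregisters in the computational basis and copy the outcome into a fresh $C^n$ register.

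The key step is to argue that the collective attack bound yields a valid affine min-tradeoff function $f = \ca$.  Fix any $\nu \in \Sigma_i(q)$ arising from an input state $\omega_{R_{i-1} E_{i-1} \tilde E_{i-1}}$.  I treat the combined system $E'_{i-1} \tilde E_{i-1} I^{i-1} C^{i-1}$ as an ancilla in state $\omega$ that Eve's effective collective attack $\cA'\colon Q \to Q \hat E$, with $\hat E = E'_i \tilde E_{i-1} I^{i-1} C^{i-1}$, first prepares independently of $Q$ and then processes by applying $\cA_i$ to $E'_{i-1} Q$.  Since $\cA'$ is a CPTP map, it is a bona fide collective attack in the sense of \cref{def:coll_attack}, and \cref{eqn:def_ca_ineq} yields $H(S_i \mid I_i \hat E C_i)_\nu \geq \ca(\nu_{C_i})$.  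Merging $I_i$ with $I^{i-1}$ and $C_i$ with $C^{i-1}$ rewrites this as $H(A_i \mid E_i \tilde E_{i-1})_\nu \geq \ca(q)$, as required; affinity of $f$ is inherited directly from the affinity hypothesis on $\ca$.

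Finally, invoking \cref{thm:with_testing} with $\Omega = \Omega_C = \{c^n : \ca(\freq{c^n}) \geq k_\ca\}$ gives $h = \min_{c^n \in \Omega_C} \ca(\freq{c^n}) \geq k_\ca$, and the resulting lower bound on $\hmin^{\eps_s}(A^n \mid E_n)_{\rho | \Omega_C}$ reproduces \cref{eqn:pm_entropy_bound} term by term.  The main obstacle is the min-tradeoff argument of the previous paragraph: one must carefully justify that the auxiliary system $\tilde E_{i-1}$ together with the accumulated classical registers $I^{i-1}, C^{i-1}$ can be absorbed into the output side information of a single-round collective attack, even though \cref{def:coll_attack} is phrased for a lone round with no pre-existing side information.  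Once this absorption is in place, the rest of the proof is essentially a parameter-matching exercise.
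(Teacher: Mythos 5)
Your proposal is correct and follows essentially the same route as the paper's proof: the same round-wise channels $\cM_i$ with $E_i = I^i C^i E'_i$ and trivial $R_i$, the same verification of the measurement and no-signalling conditions, and the same construction of a ``custom'' collective attack $\cA'$ that prepares the fixed ancilla $\omega^{i-1}$ and applies $\cA_i$ so that \cref{def:coll_attack} yields the min-tradeoff property of $\ca$. The absorption step you flag as the main obstacle is handled in the paper exactly as you describe, noting that \cref{def:coll_attack} quantifies over all collective attacks and arbitrary extensions, so the $i$-, $\omega^{i-1}$-, and $\cA_i$-dependent choice of $\cA'$ is admissible.
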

\begin{proof}
To make use of the GEAT, we need to write $\rho_{S^n I^n C^n E'_n | \Omega_C}$ as the result of a sequential application of a quantum channel. For this we fix an attack $\cA_1, \dots, \cA_n$ and define
\begin{align*}
\cM_i : E'_{i-1} \to S_i I_i C_i E'_i
\end{align*}
as the following channel: given a quantum system $\omega_{E'_{i-1}}$, \begin{enumerate}
\item create the state $\psi_{U_i Q_i}$ (defined in \cref{step_pm:data_gen} of \cref{prot:qkd-pm}),
\item apply the attack map $\cA_i: Q_i E'_{i-1} \to Q_i E'_i$ to $\psi_{U_i Q_i} \ot \omega_{E'_{i-1}}$, \label{step_pm:attack}
\item measure $\{N^{(v)}\}_{v \in \cV}$ on system $Q_i$ and store the result in register $V_i$, \label{step_pm:meas}
\item set $I_i = \pd(U_i, V_i)$,
\item set $S_i = \rk(U_i, I_i)$,
\item set $C_i = \ev(V_i, I_i, S_i)$,
\item trace out registers $U_i$ and $V_i$ . \label{step_pm:traceout}
\end{enumerate}
Comparing the steps of the protocol and \cref{eqn:c_def} with this definition of $\cM_i$, we see that the marginal of $\rho$ on systems $S^n I^n C^n E'_n$ is the same as the output of the maps $\cM_i$: 
\begin{align*}
\rho_{S^n I^n C^n E'_n} = \cM_n \circ \dots \circ \cM_1(\omega_{E'_0}) \,,
\end{align*}
where $\omega_{E'_0}$ is the initial state of Eve's side information (which can be chosen to be trivial without loss of generality as explained in \cref{sec:eve_attack}).
If we define the systems $E_i = I^i C^i E'_i$, then by suitable tensoring with the identity map and copying the register $C_i$ we can view $\cM_i$ as a map 
\begin{align*}
\tilde \cM_i: E_{i-1} \to S_i E_i C_i \,.
\end{align*}
With this we can also express the final state (which technically now includes two copies of $C^n$, one explicit and one part of $E_n$) as
\begin{align*}
\rho_{S^n E_n C^n} = \tilde \cM_n \circ \dots \circ \tilde \cM_1(\omega_{E_0}) \,.
\end{align*}
With this notation, the entropy on the l.h.s.~of \cref{eqn:pm_entropy_bound} can be written as 
\begin{align*}
\hmin^{\eps_s}(S^n |  I^n C^n E'_n)_{\rho_{|\Omega_C}} = \hmin^{\eps_s}(S^n|E_n)_{\tilde \cM_n \circ \dots \circ \tilde \cM_1(\omega_{E_0})_{|\Omega_C}} \,.
\end{align*}
We want to apply \cref{thm:with_testing} to derive the desired lower bound in \cref{eqn:pm_entropy_bound}.
For this, we first need to check that the required conditions on the maps $\tilde \cM_i$ are satisfied.
The condition in \cref{eqn:measurement_condition} is clearly satisfied as the systems $C_i$ are themselves included in the conditioning system $E'_n$.
The non-signalling condition in \cref{thm:with_testing} is also trivially satisfied in this case since there is no system $R_i$.

We now need to argue that the collective attack bound $\ca: \mbP(\cC) \to \R$ used as an argument in \cref{prot:qkd-pm} is a min-tradeoff function for the maps $\{\tilde \cM_i\}$.
By \cref{def:tradeoff}, we need to show that for any $i$, attack $\cA_i: Q_i E'_{i-1} \to Q_i E'_i$ (in the definition of $\tilde \cM_i$, see \cref{step_pm:attack}), and state $\omega^{i-1}_{E_{i-1}\tilde E_{i-1}}$ (where $\tilde E_{i-1} \equiv E_{i-1}$), the following holds:
\begin{align}
\ca(\tilde \cM_i(\omega^{i-1})_{C_i}) \leq H(S_i | E_i \tilde E_{i-1})_{\tilde \cM_i(\omega^{i-1})} \,. \label{eqn:coll_att_mintradeoff}
\end{align}
For the rest of the proof, we fix an arbitrary choice of $i$, $\omega^{i-1}$, and $\cA_i$.
To relate \cref{eqn:coll_att_mintradeoff} to the definition of collective attack bounds (\cref{def:coll_attack}), we construct a collective attack $\cA': Q_i \to Q_i E_i \tilde E_{i-1}$ such that 
\begin{align}
\tilde \cM_i(\omega^{i-1})_{S_i C_i E_i \tilde E_{i-1}} = \nu_{S_i C_i E_i \tilde E_i} \,, \label{eqn:pm_nu_output}
\end{align}
where $\nu$ is defined as in~\cref{def:coll_attack}, i.e.~$\nu$ is the state produced by running a single round of~\cref{prot:qkd-pm} with the attack $\cA'$.\footnote{Of course, $\cA'$ will depend on $i$, $\omega^{i-1}$, and $\cA_i$. This is not a problem since \cref{def:coll_attack} holds for any collective attack, i.e., to show that \cref{eqn:coll_att_mintradeoff} holds for any $i$, $\omega^{i-1}$, and $\cA_i$, we can first fix an arbitrary choice, construct a ``custom'' collective attack that shows \cref{eqn:coll_att_mintradeoff} for that choice, and then apply the condition in \cref{def:coll_attack} to that choice.}
It is easy to check that \cref{eqn:pm_nu_output} is satisfied for the following choice of $\cA'$: given a state $\sigma_{Q}$, $\cA'$ first creates the (fixed) state $\omega^{i-1}_{E_{i-1}\tilde E_{i-1}}$ and then applies the (fixed) attack $\cA_i$ to $\sigma_{Q} \ot \omega^{i-1}_{E_{i-1}\tilde E_{i-1}}$ (with $Q_i = Q$).

Then, since $\ca$ is a collective attack bound, \cref{eqn:coll_att_mintradeoff} follows from \cref{def:coll_attack}: 
\begin{align*}
\ca(\tilde \cM_i(\omega^{i-1})_{C_i}) = \ca(\nu_{C_i})
&\leq H(S_i | E_i \tilde E_{i-1} C_i)_{\nu} \\
&= H(S_i | E_i \tilde E_{i-1})_{\tilde \cM_i(\omega^{i-1})} \,.
\end{align*}
Compared to~\cref{def:coll_attack}, we have dropped the explicit conditioning on $I \deq I_i$ since $I_i$ is already part of $E_i$, and in the last equality we can drop $C_i$ since it is also part of $E_i$.

This means that the function $\ca$ is a min-tradeoff function for \cref{prot:qkd-pm}.
By definition, for any $c^n \in \Omega_{C}$, $\ca(\freq{c^n}) \geq k_{\ca}$
Hence, \cref{claim:eat_qkd_pm} follows by applying \cref{thm:with_testing}.
\end{proof}

Having proved correctness and secrecy, we turn our attention to the completeness of \cref{prot:qkd-pm}, i.e.~we need to bound the probability that the protocol aborts when Eve does not interfere in the protocol, but the channel between Alice and Bob may be noisy.
In the protocol, Alice sends a quantum system $Q$ to Bob.
If the channel connecting Alice and Bob is noisy, instead of Alice's and Bob's joint state in each round being $\psi_{UQ}$, the joint state is $\cN(\psi_{UQ})$ for some channel $\cN: Q \to Q$.
This channel $\cN$ describes the noise model for \cref{prot:qkd-pm}.\footnote{Note that the channel $\cN$ is not something that needs to be added explicitly to the description of \cref{prot:qkd-pm}: formally, $\cN$ can be viewed as Eve's attack, i.e.~we can model the implementation of \cref{prot:qkd-pm} with a noisy channel and honest Eve by saying that Eve's attack is described by $\cN$. This also means that when we proved correctness and secrecy, we only needed to prove this for any behaviour of Eve, not any noise model, since the noise model can be included in Eve's actions.}

For a given noise model $\cN$, we need to choose the length of the error correction string  $\lambda_\ec$ to be sufficiently long such that Bob's guess $\hat S^n$ for Alice's raw key $S^n$ is correct with high probability, and as a consequence the check in \cref{step_pm:rkv} passes.
Furthermore, we need to choose the threshold $k_{\ca}$ to be sufficiently low that an honest noisy state passes \cref{step_pm:stat} with high probability.
The precise choice of parameters can be worked out using the properties of the error correcting code in \cref{step_pm:ec} and statistical tail bounds for \cref{step_pm:stat}.
We provide the details in \cref{app:completeness}.

\subsection{Deriving collective attack bounds} \label{sec:deriving_col_attack_bounds}
Our main result, \cref{thm:general_qkd_pm}, turns an \emph{affine} collective attack bound (defined in \cref{def:coll_attack}) into a security statement against general attacks.
%\footnote{An affine collective bound is one that is linear under convex combinations, i.e.~for any probability distributions $p_1, p_1 \in \mbP(\cC)$ and $\mu \in [0, 1]$, $\ca((1 - \mu) p_1 + \mu p_2) = (1 - \mu) \ca(p_1) + \mu \ca(p_2)$.}
Therefore, the main step one has to perform to use our framework is finding such an affine collective attack bound for a protocol of interest.
In this section, we give a numerical method for finding collective attack bounds for \cref{prot:qkd-pm} based on ideas from~\cite{coles2012unification,Winick2018reliablenumerical}.
Combined with \cref{thm:general_qkd_pm}, this means that the problem of finding key rate bounds against general attacks for any instance of \cref{prot:qkd-pm} is reduced to a numerical computation.

We begin by noting that we can rewrite the condition \cref{eqn:def_ca_ineq} from \cref{def:coll_attack} as follows: for any probability distribution $\nu^*_C \in \mbP(\cC)$ we require that
\begin{align}
\inf_{\nu \sth \nu_C = \nu^*_C} H(S | I E C)_{\nu} \geq \ca(\nu^*_{C}) \,, \label{eqn:entr_opt}
\end{align}
where the infimum is over all states $\nu$ that can result from a collective attack and have statistics $\nu^*_C$ (and the infinimum is infinite if there is no such state).
In the language of the GEAT, a collective attack bound essentially is a min-tradeoff function for a certain sequence of maps associated with \cref{prot:qkd-pm}.
More details on how a collective attack bound serves as a min-tradeoff function can be found in the proof of \cref{claim:eat_qkd_pm}

Since we are interested in an affine lower bound, we write the probability distribution $\nu_C$ as a probability vector $\vec \nu_C$ and, following~\cite{tan2021computing, tan2020improved}, make the ansatz 
\begin{align*}
\ca(\vec \nu_C) = \vec \lambda \cdot \vec \nu_C + c_{\vec \lambda}
\end{align*}
for some vector $\vec \lambda$ of the same dimension as $\vec \nu_C$ and a constant $c_{\vec \lambda}$.
We treat $\vec \lambda$ as a parameter that will be chosen heuristically. For example, one can choose $\vec \lambda$ by numerically estimating the gradient of the function $\nu'_C \mapsto \inf_{\nu \sth \nu_C = \nu'_C} H(S | I E C)_{\nu}$ around a particular choice of classical statistics $\nu_C^*$ that has been observed in an experimental realisation of the protocol, although this choice is not necessarily optimal and $\vec \lambda$ should be numerically optimised if one wants to obtain the best possible key rates.

Having chosen $\vec \lambda$ heuristically, we need to compute a value of $c_{\vec \lambda}$ that ensures that $\vec \lambda \cdot \vec \nu_C + c_{\vec \lambda}$ is a valid min-tradeoff function.
Inserting our ansatz into \cref{eqn:def_ca_ineq}, we see that for any fixed $\vec \lambda$, a valid choice of $c_{\vec \lambda}$ is one that satisfies
\begin{align}
c_{\vec \lambda} \leq \inf_{\nu} H(S | I E C)_{\nu} - \vec \lambda \cdot \vec \nu_C \,. \label{eqn:c_opt_general}
\end{align}
The infimum here is taken over the states $\nu$ described in \cref{def:coll_attack}.\footnote{To avoid confusion, we emphasise that the infimum here is taken over \emph{all} such states $\nu$, not just ones with a specific classical distribution $\nu^*_C$ as considered in \cref{eqn:entr_opt}. As explained in~\cite{tan2021computing}, one can view the optimisation in~\cref{eqn:c_opt_general} as arising from the Lagrange dual of \cref{eqn:entr_opt}, but we will not make use of this relation here explicitly.}

To tackle this optimisation problem, we consider an entanglement-based version of~\cref{prot:qkd-pm} using the source-replacement scheme explained in~\cite{coles2016numerical}.
As explained in \cref{sec:intro}, switching to an entanglement-based version of a prepare-and-measure protocol generally requires introducing ``artificial'' constraints on Eve's actions.
These artificial constraints are troublesome when applying the EAT to the entanglement-based version, but here we take a different approach: we only use the entanglement-based version to derive a collective attack bound (for which the artificial constraints do not present a problem).
This collective attack bound also applies to the original prepare-and-measure protocol and in \cref{thm:general_qkd_pm} we apply the EAT with this collective attack bound to the prepare-and-measure protocol directly.
We emphasise that the method for deriving a collective attack bound and our \cref{thm:general_qkd_pm} are entirely independent: \cref{thm:general_qkd_pm} does not depend on how the collective attack bound was derived and does not make use of an entanglement-based protocol itself.

In \cref{prot:qkd-pm} Alice prepares the state  
\begin{align*}
\psi_{UQ} = \sum_{u} p(u) \proj{u} \ot \proj{\psi}_{Q|u}
\end{align*}
and sends system $Q$ to Bob.
It is clear that Alice could equivalently prepare the state 
\begin{align*}
\ket{\tilde \psi}_{UQ} = \sum_{u} \sqrt{p(u)} \ket{u}_P \ot \ket{\psi}_{Q|u} \,,
\end{align*}
send system $Q$ to Bob, and only afterwards measure her own system $P$ in the computational basis, storing the outcome in system $U$.
Eve would now apply her collective attack $\cA: Q \to QE$ to system $Q$ of $\tilde \psi$, so the state after Eve's attack would be $\tilde \psi_{PQE}$.
We can replace this attack by giving Eve the ability to prepare a state $\hat \psi_{PQE}$ directly and distribute $P$ and $Q$ to Alice and Bob, respectively.
This kind of attack clearly gives Eve more power.
In fact, it gives Eve too much power: in order to still obtain a good key rate, we need to enforce the additional constraint that Alice's marginal of the state $\hat \psi$ is the same as her marginal of the state $\tilde \psi$ she would have prepared herself, i.e.~$\hat \psi_P = \tilde \psi_P$.
It is easy to see that even with this additional constraint, this latter kind of attack is still at least as general as any collective attack on the prepare-and-measure protocol described before.
Note that the condition $\hat \psi_A = \tilde \psi_A$ is not a physical constraint that Alice checks in an actual protocol, but rather the aforementioned additional artificial constraint.
Nonetheless, we can impose this artificial constraint on the optimisation problem used to calculate the collective attack bound.

For a fixed instance of \cref{prot:qkd-pm}, we can now view the state $\nu$ in \cref{def:coll_attack} as a function of $\hat \psi_{PQE}$:
\begin{multline*}
\nu_{ESIC}(\hat \psi) = \sum_{u, v} \ptr{PQ}{\proj{u}_P \ot N^{(v)}_Q \hat \psi_{PQE}} \ot \\ \projs{\rk(u, i)}_S \ot \projs{\pd(u, v)}_I \ot \projs{\ev(v, i)}_C \,.
\end{multline*}
Here, $\projs{\rk(u, \pd(u, v))}$ is shorthand for the projector $\proj{\rk(u, \pd(u, v))}$ and $i$ is shorthand for $\pd(u, v)$.
We can therefore write the optimisation problem from \cref{eqn:c_opt_general} as 
\begin{align*}
&\inf_{\hat \psi_{PQE}} H(S | I E C)_{\nu} - \vec \lambda \cdot \vec \nu_C \\
& \sth \hat \psi_{PQE} \geq 0\,, \quad \tr{\hat \psi_{PQE}} = 1\,,\quad \hat \psi_P = \tilde \psi_P\,,
\end{align*}
where $\nu = \nu(\hat \psi)$, and without loss of generality we can restrict the optimisation to pure states on $PQE$ with $E \equiv PQ$.

A lot of work in QKD has been focused on numerical methods for this kind of optimisation problem (see e.g.~\cite{coles2016numerical,Winick2018reliablenumerical,bunandar2020numerical,george2021num,hu2021robust}).
The key difficulty is that we need a \emph{lower} bound on the \emph{infimum} of a \emph{concave} function $H(S | I E C)_{\nu(\hat \psi)}$.
Here we use a method from~\cite{coles2012unification,Winick2018reliablenumerical} to turn this optimisation problem into a convex one.
As a first step, we observe that in the definition of $\nu$ we can incorporate the classical functions $\rk$, $\pd$, and $\ev$ into Alice's and Bob's measurements by defining
\begin{align}
M^{(s,i,c)}_{PQ} = \sum_{u, v:\, \bigg\{\substack{\rk(u, i) = s, \\\pd(u, v) = i, \\ \ev(v, i, s) = c}} \proj{u}_P \ot N^{(v)}_Q \,. \label{eqn:def_M}
\end{align}
Then, we can write $\nu_{ESIC}$ as 
\begin{align*}
\nu = \sum_{s, i, c} \ptr{PQ}{M^{(s, i, c)}_{PQ} \hat \psi_{PQE}} \ot \proj{s}_S \ot \proj{i}_I \ot \proj{c}_C \,.
\end{align*}
Remembering that we can assume that $\psi_{PQE}$ is pure, we now define the pure state
% \begin{align*}
% \ket{\nu^1}_{PQESII'CC'} &= \sum_{s, i, c} \sqrt{M^{(s, i, c)}_{PQ}} \ket{\hat \psi}_{PQE} \ket{s}_S \ket{i}_I \ket{i}_{I'} \ket{c}_C \ket{c}_{C'} \,, \\
% \ket{\nu^2}_{PQESS'II'CC'} &= \sum_{s, i, c} \sqrt{M^{(s, i, c)}_{PQ}} \ket{\hat \psi}_{PQE} \ket{s}_S \ket{s}_{S'} \ket{i}_I \ket{i}_{I'} \ket{c}_C \ket{c}_{C'} \,.
% \end{align*}
\begin{align*}
\ket{\nu^1} &= \sum_{s, i, c} \sqrt{M^{(s, i, c)}_{PQ}} \ket{\hat \psi}_{PQE} \ket{s}_S \ket{i}_I \ket{i}_{I'} \ket{c}_C \ket{c}_{C'} \,. 
% \\
% \ket{\nu^2} &= \sum_{s, i, c} \sqrt{M^{(s, i, c)}_{PQ}} \ket{\hat \psi}_{PQE} \ket{s}_S \ket{s}_{S'} \ket{i}_I \ket{i}_{I'} \ket{c}_C \ket{c}_{C'} \,.
\end{align*}
We observe that
\begin{align*}
\nu_{EIC} = \nu^1_{EIC} \,.
% \,,\quad \nu_{ESIC} = \nu^2_{ESIC} \,,
\end{align*}
Following the proof of \cite[Theorem 1]{coles2012unification}, a direct calculation shows that
\begin{align*}
H(S | I E C)_{\nu} = \dnormal{\nu^1_{PQSIC}}{\cP_S(\nu^1_{PQSIC})}
\end{align*}
where $\cP_S$ is the pinching map $\cP_S(\nu^1) = \sum_{s\in \cS} \proj{s}_S \nu^1 \proj{s}_S$.
We can view $\nu^1_{PQSIC}$ as a linear function of $\hat \psi_{PQ}$: 
\begin{multline}
\nu^1_{PQSIC}(\hat \psi_{PQ}) = \sum_{s, s', i, c} \sqrt{M^{(s, i, c)}_{PQ}} \hat \psi_{PQ} \sqrt{M^{(s', i, c)}_{PQ}} \\ 
\ot \ket{s}\!\bra{s'}_S \ot \proj{i}_I \ot \proj{c}_C \,, \label{eqn:def_nu1}
\end{multline}
Furthermore, the relative entropy is jointly convex.
Therefore, for a given $\vec \lambda$, a valid choice for $c_{\vec \lambda}$ can be found by solving the following convex optimisation problem:
\begin{align*}
c_{\vec \lambda} =
&\inf_{\hat \psi_{PQ}} \dnormal{\nu^1_{PQSIC}}{\cP_S(\nu^1_{PQSIC})} - \vec \lambda \cdot \vec \nu_C \\
& \sth \hat \psi_{PQ} \geq 0\,, \quad \tr{\hat \psi_{PQ}} = 1\,,\quad\hat \psi_P = \tilde \psi_P\,, \numberthis \label{eqn:c_opt_psi}
\end{align*}
where $\nu^1_{PQSIC}$ and $\nu_C$ are linear functions of $\hat \psi_{PQ}$.
To solve this optimisation problem, we can use standard techniques from convex optimisation.
In particular, in~\cite{fawzi2018efficient,fawzi2019semidefinite, fawzi_rational} techniques have been developed to bound the relative entropy from below by a sequence of semidefinite programs (SDPs).
These SDPs can then be solved using standard SDP solvers, and the solution to the dual SDP provides a certified lower bound.
Alternatively, one can also turn any feasible choice of $\hat \psi_{PQ}$ (ideally close to the optimal attack) into a certified lower bound using the techniques from~\cite{coles2016numerical,Winick2018reliablenumerical}.

We note that many protocols have additional structure that allow the optimisation problem in \cref{eqn:c_opt_psi} to be simplified before tackling it numerically.
Additionally, if the map $\ev$ from \cref{prot:qkd-pm} has a particular structure that distinguishes between ``test rounds'', in which Alice and Bob use their measurement outcomes to check whether Eve tampered with the protocol, and ``data rounds'', in which Alice and Bob generate the raw data for their key, the derivation of a collective attack bound can be further simplified.
We refer to \cite[Section V.A]{dupuis2019entropy} for a detailed explanation of this method and to \cref{sec:b92_formal} for an example of its use in our context.

\textit{Data availability.}
No experimental data was collected as part of this work. 

\textit{Code availability.}
Code for reproducing \cref{fig:b92_curves} is available from the authors upon request.

\textit{Acknowledgements.}
We thank Rotem Arnon-Friedman, Omar Fawzi, Marcus Haberland, Christoph Pacher, Joseph M.~Renes, Martin Sandfuchs, and David Sutter for helpful discussions.
We are especially grateful to Ernest Tan for helpful explanations regarding numerical methods for computing collective attack bounds.
This work was supported by the National Centres of Competence in Research (NCCRs) QSIT (funded by the Swiss National Science Foundation under grant number 51NF40-185902) and SwissMAP, the Air Force Office of Scientific Research (AFOSR) via project No. FA9550-19-1-0202, the SNSF project No. 200021\_188541 and the QuantERA project eDICT.

\textit{Author contributions.}
TM developed the proofs and wrote the manuscript with input from RR. RR guided and supervised the project.

\textit{Competing interests statement.} The authors declare no competing interests.

\bibliography{main}

\newpage
\onecolumngrid

\begin{appendices}
\crefalias{section}{appsec}

\section*{Supplementary Notes}
\section{Reduction to entropic condition} \label{app:entr_reduction}
In this section, we provide the detailed proof that \cref{claim:eat_qkd_pm} implies \cref{thm:general_qkd_pm} using the same ideas as Ref.~\cite[Section~4.2]{tan2020improved}.
We continue with the same notation as in the main text.
As a first step, we add to $\rho$ additional systems $C^n$ defined by
\begin{align}
C_i = \ev(V_i, I_i, S_i) \,. \label{eqn:c_def}
\end{align}
This means that the system $C_i$ is generated the same way as $\hat C_i$, except that we use Alice's actual raw key $S_i$ instead of Bob's guess $\hat S_i$.
We now define the following events (formally defined as subsets of possible values of the classical systems $S^n, \hat S^n, \hat C^n, C^n$, and $E'$): 
\begin{enumerate}[label=~]
\item $\Omega_{g}$: $S^n = \hat S^n$ (i.e.~Bob's guess of Alice's raw key is correct).
\item $\Omega_{\kv}$: $\hash(S^n) = \hash(\hat S^n)$ (i.e.~the raw key validation step (\cref{step_pm:rkv}) passes).
\item $\Omega_{\hat C}$: $\ca(\freq{\hat C^n}) \geq k_{\ca}$ (i.e.~the statistical check (\cref{step_pm:stat}) passes using the values $\hat C^n$).
\item $\Omega_{C}$: $\ca(\freq{C^n}) \geq k_{\ca}$ (i.e.~the statistical check (\cref{step_pm:stat}) passes using the values $C^n$).
\end{enumerate}
The event $\Omega$ of the protocol not aborting is $\Omega = \Omega_\kv \wedge \Omega_{\hat C}$.
If $S^n = \hat S^n$, then $\hash(S^n) = \hash(\hat S^n)$ and $C^n = \hat C^n$.
Therefore, 
\begin{align*}
\Omega \wedge \Omega_g = \Omega_\kv \wedge \Omega_{\hat C} \wedge \Omega_g = \Omega_{C} \wedge \Omega_g \,.
\end{align*}
Since \cref{step_pm:rkv} employs a universal hash function, the probability that the protocol does not abort despite $S^n \neq \hat S^n$ is at most $\eps_{\kv}$, i.e.~$\pr{\Omega_g^c \wedge \Omega} \leq \eps_{\kv}$, where $\Omega_g^c$ is the complement of $\Omega_g$.
Hence, we can bound the l.h.s.~of \cref{eqn:secrecy_cond} by
\begin{align*}
& \norm{\rho_{K I^n E'_n E' \wedge \Omega} - \tau_{K} \ot \rho_{I^n E'_n E' \wedge \Omega}}_1 \\
&\leq \norm{\rho_{K I^n E'_n E' \wedge \Omega \wedge \Omega_g} - \tau_{K} \ot \rho_{I^n E'_n E' \wedge \Omega \wedge \Omega_g}}_1 + \norm{\rho_{K I^n E'_n E' \wedge \Omega \wedge \Omega_g^c}}_1 + \norm{\tau_{K} \ot \rho_{I^n E'_n E' \wedge \Omega \wedge \Omega_g^c}}_1 \\
&\leq \norm{\rho_{K I^n E'_n E' \wedge \Omega_C \wedge \Omega_g} - \tau_{K} \ot \rho_{I^n E'_n E' \wedge \Omega_C \wedge \Omega_g}}_1 + 2\,\eps_\kv \,. \numberthis \label{eqn:sec_bound1}
\end{align*}
For the remainder of the proof, we will assume that 
\begin{align}
\pr{\Omega_g | \Omega_C} \geq \eps_s \tand \pr{\Omega_C} \geq \eps_a \label{eqn:min_eps_cond} \,.
\end{align}
This assumption is justified by the fact that otherwise, we  have 
\begin{align*}
\pr{\Omega_C \wedge\ \Omega_g} \leq \min \{\pr{\Omega_g | \Omega_C} , \pr{\Omega_C}\} \leq \max\{\eps_s, \eps_a\} \,,
\end{align*}
in which case the theorem statement follows directly from \cref{eqn:sec_bound1} and  
\begin{align*}
\norm{\rho_{K I^n E'_n E' \wedge \Omega_C \wedge \Omega_g} - \tau_{K} \ot \rho_{I^n E'_n E' \wedge \Omega_C \wedge \Omega_g}}_1
&\leq \norm{\rho_{K I^n E'_n E' \wedge \Omega_C \wedge \Omega_g}}_1 + \norm{\tau_{K} \ot \rho_{I^n E'_n E' \wedge \Omega_C \wedge \Omega_g}}_1 \\
&= 2 \, \pr{\Omega_C \wedge\ \Omega_g} \,.
\end{align*}
Defining the state $\rho_{K I^n E'_n E' (| \Omega_C) (\wedge \Omega_g)} = 1/\pr{\Omega_C} \rho_{K I^n E'_n E' \wedge \Omega_C \wedge \Omega_g}$, clearly 
\begin{align*}
\norm{\rho_{K I^n E'_n E' \wedge \Omega_C \wedge \Omega_g} - \tau_{K} \ot \rho_{I^n E'_n E' \wedge \Omega_C \wedge \Omega_g}}_1 \leq \norm{\rho_{K I^n E'_n E' (| \Omega_C) (\wedge \Omega_g)} - \tau_{K} \ot \rho_{I^n E'_n E' (| \Omega_C) (\wedge \Omega_g)}}_1 \,.
\end{align*}
Therefore, to show the theorem, it suffices to show that 
\begin{align}
\norm{\rho_{K I^n E'_n E' (| \Omega_C) (\wedge \Omega_g)} - \tau_{K} \ot \rho_{I^n E'_n E' (| \Omega_C) (\wedge \Omega_g)}}_1 \leq \eps_{\pa} + 4 \, \eps_s \,. \label{eqn:secrecy_cond_mod}
\end{align}

The state $\rho_{K I^n E'_n E' (| \Omega_C) (\wedge \Omega_g)}$ is produced by applying a strong extractor in \cref{step_pm:pa}.
Comparing \cref{def:extractor} and \cref{eqn:secrecy_cond_mod} and remembering that the seed $\mu$ is chosen uniformly at random by Alice and is part of the system $E'$, we see that we need  to show that 
\begin{align*}
\hmin^{\eps_s}(S^n | I^n E'_n E')_{\rho_{K E'_n E' (| \Omega_C) (\wedge \Omega_g)}} \geq l + \lceil 2 \log(1/\eps_{\pa})\rceil \,.
\end{align*}
To this end, we can first bound the l.h.s.~by
\begin{align*}
\hmin^{\eps_s}(S^n | I^n E'_n E')_{\rho_{(| \Omega_C) (\wedge \Omega_g)}} 
&\geq \hmin^{\eps_s}(S^n | I^n E'_n E')_{\rho_{| \Omega_C}} \\
&\geq \hmin^{\eps_s}(S^n | I^n C^n E'_n E')_{\rho_{| \Omega_C}}\\
&\geq \hmin^{\eps_s}(S^n | I^n C^n E'_n)_{\rho_{| \Omega_C}} - \lambda_{\ec} - \lceil \log(1/\eps_{\kv}) \rceil \,,  \numberthis \label{eqn:sub_lambdaec}
\end{align*}
where the first inequality follows from \cite[Lemma 10]{tomamichel2017largely} together with the first condition in \cref{eqn:min_eps_cond}, the second inequality holds because conditioning on additional classical information $C^n$ can only decrease the min-entropy, and the third inequality is a chain rule for the min-entropy which uses the fact  that the only information in $E'$ that is correlated with $S^n$ is the error correction information $\ec \in \bits^{\lambda_{\ec}}$ and the hash $\hash(S^n) \in \bits^{\lceil \log(1/\eps_{\kv}) \rceil}$.
We can show a lower bound on $\hmin^{\eps_s}(S^n | I^n C^n E)_{\rho_{| \Omega_C}}$ using the GEAT.
Since this is the core of the security proof, we present it as a separate claim.
From the preceding discussion and the assumption $\pr{\Omega_C} \geq \eps_a$ it is then clear that \cref{claim:eat_qkd_pm} implies \cref{thm:general_qkd_pm}.

\section{Completeness of Protocol \ref{prot:qkd-pm}} \label{app:completeness}
To prove the completeness of \cref{prot:qkd-pm}, we need to bound the probability of the protocol aborting for a given choice of parameters and noise model.
Throughout this section, for a fixed choice of arguments in \cref{prot:qkd-pm} and a fixed noise model $\cN$, we denote by $\nu^{\setft{hon}}$ the corresponding ``honest single-round state'', i.e.~formally the state $\nu^{\setft{hon}}$ from \cref{def:coll_attack} when one chooses Eve's collective attack as $\cN$.
Furthermore, we assume that in \cref{step_ent:ec} Alice and Bob use the one-way error correction protocol from~\cite{error_corr}, which is essentially optimal.
We note that in \cref{prot:qkd-pm} the choice of error correction protocol has no effect on the security statement, only on the completeness statement, so one can also use a heuristic protocol with some fixed leakage $\lambda_{\ec}$ leading to a heuristic value of $\eps^{\setft{comp}}_\kv$ without impacting the security of the protocol.

There are two steps in which \cref{prot:qkd-pm} may abort:
\cref{step_pm:rkv} if $\hash(S^n) \neq \hash(\hat S^n)$, and \cref{step_pm:stat} if $\ca(\freq{\hat C^n}) < k_{\ca}$.
Since $S^n = \hat S^n$ implies $\hash(S^n) = \hash(\hat S^n)$, we can bound the total abort probability by $\pr{S^n \neq \hat S^n} + \pr{S^n = \hat S^n \wedge \ca(\freq{\hat C^n}) < k_{\ca}}$.
We denote these probabilities by $\eps^{\setft{comp}}_\kv$ and $\eps^{\setft{comp}}_\ev$, respectively, so the total abort probability is bounded by $\eps^{\setft{comp}}_\kv + \eps^{\setft{comp}}_\ev$.

\begin{lemma} \label{lem:comp_ec}
Fix a noise model $\cN$ and a choice of arguments in \cref{prot:qkd-pm}.
Then, any desired value of $\eps^{\setft{comp}}_\kv$ can be achieved as long as the following condition holds:
\begin{align*}
\lambda_{\ec} \geq n H(S | V I)_{\nu^{\setft{hon}}} + 2 \sqrt{n} \sqrt{1 - 2 \log (\eps^{\setft{comp}}_\kv / 2)} \log(1 + 2 |\cS|) + 2 \log \frac{2}{\eps^{\setft{comp}}_\kv} \,.
\end{align*}
\end{lemma}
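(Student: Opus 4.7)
My plan is to combine a one-shot analysis of the error-correction protocol from~\cite{error_corr} with the asymptotic equipartition property (AEP) for the smooth max-entropy, using the fact that in the honest case the $n$-round state is i.i.d.

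First, I would invoke the one-shot leakage bound of~\cite{error_corr}: for any smoothing parameter $\bar\eps > 0$ and any target failure probability $\eps^{\setft{comp}}_\kv$, there is a one-way error-correction protocol based on universal hashing that achieves $\pr{S^n \neq \hat S^n} \leq \eps^{\setft{comp}}_\kv$ whenever
\begin{align*}
\lambda_{\ec} \;\geq\; \hmax^{\bar\eps}(S^n \mid V^n I^n)_{(\nu^{\setft{hon}})^{\ot n}} + 2 \log \frac{2}{\eps^{\setft{comp}}_\kv}\,,
\end{align*}
with $\bar\eps$ chosen of order $\eps^{\setft{comp}}_\kv$. This is the standard Slepian--Wolf-style leakage guarantee: the leakage essentially equals the smooth max-entropy of Alice's string given Bob's side information, plus a logarithmic safety margin coming from the hashing-based consistency check.

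Second, because the honest $n$-round state is of tensor-product form $(\nu^{\setft{hon}})^{\ot n}$, I would bound the smooth max-entropy by the AEP. Since $S$ is classical with alphabet $\cS$, the i.i.d.~version of the AEP (see e.g.~\cite{tomamichel2015quantum}) yields
\begin{align*}
\hmax^{\bar\eps}(S^n \mid V^n I^n)_{(\nu^{\setft{hon}})^{\ot n}} \;\leq\; n\, H(S \mid V I)_{\nu^{\setft{hon}}} + \sqrt{n}\,\sqrt{\log(2/\bar\eps^{\,2})}\,\log(1 + 2|\cS|)\,,
\end{align*}
where the factor $\log(1+2|\cS|)$ is the classical specialisation of the generic $\log(1 + 2 d^{3/2})$ coefficient in the quantum AEP. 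Substituting this into the one-shot leakage bound gives exactly the stated inequality once the smoothing parameter is taken to be of order $\eps^{\setft{comp}}_\kv/2$, so that $\sqrt{\log(2/\bar\eps^{\,2})}$ matches $\sqrt{1 - 2 \log (\eps^{\setft{comp}}_\kv / 2)}$.

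The main obstacle is the bookkeeping of numerical constants across the two cited results so that the combined bound reproduces the stated form precisely. The qualitative scaling in $n$, $|\cS|$, and $\eps^{\setft{comp}}_\kv$ is forced by the combination, but the exact coefficient $2$ on the $\sqrt{n}$ term and the argument $\sqrt{1 - 2\log(\eps^{\setft{comp}}_\kv/2)}$ require that the one-shot guarantee of~\cite{error_corr} and the chosen form of the AEP be aligned. A further minor subtlety is that the smoothing error on $\hmax^{\bar\eps}$ must be propagated into the overall success probability via a triangle-inequality argument on the smoothed state, which is what forces the split between $\bar\eps$ and $\eps^{\setft{comp}}_\kv/2$ in the leakage condition.
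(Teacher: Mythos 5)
Your proposal follows essentially the same route as the paper: invoke the one-shot leakage bound of the error-correction protocol with smoothing parameter $\tilde\eps=\eps^{\setft{comp}}_\kv/2$ (so that $2\log\frac{1}{\eps^{\setft{comp}}_\kv-\tilde\eps}=2\log\frac{2}{\eps^{\setft{comp}}_\kv}$), then apply the i.i.d.~asymptotic equipartition property for $\hmax$ to the honest product state. The only discrepancy is the factor of $2$ on the $\sqrt{n}$ term, which you already flag and which comes directly from the specific form of the AEP used (the $\hmax$-version of Corollary~4.10 of the EAT paper), so the argument is correct as proposed.
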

\begin{proof}
By \cite{error_corr}, it suffices to show that 
\begin{align*}
\lambda_{\ec} \geq \hmax^{\tilde \eps}(S^n | V^n I^n)_{(\nu^{\setft{hon}})^{\ot n}} + 2 \log \frac{1}{\eps^{\setft{comp}}_\kv - \tilde \eps} \,,
\end{align*}
where $\tilde \eps \in [0, \eps^{\setft{comp}}_\kv)$ is a parameter that can be optimised over.
For simplicity, here we choose $\tilde \eps = \eps^{\setft{comp}}_\kv / 2$, but note that one could numerically optimise over $\tilde \eps$ if one wishes to derive the best possible completeness error.
By the $\hmax$-version of \cite[Corollary 4.10]{eat}, 
\begin{align*}
\hmax^{\eps^{\setft{comp}}_\kv / 2}(S^n | V^n I^n) \leq n H(S | V I)_{\nu^{\setft{hon}}} + 2 \sqrt{n} \sqrt{1 - 2 \log (\eps^{\setft{comp}}_\kv / 2)} \log(1 + 2 |\cS|) \,.
\end{align*}
Combining these two equations yields the lemma.
\end{proof}

\begin{lemma} \label{lem:comp_ev}
Fix a noise model $\cN$ and a choice of arguments in \cref{prot:qkd-pm}.
Then, any desired value of $\eps^{\setft{comp}}_\ev$ can be achieved as long as the following condition holds for $\delta = \ca(\nu^{\setft{hon}}_C) - k_{\ca}$: 
\begin{align*}
\delta^2 \geq \frac{2 (\Max{\ca} - \Min{\ca}) \delta + 6 \Var{\ca}}{3 n} \log \frac{1}{\eps^{\setft{comp}}_\ev} \,.
\end{align*}
\end{lemma}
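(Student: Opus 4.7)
\textbf{Proof proposal for \cref{lem:comp_ev}.} The plan is to exploit i.i.d.~structure of the honest implementation together with the affinity of $\ca$, and then apply a one-sided Bernstein concentration inequality. First, I would argue that we can replace $\hat C^n$ by $C^n$ inside the probability. Since $\hat C_i = \ev(V_i, I_i, \hat S_i)$ and $C_i = \ev(V_i, I_i, S_i)$, on the event $\{S^n = \hat S^n\}$ we have $\hat C^n = C^n$, so
\begin{align*}
\eps^{\setft{comp}}_\ev \;=\; \pr{S^n = \hat S^n \wedge \ca(\freq{\hat C^n}) < k_{\ca}} \;\leq\; \pr{\ca(\freq{C^n}) < k_{\ca}}\,.
\end{align*}
Under the honest noise model $\cN$, \cref{prot:qkd-pm} acts identically and independently in each round (Eve is replaced by $\cN^{\ot n}$), so the random variables $C_1,\dots,C_n$ are i.i.d.~with marginal $\nu^{\setft{hon}}_C$.

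Next, using that $\ca$ is affine, write $\ca(\freq{C^n}) = \tfrac{1}{n}\sum_{i=1}^n Y_i$ with $Y_i \deq \ca(\delta_{C_i})$. The $Y_i$ are i.i.d.~with mean $\E[Y_i] = \ca(\nu^{\setft{hon}}_C) = k_{\ca} + \delta$, variance bounded by $\Var{\ca}$ (recall $\Var{\ca} = \max_q \sum_x q(x) \ca(\delta_x)^2 - (\sum_x q(x) \ca(\delta_x))^2$), and centered range bounded by $\Max{\ca} - \Min{\ca}$. Thus
\begin{align*}
\pr{\ca(\freq{C^n}) < k_{\ca}} = \pr{\tfrac{1}{n}\textstyle\sum_i Y_i - \E[Y_1] \leq -\delta}\,.
\end{align*}
The one-sided Bernstein inequality for bounded i.i.d.~variables gives
\begin{align*}
\pr{\tfrac{1}{n}\textstyle\sum_i Y_i - \E[Y_1] \leq -\delta} \leq \exp\!\left(-\frac{n\,\delta^2/2}{\Var{\ca} + (\Max{\ca} - \Min{\ca})\,\delta/3}\right).
\end{align*}

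Finally, I would demand that this bound be at most $\eps^{\setft{comp}}_\ev$, which after taking logarithms and clearing denominators is equivalent to
\begin{align*}
\delta^2 \;\geq\; \frac{2(\Max{\ca} - \Min{\ca})\,\delta + 6\,\Var{\ca}}{3n}\,\log\frac{1}{\eps^{\setft{comp}}_\ev}\,,
\end{align*}
exactly the condition stated in the lemma. There is no real obstacle here: the affinity of $\ca$ turns the event of interest into a tail deviation for an empirical mean of bounded i.i.d.~random variables, and Bernstein yields the stated mixed linear/quadratic condition in $\delta$ (with the linear term coming from the range and the quadratic term from the variance). One small point to verify is that $\nu^{\setft{hon}}_C$ indeed belongs to the set over which $\Var{\ca}$ is maximized, which holds because the honest single-round state lies in $\Sigma(\nu^{\setft{hon}}_C)$.
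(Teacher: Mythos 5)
Your proposal is correct and follows essentially the same route as the paper's proof: reduce $\eps^{\setft{comp}}_\ev$ to $\pr{\ca(\freq{C^n}) < k_{\ca}}$ on the honest i.i.d.\ state, use affinity of $\ca$ to write the test statistic as an empirical mean of the bounded i.i.d.\ variables $\ca(\delta_{C_i})$, and apply the one-sided Bernstein inequality with $\setft{Var}[\ca(\delta_{C_i})] \leq \Var{\ca}$. The algebraic rearrangement to the stated condition on $\delta$ is also exactly as in the paper.
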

\begin{proof}
% Since we are conditioning on not having aborted in \cref{step_pm:rkv}, the probability that $S^n \neq \hat S^n$ is at most $\eps_\kv$.
% Therefore, it suffices to show that the probability of aborting in \cref{step_pm:stat} conditioned on $S^n = \hat S^n$ is at most $\eps^{\setft{comp}}_\ev - \eps_{\kv}$.

Using the definitions at the start of \cref{app:entr_reduction}, we can write $\eps^{\setft{comp}}_\ev = \pr{\Omega_g \wedge \Omega_{\hat C}^c}$.
Since $\Omega_g \wedge \Omega_{\hat C}^c = \Omega_g \wedge \Omega_{C}^c$, we can bound $\eps^{\setft{comp}}_\ev \leq \pr{\Omega_C^c}$. 
The honest implementation is i.i.d., so the state we need to consider in \cref{step_pm:stat} is $(\nu^{\setft{hon}})^{\ot n}$.
Let $C_1, \dots, C_n$ be i.i.d.~random variables with distribution $\nu^{\setft{hon}}_C$.
Then we can view the value $k$ computed by Bob in \cref{step_pm:stat} as a random variable, too, and because $\ca$ is affine we have 
\begin{align*}
k = \frac{1}{n} \sum_{i = 1}^n \ca(\delta_{C_i}) \,,\qquad \ca(\nu^{\setft{hon}}_C) = \E \ca(\delta_{C_i}) \,,
\end{align*}
where by $\ca(\delta_{C_i})$ we mean the random variable that maps a value $c \in \cC$ of $C_i$ to $\ca(\delta_c)$, with $\delta_c$ the point distribution with all the weight on element $c$.
Conditioned on $S^n = \hat S^n$, \cref{step_pm:stat} aborts if $k < k_{\ca}$.
We can now apply Bernstein's inequality and find that 
\begin{align*}
\pr{\ca(\nu^{\setft{hon}}_C) - k > \delta} \leq \exp\left(- n \frac{\delta^2 / 2}{\setft{Var}[\ca(\delta_{C_i})] + (\Max{\ca} - \Min{\ca}) \delta / 3}\right) \,.
\end{align*}
Requiring that this be less than $\eps^{\setft{comp}}_\ev$ and noting that by \cref{eqn:def_var_f}, $\setft{Var}[\ca(\delta_{C_i})] \leq \Var{\ca}$, we find the desired result. 
\end{proof}
We note that for simple protocols, e.g.~ones where a certain number of rounds are used as ``test rounds'' that can either pass or fail, the above bound can usually be replaced by a simpler and tighter one using Hoeffding's inequality (see e.g.~\cite[Section 3.2]{arnon2019simple}).

\section{Relaxing the sequentiality assumption \label{app:block_seq}} 
\begin{figure*}[t]
\centering
\includegraphics[width=0.4\textwidth]{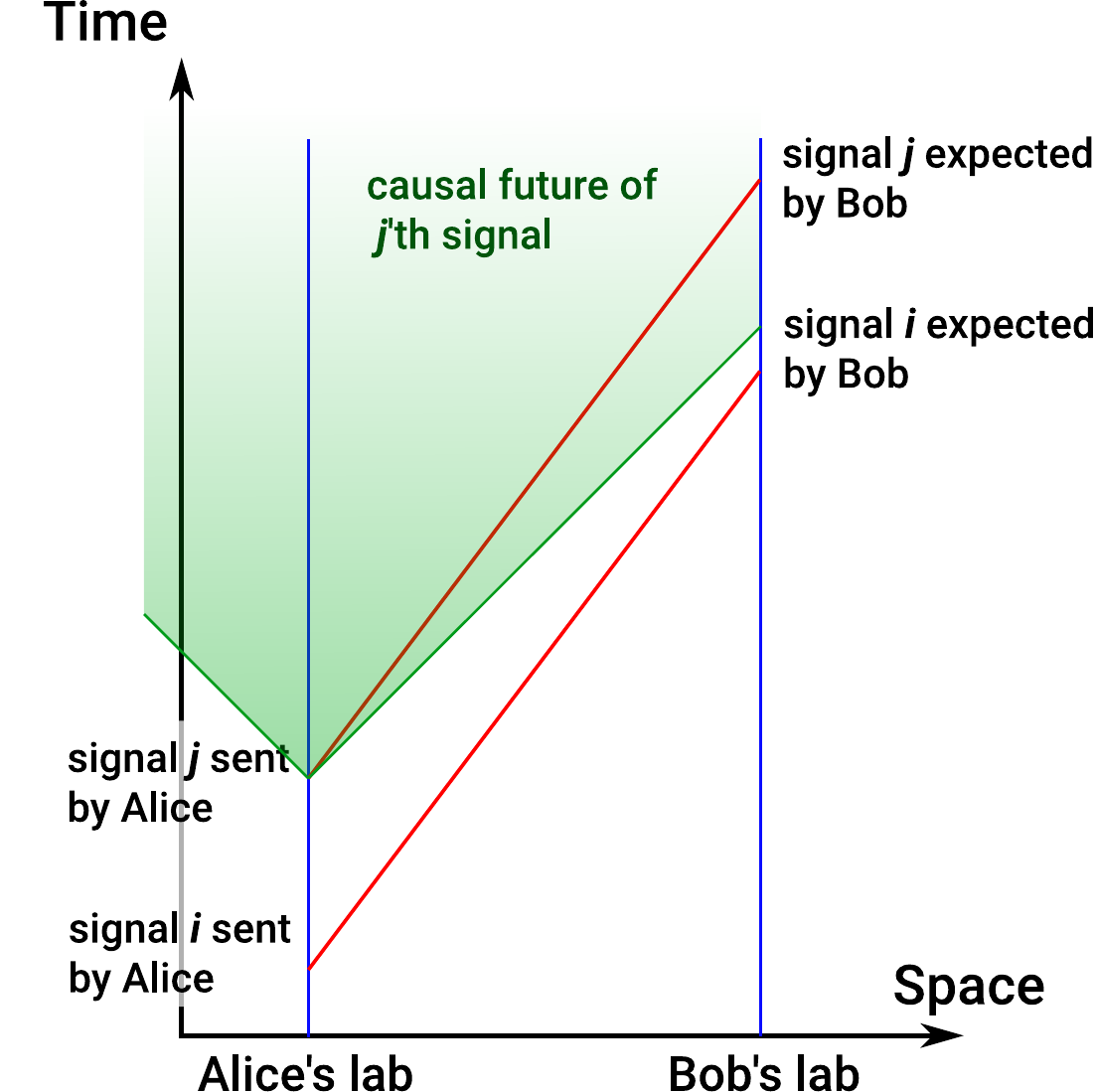}
\caption{Spacetime diagram illustrating causal structure of signal transmission.
If Alice and Bob send the signals on a pre-agreed schedule, the spacetime points where Alice sends signals and Bob expects signals are fixed.
The causal future (green shaded region) of signal $j$ contains all spacetime points to which Eve can transmit information about signal $j$.
If Eve were not able to speed up the signals at all, then the boundary of the causal future (green line) would coincide with Alice's and Bob's expected signal transmission speed (red line) and the sequentiality assumption would be ensured already between subsequent signals (i.e.~\cref{cond:eve_seq}).
If Eve can speed up the signals as shown in the figure, Alice and Bob can choose $i$ and $j$ sufficiently far apart (i.e.~choose a sufficiently large step size in \cref{cond:eve_seq_blocks}) that the expected arrival of signal $i$ by Bob lies outside the causal future of signal $j$ sent by Alice.}
\label{fig:spacetime}
\end{figure*}
As explained in \cref{sec:eve_attack}, for QKD implementations that allow the adversary Eve to speed up up the transmission of signals, it may be difficult to enforce \cref{cond:eve_seq} without significantly lowering the frequency with which signals are sent from Alice to Bob.
It is therefore useful to relax \cref{cond:eve_seq} and allow Eve to be in possession of $s$ signals at a time.
More formally, using the same notation as in \cref{sec:eve_attack}, we would like to prove security of a prepare-and-measure protocol under the following weaker condition:
\begin{condition} \label{cond:eve_seq_blocks}
Eve can only be in possession of at most $s$ subsequent systems $Q_i, \dots, Q_{i + s - 1}$ at the same time.
We call $s$ the \emph{step size} of Eve's attack.
\end{condition}
We note that this condition does \emph{not} mean that Eve has to process the signals in disjoint blocks of size $s$; instead, this condition allows Eve to e.g.~first apply an attack on systems $Q_1, \dots, Q_{s}$, then send system $Q_1$ to Bob and receive $Q_{s+1}$ from Alice, apply an attack to $Q_2, \dots, Q_{s+1}$, etc.
In other words, Eve can execute a ``rolling attack'' that always uses $s$ adjacent signals.
As explained in \cref{sec:eve_attack}, this condition can be enforced by Alice and Bob using a pre-agreed schedule on which to send their signals, assuming we can place some bound on the amount by which Eve could speed up the transmission of signals from Alice to Bob.
(The trivial bound is of course always that Eve can speed up the signal transmission to the speed of light.)
We emphasise that to enforce \cref{cond:eve_seq_blocks} (for some appropriately chosen $s$), Alice and Bob do \emph{not} need to lower the frequency with which they send signals or divide their signals into blocks with breaks between blocks.

We can now prove an analogous statement to our main result \cref{thm:general_qkd_pm} that only requires the weaker \cref{cond:eve_seq_blocks} instead of \cref{cond:eve_seq}.
The cost that we have to pay for allowing the weaker condition \cref{cond:eve_seq_blocks} is that the second-order term $\frac{g({\eps_s}) + \alpha \log(1/\eps_a)}{\alpha-1}$ from \cref{thm:general_qkd_pm} now acquires a prefactor $s$ (and, less importantly, $\eps_s$ gets replaced by $\eps_s/(3s-2)$) and we get an additional term $(s-1)g(\eps_s/(3s-2))$; the latter is negligible compared to the former because $\alpha$ is close to 1.
However, the first-order term remains unchanged and is independent of $s$, so in particular the asymptotic key rate (against general attacks with any fixed step size) is the same as in \cref{thm:general_qkd_pm}.
We illustrate this for the example of B92 in \cref{fig:b92_curves_blocked}.
\begin{theorem} \label{thm:block_qkd_pm}
Fix any choice of arguments $n$, $\psi_{UQ}$, $\{N^{(v)}\}_{v \in \cV}$, $\pd$, $\rk$, $\ev$, $k_\ca$, $\lambda_\ec$, $\eps_{\kv}$, and $\eps_{\pa}$ for \cref{prot:qkd-pm}.
Let $\ca: \mbP(\cC) \to \R$ be an affine collective attack bound for this choice of arguments.
For any $\eps_s, \eps_a > 0, \alpha \in (1, 3/2)$, and $s \in \N$, choose a final key length $l$ that satisfies
\begin{multline}
l \leq n \, k_{\ca} - n \, \frac{\alpha-1}{2-\alpha} \, \frac{\ln(2)}{2} V^2 - s \, \frac{g\left(\eps'\right) + \alpha \log(1/\eps_a)}{\alpha-1} \\
-  n \, \left( \frac{\alpha-1}{2-\alpha} \right)^2 K'(\alpha) - (s-1) g(\eps') - \lceil 2 \log(1/\eps_{\pa})\rceil - \lceil \log(1/\eps_{\kv}) \rceil - \lambda_{\ec} \,, \label{eqn:key_length_blocks}
\end{multline}
where $g(\cdot)$, $V$, and $K'(\cdot)$ are defined in \cref{thm:with_testing} and $\eps' \deq \frac{\eps_s}{3s-2}$.
With this choice of parameters and assuming that \cref{cond:eve_seq_blocks} holds for the value of $s$ chosen above, \cref{prot:qkd-pm} is $\eps^{\setft{cor}}$-correct and $\eps^{\setft{sec}}$-secret for 
\begin{align*}
\eps^{\setft{cor}} = \eps_{\kv} \,, \qquad \eps^{\setft{sec}} = \max\{\eps_{\pa} + 4 \, \eps_s, 2\,\eps_a\} + 2 \, \eps_{\kv}\,.
\end{align*}
\end{theorem}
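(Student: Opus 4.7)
The plan is to reduce \cref{thm:block_qkd_pm} to a straightforward modification of \cref{claim:eat_qkd_pm}: partition the rounds into $s$ interleaved groups so that \cref{cond:eve_seq_blocks} forces the attack to be sequential \emph{within each group}, apply the GEAT per group, and stitch the per-group entropy bounds together with a chain rule for smooth min-entropy. Everything outside the entropy accumulation step -- correctness, completeness, and the reduction in \cref{app:entr_reduction} from secrecy to a smooth min-entropy bound -- does not use \cref{cond:eve_seq} and can be recycled verbatim, with $\eps_s$ replaced by $(3s-2)\eps'$, where $\eps' \deq \eps_s/(3s-2)$.

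Concretely, I would define $G_k = \{i \in \{1,\dots,n\} : i \equiv k \pmod s\}$ for $k=1,\dots,s$ and then partition the $n$ protocol rounds into $|G_k|$ disjoint ``blocks'' each containing exactly one signal from $G_k$, so that under \cref{cond:eve_seq_blocks} Eve can be modelled as processing these blocks sequentially (she may jointly attack up to $s$ signals \emph{inside} one block, but the blocks themselves are ordered in time). For each $k$, this yields a sequence of maps $\cM^{(k)}_j \in \cptp(E^{(k)}_{j-1}, S^{(k)}_j I^{(k)}_j C^{(k)}_j E^{(k)}_j)$, where $S^{(k)}_j, I^{(k)}_j, C^{(k)}_j$ are the quantities associated with the $j$-th signal of $G_k$ and the side-information register $E^{(k)}_j$ absorbs Eve's physical side information together with all classical data $S_i, I_i, C_i$ for rounds $i \notin G_k$ processed in blocks $1,\dots,j$. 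The verifications carried out in the proof of \cref{claim:eat_qkd_pm} -- the reconstruction condition \cref{eqn:measurement_condition}, the no-signalling condition, and the fact that $\ca$ serves as an affine min-tradeoff function -- go through identically, because each block still contains a single freshly-prepared $\psi_{UQ}$ from the $G_k$-signal to which a collective-attack map is applied. Applying \cref{thm:with_testing} on the event $\Omega_C$ with smoothing parameter $\eps'$ thus yields, for every $k$,
\begin{align*}
\hmin^{\eps'}(S^{G_k} \mid S^{G_{\neq k}} I^n C^n E'_n)_{\rho_{\vert \Omega_C}} \geq |G_k| \, k_{\ca} - |G_k| \, \tfrac{\alpha-1}{2-\alpha} \tfrac{\ln 2}{2} V^2 - \tfrac{g(\eps') + \alpha \log(1/\eps_a)}{\alpha-1} - |G_k| \left(\tfrac{\alpha-1}{2-\alpha}\right)^2 K'(\alpha).
\end{align*}
Since conditioning on the additional classical information $S^{G_{>k}}$ can only decrease min-entropy, the same bound also holds with $S^{G_{<k}}$ in place of $S^{G_{\neq k}}$ as the conditioning. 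Finally I would apply the standard chain rule for smooth min-entropy (e.g.~Vitanov--Dupuis--Tomamichel--Renner) $s-1$ times,
\begin{align*}
\hmin^{\eps_s}(S^n \mid I^n C^n E'_n)_{\rho_{\vert \Omega_C}} \geq \sum_{k=1}^s \hmin^{\eps'}(S^{G_k} \mid S^{G_{<k}} I^n C^n E'_n)_{\rho_{\vert \Omega_C}} - (s-1)\, g(\eps'),
\end{align*}
where the smoothing budget $\eps'$ per term plus $2\eps'$ per chain-rule step sums to exactly $(3s-2)\eps' = \eps_s$. Summing the per-group bounds and using $\sum_k |G_k| = n$ gives the required analogue of \cref{eqn:pm_entropy_bound}, and plugging it into the reduction of \cref{app:entr_reduction} produces the key-length condition \cref{eqn:key_length_blocks}.

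The main obstacle is the careful bookkeeping in the block construction: one has to specify the blocks so that every round is handled exactly once, exhibit the interleaved attack as an honest sequence of CPTP maps $\cM^{(k)}_j$ on a fixed system ordering compatible with \cref{cond:eve_seq_blocks}, and verify that the classical registers $C^{(k)}_j$ can still be reconstructed in the projective form required by \cref{eqn:measurement_condition}. Once the block partition is fixed, each of these checks is routine and essentially inherits from the proof of \cref{claim:eat_qkd_pm}, but the notation becomes cumbersome because Eve's ``rolling'' window of $s$ signals straddles group boundaries. A secondary, more quantitative concern is tracking the smoothing parameters and additive penalties through the $s-1$ chain-rule applications so that the factors $s$ and $(s-1)$ in \cref{eqn:key_length_blocks} appear in precisely the right places.
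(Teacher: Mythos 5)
Your proposal is correct and follows essentially the same route as the paper's proof: splitting the rounds into $s$ interleaved groups (the paper's $S_i^{n,s} = S_i S_{i+s}\cdots$), applying the GEAT within each group where \cref{cond:eve_seq_blocks} guarantees sequentiality, and stitching the groups together with $(s-1)$ applications of the min-entropy chain rule with the identical smoothing budget $\eps_s = (3s-2)\eps'$. The only cosmetic difference is that you condition each group on all other groups and then drop the later ones, whereas the paper conditions directly on the preceding groups by absorbing them into Eve's side information.
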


The proof of \cref{thm:block_qkd_pm} follows the same steps as the proof of \cref{thm:general_qkd_pm} in \cref{sec:main_proof}, except that we will need to make some modifications to account for the more general structure of the attack.
Since most of the proof is identical, we only provide a sketch and point out the main differences compared to \cref{sec:main_proof}.

As in \cref{sec:main_proof}, we again denote the final state at the end of \cref{prot:qkd-pm} (for any fixed attack of the form above) by 
\begin{align*}
\rho_{U^nV^nI^nS^n\hat S^n C^n K \hat K E'_n E'} \,.
\end{align*}
The system labels here are as in \cref{sec:main_proof}.
The reduction from \cref{thm:general_qkd_pm} to \cref{claim:eat_qkd_pm} did not use the structure of the attack, so the same steps also allow us to reduce \cref{thm:block_qkd_pm} to the following claim.
\begin{claim}\label{claim:eat_block_pm}
Let $\Omega_{C}$ be the event that $\ca(\freq{C^n}) \geq k_{\ca}$ (i.e.~the statistical check (\cref{step_pm:stat}) passes using the values $C^n$).
Then, for any $\alpha \in (1, 3/2)$:
\begin{align}
\hmin^{\eps_s}(S^n |  I^n C^n E'_n)_{\rho_{| \Omega_C}} 
\geq n \, k_{\ca} - n \, \frac{\alpha-1}{2-\alpha} \, \frac{\ln(2)}{2} V^2 - s \, \frac{g\left(\eps'\right) + \alpha \log(1/\pr{\Omega_C})}{\alpha-1} -  n \, \left( \frac{\alpha-1}{2-\alpha} \right)^2 K'(\alpha) - (s-1) g(\eps')\,, \label{eqn:pm_entropy_bound_blocks}
\end{align}
with $g(\cdot)$, $V$, and $K'(\cdot)$ as in \cref{thm:with_testing} and $\eps' \deq \frac{\eps_s}{3s-2}$.
\end{claim}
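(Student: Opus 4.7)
The strategy is to mimic the proof of \cref{claim:eat_qkd_pm} but with the rounds split into $s$ interleaved groups, on each of which Eve effectively behaves sequentially in the sense of \cref{cond:eve_seq}. Concretely, for $j \in \{1,\dots,s\}$, set $T_j = \{j, j+s, j+2s, \dots\}\cap\{1,\dots,n\}$, let $n_j = |T_j|$, and write $S^{(j)} = (S_i)_{i\in T_j}$, and similarly for $I^{(j)}, C^{(j)}$. The key structural observation is that any window of $s$ consecutive signals contains at most one signal from each $T_j$, so under \cref{cond:eve_seq_blocks} Eve never holds two signals from the same group simultaneously. Consequently, all of Eve's processing between releasing signal $(j)_k$ (the $k$-th element of $T_j$) and receiving signal $(j)_{k+1}$, including her attacks on the signals from other groups falling in that interval, can be packaged into a single channel $\tilde{\cM}_k^{(j)}: F_{k-1} \to F_k\, S_{(j)_k}\, I_{(j)_k}\, C_{(j)_k}$, where $F_k$ is an enlarged Eve register that also carries the classical outputs produced from other groups during that block.

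The second ingredient is a smooth min-entropy chain rule to decouple the groups. Iterating the Vitanov--Tomamichel chain rule
\begin{align*}
H_{\min}^{\alpha+\beta+2\gamma}(AB|E) \geq H_{\min}^{\alpha}(A|BE) + H_{\min}^{\beta}(B|E) - g(\gamma)
\end{align*}
exactly $s-1$ times, each time peeling one group off with both the smoothing parameter for the removed group and the penalty parameter equal to $\eps' = \eps_s/(3s-2)$, gives the bookkeeping $(3s-2)\eps' = \eps_s$ (so the remaining group is also smoothed with $\eps'$) and $(s-1)$ penalty terms of size $g(\eps')$. This produces
\begin{align*}
H_{\min}^{\eps_s}(S^n | I^n C^n E'_n)_{\rho_{|\Omega_C}} \geq \sum_{j=1}^{s} H_{\min}^{\eps'}(S^{(j)} | S^{(>j)} I^n C^n E'_n)_{\rho_{|\Omega_C}} - (s-1)\, g(\eps').
\end{align*}

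Next, to each group-$j$ term I apply the GEAT (\cref{thm:with_testing}) to the sequence $\{\tilde{\cM}_k^{(j)}\}_{k=1}^{n_j}$, exactly as in the proof of \cref{claim:eat_qkd_pm}. The single-round bound from \cref{def:coll_attack} continues to hold for each $\tilde{\cM}_k^{(j)}$ --- the extra processing of other-group signals folded into the channel only acts on Eve's side-information register and therefore does not affect the entropy production condition \cref{eqn:coll_att_mintradeoff} --- so $\ca$ is again a min-tradeoff function, with the same $V$ and $K'(\alpha)$ because the output alphabet is unchanged. Since $\ca(\freq{c^n}) \geq k_\ca$ on $\Omega_C$, the GEAT delivers, for each $j$,
\begin{align*}
H_{\min}^{\eps'}(S^{(j)}|S^{(>j)}I^nC^nE'_n)_{\rho_{|\Omega_C}} \geq n_j k_\ca - n_j \tfrac{\alpha-1}{2-\alpha} \tfrac{\ln 2}{2} V^2 - \tfrac{g(\eps')+\alpha\log(1/\Pr(\Omega_C))}{\alpha-1} - n_j \bigl(\tfrac{\alpha-1}{2-\alpha}\bigr)^2 K'(\alpha).
\end{align*}
Summing over $j$ uses $\sum_j n_j = n$ on the linear-in-$n_j$ terms and multiplies the constant term by $s$, and combining with the chain-rule bound above yields exactly \cref{eqn:pm_entropy_bound_blocks}.

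The main obstacle is the bookkeeping around the event $\Omega_C$: it is defined on the full classical output $C^n$, whereas GEAT expects an event on $\cC^{n_j}$ when applied to group $j$. The cleanest fix is to enlarge the classical register output by each $\tilde{\cM}_k^{(j)}$ to include the cross-group outputs $C_i$ for $i\in[(j)_k,(j)_{k+1})$ produced during that block, so that the aggregated output of the group-$j$ process carries all of $C^n$ and $\Omega_C$ becomes a legitimate event for the group-$j$ GEAT. The min-tradeoff function $\ca$ is extended to this enlarged alphabet by taking the marginal on the coordinate corresponding to $C_{(j)_k}$, preserving both the single-round entropy condition and the lower bound $h \geq k_\ca$ on $\Omega_C$. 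Once this technicality is handled, the bounds above combine exactly as stated.
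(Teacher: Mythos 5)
Your proposal is correct and follows essentially the same route as the paper: split the rounds into $s$ interleaved groups on which Eve's attack is sequential, apply the min-entropy chain rule $(s-1)$ times with $\eps'=\eps_s/(3s-2)$, and invoke the GEAT per group with the other groups' outputs absorbed into Eve's side information. Your explicit treatment of the $\Omega_C$ bookkeeping (enlarging each group-round's classical output so the event lives on the full $C^n$) is a point the paper leaves implicit, but it is the same underlying argument.
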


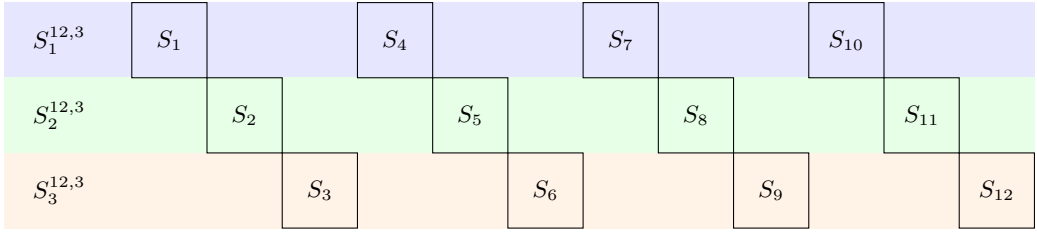
\begin{figure*}[t!]
\centering
\begin{tikzpicture}
\fill[blue!10] (-1.7,2) rectangle (3+3*2+3,3);
\fill[green!10] (-1.7,1) rectangle (3+3*2+3,2);
\fill[orange!10] (-1.7,0) rectangle (3+3*2+3,1);
\def\hspacing{2} % horizontal spacing between copies
\foreach \i in {0,1,2,3}{
  \draw (\i+\i*\hspacing,2) rectangle node[midway]{$S_{\the\numexpr\i*3+1\relax}$}(\i+\i*\hspacing+1,3);
  \draw (\i+\i*\hspacing+1,1) rectangle node[midway]{$S_{\the\numexpr\i*3+2\relax}$}(\i+\i*\hspacing+2,2);
  \draw (\i+\i*\hspacing+2,0) rectangle node[midway]{$S_{\the\numexpr\i*3+3\relax}$}(\i+\i*\hspacing+3,1);  
}

\node[left] at (-0.5,2.5) {$S_1^{12,3}$};
\node[left] at (-0.5,1.5) {$S_2^{12,3}$};
\node[left] at (-0.5,0.5) {$S_3^{12,3}$};
\end{tikzpicture}
\caption{Example of how the rounds are split up into interleaved groups for $n = 12$ and $s = 3$.}
\label{fig:round_grouping}
\end{figure*}

Under \cref{cond:eve_seq_blocks}, we can model Eve's attacks by a sequence of maps $\cA_i: E'_{i-1} Q_{i}^{i+s-1} \to E'_i Q_{i}^{i+s-1}$; the difference to the scenario in \cref{sec:eve_attack} is that now Eve can act on all of $Q_{i}^{i+s} = Q_i \cdots Q_{i+s-1}$ at the same time.
This prevents us from writing the final state $\rho$ as the output of a sequence of maps $\cM_i$ as in \cref{claim:eat_qkd_pm}.

To circumvent this issue, we will need to split the systems $S^n$ into interleaved groups of systems 
\begin{align*}
S_i^{n, s} \deq S_i S_{i + s} S_{i+2s} \cdots S_{i + \lfloor n/s \rfloor s} \,.
\end{align*}
This is illustrated in \cref{fig:round_grouping}.
We can bound the entropy of $S^n$ in terms of a sum of entropies of the individual groups $S_i^{n, s}$ for $i = 1, \dots, s-1$ by repeatedly applying the chain  rule for min-entropies~\cite{vitanov2013chain} for a total of $(s-1)$ number of times.
Setting $\eps' = \frac{\eps_s}{3s-2}$ as in \cref{claim:eat_block_pm}, we get that
\begin{align*}
&\hmin^{\eps_s}(S^n |  I^n C^n E'_n)_{\rho_{| \Omega_C}} \\
&= \hmin^{\eps_s}(S_1^{n, s} \dots S_s^{n, s} |  I^n C^n E'_n)_{\rho_{| \Omega_C}} \\
&\geq  \hmin^{\eps_s - 3 \eps'}(S_1^{n, s} \dots S_{s-1}^{n, s} |  I^n C^n E'_n)_{\rho_{| \Omega_C}} + \hmin^{\eps'}(S_s^{n, s} |  I^n C^n E'_n S_1^{n, s} \dots S_{s-1}^{n, s})_{\rho_{| \Omega_C}} - g(\eps') \\
&\!  \begin{multlined}
\geq\hmin^{\eps_s - 2 \cdot 3 \eps'}(S_1^{n, s} \dots S_{s-2}^{n, s} |  I^n C^n E'_n)_{\rho_{| \Omega_C}} \\ \qquad + \hmin^{\eps'}(S_{s-1}^{n, s} |  I^n C^n E'_n S_1^{n, s} \dots S_{s-2}^{n, s})_{\rho_{| \Omega_C}} + \hmin^{\eps'}(S_s^{n, s} |  I^n C^n E'_n S_1^{n, s} \dots S_{s-1}^{n, s})_{\rho_{| \Omega_C}} - 2 \cdot g(\eps') \end{multlined}\\
&\geq \dots \\
&\geq \hmin^{\eps_s - (s-1) \cdot 3 \eps'}(S_1^{n, s} |  I^n C^n E'_n)_{\rho_{| \Omega_C}} + \sum_{i = 2}^s \hmin^{\eps'}(S_i^{n, s} |  I^n C^n E'_n S_1^{n, s} \dots S_{i-1}^{n, s})_{\rho_{| \Omega_C}}  - (s-1) \cdot g(\eps') \\
&= \sum_{i = 1}^s \hmin^{\eps'}(S_i^{n, s} |  I^n C^n E'_n S_1^{n, s} \dots S_{i-1}^{n, s})_{\rho_{| \Omega_C}}  - (s-1) \cdot g(\eps') \,, \numberthis \label{eqn:total_entropy_groups}
\end{align*}
where the last line holds because $\eps_s - (s-1) \cdot 3 \eps' = \eps'$.
Having split the total entropy into such groups, we can now use the GEAT to bound each $\hmin^{\eps'}(S_i^{n, s} |  I^n C^n E'_n S_1^{n, s} \dots S_{i-1}^{n, s})_{\rho_{| \Omega_C}}$ in terms of single-round von Neumann entropies.
This works in exactly the same manner as the proof of \cref{claim:eat_qkd_pm} because by assumption, Eve cannot act simultaneously on more than one of the rounds corresponding to $S_i^{n, s} \deq S_i S_{i + s} S_{i+2s} \cdots S_{i + \lfloor n/s \rfloor s}$.
Therefore, for each individual group $S_i^{n, s}$, Eve's attack is a sequential attack \emph{on that group of rounds}.
As a result, following the same steps as in the proof of \cref{claim:eat_qkd_pm} (and assuming that $n$ is divisible by $s$ for simplicity) we get that for each group of rounds,
\begin{align*}
\hmin^{\eps'}(S_i^{n, s} |  I^n C^n E'_n S_1^{n, s} \dots S_{i-1}^{n, s})_{\rho_{| \Omega_C}} 
\geq \frac{n}{s} \, k_{\ca} - \frac{n}{s} \, \frac{\alpha-1}{2-\alpha} \, \frac{\ln(2)}{2} V^2 - \frac{g\left(\eps'\right) + \alpha \log(1/\pr{\Omega_C})}{\alpha-1} -  \frac{n}{s} \, \left( \frac{\alpha-1}{2-\alpha} \right)^2 K'(\alpha) \,.
\end{align*}
Note that the extra conditioning on $S_1^{n, s} \dots S_{i-1}^{n, s}$ does not make a difference to the proof as we may formally consider these systems as part of Eve's side information \emph{for that particular group of rounds}.
Inserting this bound into \cref{eqn:total_entropy_groups}, we get that 
\begin{align*}
\hmin^{\eps_s}(S^n |  I^n C^n E'_n)_{\rho_{| \Omega_C}}
\geq n \, k_{\ca} - n \, \frac{\alpha-1}{2-\alpha} \, \frac{\ln(2)}{2} V^2 - s \, \frac{g\left(\eps'\right) + \alpha \log(1/\pr{\Omega_C})}{\alpha-1} -  n \, \left( \frac{\alpha-1}{2-\alpha} \right)^2 K'(\alpha)  - (s-1) g \left( \eps' \right)
\end{align*}
as claimed in \cref{claim:eat_block_pm}.

Finally, we note that for certain parameter regimes one can improve the second-order terms in \cref{thm:block_qkd_pm} using exactly the same idea as above, but performing the splitting into interleaved groups of rounds at the level of Renyi entropies, not min-entropies.
Concretely, this means that instead of deriving \cref{eqn:total_entropy_groups}, one first relates $\hmin^{\eps_s}(S^n |  I^n C^n E'_n)_{\rho_{| \Omega_C}}$ to the Renyi entropy (see \cite[Definition 5.2]{tomamichel2015quantum}) $\hau(S^n |  I^n C^n E'_n)_{\rho_{| \Omega_C}}$, and then applies the chain rule from~\cite{dupuis2015chain} according to a binary tree of depth $O(\log s)$,  i.e.~on the first application of the chain rule one splits the rounds into two equally sized groups (corresponding to a step size of 2), on the second application one again splits each of these groups into equally sized subgroups (corresponding to a step size of 4), and so on, until after $O(\log s)$ repetitions the desired step size $s$ is reached.
The remainder of the analysis is then identical.

\begin{figure}[t!]
\centering
\begin{subfigure}[T]{0.49\linewidth}
\centering
    \begin{tikzpicture}
	\begin{axis}[
		height=6cm,
		width=8.3cm,
		xlabel=Depolarising probability $p$ in $\%$,
		ylabel=Key rate,
		xmin=0,
		xmax=0.095,
		ymax=0.25,
		ymin=0,
	     xtick={0,0.03,0.06,0.09,0.12,0.15},
	     xticklabels={0, 3, 6, 9, 12, 15},
          ytick={0,0.05,0.1,0.15,0.2,0.25},
          yticklabels={0,0.05,0.1,0.15,0.2,0.25},
          reverse legend,
		legend style={legend cell align=left,font=\footnotesize} 
	]
\input{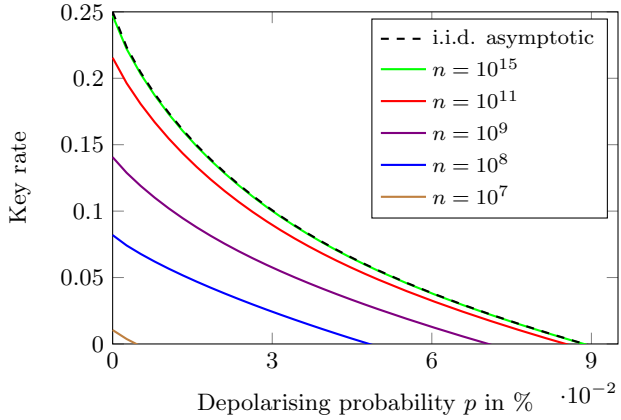}
\end{axis}
\end{tikzpicture}
\caption{Step size $s = 10$.\label{fig:block10}}
\end{subfigure}
\begin{subfigure}[T]{0.49\linewidth}
\centering
\begin{tikzpicture}
	\begin{axis}[
		height=6cm,
		width=8.3cm,
		xlabel=Depolarising probability $p$ in $\%$,
		ylabel=Key rate,
		xmin=0,
		xmax=0.095,
		ymax=0.25,
		ymin=0,
	     xtick={0,0.03,0.06,0.09,0.12,0.15},
	     xticklabels={0, 3, 6, 9, 12, 15},
          ytick={0,0.05,0.1,0.15,0.2,0.25},
          yticklabels={0,0.05,0.1,0.15,0.2,0.25},
          reverse legend,
		legend style={legend cell align=left,font=\footnotesize} 
	]
	
\input{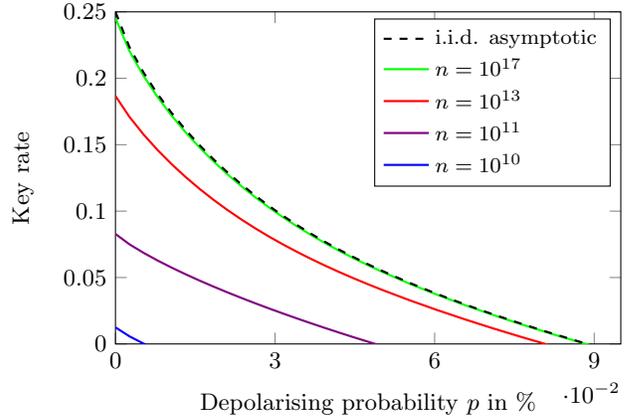}
\end{axis}
\end{tikzpicture}
\caption{Step size $s = 10^4$. Note that the same colors here do not correspond to the same $n$ as in (a).\label{fig:block1e4}}
\end{subfigure}
\caption{B92 example with weakened sequentiality condition.
We consider the same B92 protocol with the same parameters as in the main text (\cref{sec:b92}, see in particular~\cref{fig:b92_curves}), except that we weaken the sequentiality condition to allow for some step size $s$, and as a result have to use \cref{thm:block_qkd_pm} to obtain the key rates.
The step size $s = 10$ is realistic for satellite-to-earth QKD experiments~\cite{liao2017satellite},\textsuperscript{a} whereas the plot for $s = 10^{4}$ illustrates what happens at larger block sizes that are more relevant for fibre-based QKD implementations.
We see that while the asymptotic key rate remains the same irrespective of block size, at larger block sizes finite-size corrections become more relevant.
We note that the key rates are mostly included for illustrative purposes and could be improved further by using the Renyi-based method explained at the end of this section.}
\label{fig:b92_curves_blocked}
\vspace{0.2cm}
\textsuperscript{a} \footnotesize 
For this example, we consider the worst case, where Eve is able to speed up the signals to the speed of light in vaccuum.
This defines the largest possible causal future in \cref{fig:spacetime} that is still compatible with special relativity.
Using the effective thickness of the atmosphere $d \approx 8$km~\cite{hom_atmos} and letting $n_{\rm Air} \approx 1.0003$ be the refractive index of air, if the signal is sent at a 45 degree angle through the atmosphere, the delay of the signal compared to one travelling at the speed of light $c$ is $\Delta t = \sqrt{2} (n_{\rm Air} - 1)d/c \approx 10^{-8}s$.
Assuming a signal frequency $f_{\rm Signal} = 1$ GHz (which exceeds the one used e.g.~by~\cite{liao2017satellite}), we see that a step size $s = \Delta t \cdot f_{\rm Signal} \approx 10$ is sufficient.
\end{figure}

\section{Entanglement-based protocols} \label{sec:entanglement_based}

Our general entanglement-based QKD protocol is very similar to the prepare-and-measure protocol in \cref{sec:pm_general}. 
The only difference is in the the data generation step: in \cref{prot:qkd-pm}, Alice prepared a state $\psi_{UQ}$ and Bob measured system $Q$, storing his outcome in register $V$.
In contrast, in \cref{prot:qkd-ent}, Eve prepares a state $\psi_{PQE}$, and sends $P$ to Alice and $Q$ to Bob. Then, Alice and Bob measure their respective systems, recording the outcomes in registers $U$ and $V$.
The raw data in $U$ and $V$ is then treated exactly the same as in \cref{prot:qkd-pm}.
Even though \cref{step_ent:pd}-\cref{step_ent:pa} are identical to \cref{prot:qkd-pm}, we spell out the full protocol for reference.

\begin{longfbox}[breakable=true, padding=1em, padding-right=1.8em, padding-top=1.2em, margin-top=1em, margin-bottom=1em]
\begin{protocol} {\bf General entanglement-based QKD protocol} \label{prot:qkd-ent} \end{protocol}
\noindent\underline{Protocol arguments}  \vspace{-0.8ex}
\vspace{-0.6ex}
\begin{center}
\begin{tabularx}{\textwidth}{r c X}
$n \in \N$ &:& number of rounds \\
$\{M^{(u)}\}_{u \in \cU}$, $\{N^{(v)}\}_{v \in \cV}$ &:& POVMs acting on Hilbert spaces $\H_P, \H_Q$, respectively, describing Alice's and Bob's measurements with $\cU$ and $\cV$ the set of possible outcomes \\
$\pd: \cU \times \cV \to \cI$ &:& function describing transcript of public discussion (where $\cI$ is some finite alphabet) \\
$\rk: \cU \times \cI \to \cS$ &:& function describing Alice's raw key generation (where $\cS$ is the alphabet of the raw key) \\
$\ev: \cV \times \cI \times \cS \to \cC$ &:& function ``evaluating'' each round by assigning a label from the alphabet $\cC$ \\ 
$k_{\ca} > 0$ &:& required amount of single-round entropy generation \\
$\eps_{\kv}, \eps_{\pa} > 0$ &:& tolerated errors during key validation and privacy amplification steps \\
$\ca: \mbP(\cC) \to \R$ &:& function corresponding to collective attack bound \\
$l \in \N$ &:& length of final key \\
\end{tabularx}
\end{center}

\vspace{0.6ex}

\noindent\underline{Protocol steps}
\begin{enumerate}[label=(\arabic*)]
    \setlength{\itemsep}{0.4ex}
    \setlength{\parskip}{0.4ex}
    \setlength{\parsep}{0.4ex} 
\item \emph{Data generation.} Alice receives systems $P^n$ and Bob systems $Q^n$ of an initial quantum state $\psi_{P^n Q^n E}$ prepared by Eve.
For each $i \in \{1, \dots, n\}$, Alice measures the POVM $\{M^{(u)}\}_{u \in \cU}$ on register $P_i$ of the state $\psi_{P^n Q^n E}$ and records the outcome in register $U_i$. Similarly, Bob measures $\{N^{(v)}\}_{v \in \cV}$ on register $Q_i$ and records the outcome in register $V_i$.
\item \emph{Public discussion.}
For each $i \in \{1, \dots, n\}$, Alice and Bob publicly exchange information $I_i = \pd(U_i, V_i)$. \label{step_ent:pd}
\item \emph{Raw key generation.} For each $i \in \{1, \dots, n\}$, Alice computes $S_i = \rk(U_i, I_i)$. \label{step_ent:raw_key_gen}
\item \emph{Error correction.} Alice and Bob publicly exchange information $\ec \in \bits^{\lambda_{\ec}}$, which can depend on $U^n, V^n$, and $I^n$. Bob computes $\hat S^n(\ec, V^n, I^n) \in \cS^n$. \label{step_ent:ec}
\item \emph{Raw key validation.} Alice chooses a function $\hash: \cS^n \to \bits^{\lceil \log(1/\eps_{\kv}) \rceil}$ from a universal hash family $\cF$ (\cref{def:univ_hash}) according to the associated probability distribution $P_{\cF}$ and publishes a description of $f$ and the value $\hash(S^n)$. \label{step_ent:rkv}
Bob computes $\hash(\hat S^n)$ and aborts the protocol if $\hash(S^n) \neq \hash(\hat S^n)$.
\item \emph{Statistical check.} For each $i \in \{1, \dots, n\}$, Bob sets $\hat C_i = \ev(V_i, I_i, \hat S_i)$. Bob then computes $\ca(\freq{C^n})$. If the result is less than $k_\ca$, he aborts the protocol. \label{step_ent:stat}
\item \emph{Privacy amplification.} Alice and Bob convert their registers $S^n$ and $\hat S^n$ to a binary representation, obtaining strings of length $m$.
Alice chooses a seed $\mu \in \bits^m$ uniformly at random and publishes her choice.
Alice and Bob compute $l$-bit strings $K = \ext(S^n, \mu)$ and $\hat K = \ext(\hat S^n, \mu)$, respectively, and $\ext: \bits^m \times \bits^m \to \bits^l$ is a quantum-proof strong $(l + \lceil 2 \log(1/\eps_{\pa})\rceil, \eps_{\pa})$-extractor (\cref{def:extractor}). \label{step_ent:pa}
\end{enumerate} 
\end{longfbox}

The rest of this section proceeds similarly to \cref{sec:pm_general}: we first explain how to model Eve's attack, again distinguishing between general and collective attacks.
We formally define collective attack bounds for \cref{prot:qkd-ent} in \cref{def:coll_attack_ent}.
Then, in \cref{thm:general_qkd_ent}, we analyse the security of~\cref{prot:qkd-ent} assuming a collective attack bound.
The definitions and the security proof are very similar to \cref{sec:pm_general}, so we give less detailed explanations and only point out the relevant differences for the proof.

Eve's attack in \cref{prot:qkd-ent} is specified by her choice of the state $\psi_{P^n Q^n E}$.
An honest Eve would distribute some desired product state $\hat \psi$, e.g.~an EPR pair, and keep no side information, i.e.~$\psi_{P^n Q^n E} = \hat \psi_{PQ}^{\ot n}$.
The most general attack available to Eve consists in preparing an arbitrary state $\psi_{P^n Q^n E}$.
We note that Eve's attack in an entanglement-based protocol is not subject to a sequentiality condition (\cref{cond:eve_seq}) as in a prepare-and-measure protocol.
This is because in an entanglement-based protocol, Eve's attack occurs at the level of the input state, which can be arbitrary, and the actions in the protocol performed by Alice and Bob are sequential irrespective of Eve's choice of input state.
In contrast, in a prepare-and-measure protocol, Eve's attack is part of the actions performed \emph{during} the protocol and therefore needs to be modelled as part of the quantum channels applied during the protocol; as a result, for the protocol as a whole to still have a sequential structure, Eve's attack needs to have such a structure, too.
This is not an artefact of the GEAT, but rather a structural difference between entanglement-based and prepare-and-measure protocols.

As in \cref{sec:pm_general}, a collective attack is the special case where Eve behaves in an i.i.d.~manner, i.e.~Eve prepares a product state $\psi_{P^n Q^n E} = \psi_{P Q E}^{\ot n}$ for some arbitrary state $\psi_{P Q E}$.
Formally, we can define a collective attack bound for \cref{prot:qkd-ent} similarly to \cref{def:coll_attack}.

\begin{definition}[Collective attack bound for \cref{prot:qkd-ent}] \label{def:coll_attack_ent}
Fix arguments $\{M^{(u)}\}_{u \in \cU}$, $\{N^{(v)}\}_{v \in \cV}$, $\pd$, $\rk$, and $\ev$ for \cref{prot:qkd-ent}.
Suppose that Alice and Bob run a single round (i.e.~$n=1$) of  \cref{prot:qkd-ent} up to (and including) \cref{step_ent:raw_key_gen}.\footnote{Note that for this, the other arguments specified in the description of \cref{prot:qkd-ent} are not required, so we do not need to specify them. The collective attack bound only depends on the protocol arguments specified in \cref{def:coll_attack_ent}.}
For a choice of Eve's state $\psi_{P Q E}$, denote the state at the end of \cref{step_ent:raw_key_gen} as $\nu_{UVSIE}$.
Let $\nu_{UVSIEC}$ be an extension of this state, where $C = \ev(V, I, S)$.
A collective attack bound (for the choice of parameters fixed above) is a map $\ca: \mbP(\cC) \to \R$ such that for any initial state $\psi_{PQE}$ prepared by Eve, the state $\nu_{CUVSIE}$ satisfies 
\begin{align*}
H(S | I E C)_{\rho} \geq \ca(\rho_{C}) \,.
\end{align*}
\end{definition}
It is easy to see that for states that minimize the l.h.s.~of this inequality, the system $E$ is a purification of $P$ and $Q$.
Hence, it suffices to restrict our attention to such states.
We are now ready to prove the security statement for \cref{prot:qkd-ent}.

\begin{theorem} \label{thm:general_qkd_ent}
Fix any choice of arguments $n$, $\{M^{(u)}\}_{u \in \cU}$, $\{N^{(v)}\}_{v \in \cV}$, $\pd$, $\rk$, $\ev$, $k_\ca$, $\eps_{\kv}$, and $\eps_{\pa}$ for \cref{prot:qkd-ent}.
Let $\ca: \cC \to \R$ be an affine collective attack bound for this choice of arguments.
For any $\eps_s, \eps_a > 0$ and $\alpha \in (1, 3/2)$\footnote{These are parameters that can be optimised to maximize the key length}, choose a final key length $l$ that satisfies
\begin{multline*}
l \leq n \, k_{\ca} - n \, \frac{\alpha-1}{2-\alpha} \, \frac{\ln(2)}{2} V^2 - \frac{g({\eps_s}) + \alpha \log(1/\eps_a)}{\alpha-1} -  n \, \left( \frac{\alpha-1}{2-\alpha} \right)^2 K'(\alpha) \\
- \lceil 2 \log(1/\eps_{\pa})\rceil - \lambda_{\ec} - \lceil \log(1/\eps_{\kv}) \rceil \,,
\end{multline*}
where
\begin{align*}
g({\eps_s}) &= \log(1 - \sqrt{1-{\eps_s}^2}) \,,\\
V &= \log (2|\cS|^2+1) + \sqrt{2 + \Var{\ca}} \,,\\
K'(\alpha) &= \frac{(2-\alpha)^3}{6 (3-2\,\alpha)^3 \ln 2} \, 2^{\frac{\alpha-1}{2-\alpha}(2\log |\cS|  + \Max{\ca} - \MinSigma{\ca})} \ln^3\left( 2^{2\log |\cS| + \Max{\ca} - \MinSigma{\ca}} + e^2 \right) \,.
\end{align*}
With this choice of parameters, \cref{prot:qkd-ent} is $\eps^{\setft{cor}}$-correct and $\eps^{\setft{sec}}$-secret for 
\begin{align*}
\eps^{\setft{cor}} = \eps_{\kv} \,, \qquad \eps^{\setft{sec}} = \max\{\eps_{\pa} + 4 \, \eps_s, 2 \eps_a\} + 2 \, \eps_{\kv}\,.
\end{align*}
\end{theorem}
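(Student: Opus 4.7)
The plan is to mirror the proof of \cref{thm:general_qkd_pm} in \cref{sec:main_proof}, adapting the GEAT application to the entanglement-based setting; the EB case turns out to be slightly simpler since Eve's attack is fully specified by her choice of initial state $\psi_{P^n Q^n E}$, so no analogue of \cref{cond:eve_seq} is needed. Correctness is immediate: \cref{step_ent:rkv} together with \cref{def:univ_hash} bounds $\pr{S^n \neq \hat S^n \wedge \setft{not abort}}$ by $\eps_\kv$, and $K \neq \hat K$ forces $S^n \neq \hat S^n$ through \cref{step_ent:pa}. For secrecy, I would run the reduction of \cref{app:entr_reduction} essentially unchanged: introduce auxiliary registers $C_i = \ev(V_i, I_i, S_i)$ built from Alice's actual raw key, pay $\eps_\kv$ to swap the abort event $\Omega_{\hat C}$ for $\Omega_C$ using the hash check, and use the strong-extractor property together with a classical chain rule to subtract $\ec$ and $\hash(S^n)$. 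This reduces the task to establishing the single smooth min-entropy bound
\begin{align*}
\hmin^{\eps_s}(S^n \mid I^n C^n E)_{\rho_{|\Omega_C}} \geq n \, k_\ca - n \, \frac{\alpha-1}{2-\alpha} \, \frac{\ln 2}{2} V^2 - \frac{g(\eps_s) + \alpha \log(1/\pr{\Omega_C})}{\alpha-1} - n \left( \frac{\alpha-1}{2-\alpha} \right)^2 K'(\alpha) \,.
\end{align*}

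For the core GEAT application I would take all $R_i$ to be trivial and absorb the still-unmeasured quantum systems into Eve's side-information register: concretely $E_{i-1} = P_i \cdots P_n \otimes Q_i \cdots Q_n \otimes E \otimes I^{i-1} \otimes C^{i-1}$. The channel $\cM_i \in \cptp(E_{i-1}, S_i C_i E_i)$ then measures $P_i$ with $\{M^{(u)}\}$ and $Q_i$ with $\{N^{(v)}\}$, deterministically computes $I_i = \pd(U_i, V_i)$, $S_i = \rk(U_i, I_i)$, and $C_i = \ev(V_i, I_i, S_i)$, outputs $S_i$ as $A_i$ and $C_i$ as the statistics register, and places the remaining quantum systems together with the updated classical transcripts into $E_i$. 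The initial state is $\rho_{E_0} = \psi_{P^n Q^n E}$. Triviality of $R_i$ makes the no-signalling condition of \cref{thm:with_testing} automatic; the reconstruction condition in \cref{eqn:measurement_condition} holds since $C_i$ is classical and copied into $E_i$; and because by round $n$ every $P_i, Q_i$ has been consumed, $\hmin(S^n \mid E_n)$ coincides exactly with the desired $\hmin(S^n \mid I^n C^n E)$.

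The only substantive verification is that the collective attack bound $\ca$ is a min-tradeoff function (\cref{def:tradeoff}) for $\{\cM_i\}$. For any $i$ and any input $\omega$ on $E_{i-1} \tilde E_{i-1}$, the marginal on $P_i Q_i$ together with a purifying system built from all remaining subsystems of $E_{i-1} \tilde E_{i-1}$ realises a legitimate single-round EB attack in the sense of \cref{def:coll_attack_ent}, since that definition allows $\psi_{PQE}$ to be an arbitrary joint state. Applying \cref{def:coll_attack_ent} gives $H(S_i \mid I_i \tilde E C_i)_\nu \geq \ca(\nu_{C_i})$ for the combined purifying register $\tilde E$; since the registers $I_i$ and $C_i$ are contained in $E_i$ and $\tilde E$ is a subsystem of $E_i \tilde E_{i-1}$, data processing upgrades this to the required bound $H(S_i \mid E_i \tilde E_{i-1})_\nu \geq \ca(\nu_{C_i})$. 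Plugging the affine $\ca$ into \cref{thm:with_testing} with $d_A = |\cS|$ then reproduces precisely the second-order constants $V$, $K'(\alpha)$, and $g(\eps_s)$ stated in the theorem.

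I expect no genuinely hard step: the main obstacle is bookkeeping, namely keeping straight which registers play the roles of $R$, $E$, and $\tilde E$ in the GEAT, and checking that the purifying system used to invoke \cref{def:coll_attack_ent} really sits inside $E_i \tilde E_{i-1}$ so that data processing legitimises the min-tradeoff inequality. Once this is done, the rest of the argument follows the PM template of \cref{sec:main_proof} line-by-line.
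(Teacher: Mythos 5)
Your proposal is correct and follows essentially the same route as the paper's own proof: the same correctness argument, the same reduction via the auxiliary registers $C_i = \ev(V_i,I_i,S_i)$ and the event $\Omega_C$, the same choice of GEAT channels with $E_{i-1} = P_i^n Q_i^n I^{i-1} C^{i-1} E$, and the same identification of the collective attack bound as a min-tradeoff function by treating all systems other than $P_iQ_i$ as the extension/purifying register allowed in \cref{def:coll_attack_ent}. The only cosmetic difference is that the paper invokes the arbitrariness of the extension in \cref{def:coll_attack_ent} directly (so the conditioning systems match exactly and no data-processing step is needed), whereas you purify and then apply data processing — which works, since conditioning on the larger purifying register only decreases the entropy.
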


\begin{proof}
The correctness statement is analogous to the proof of \cref{thm:general_qkd_pm}, so we focus on the secrecy condition.
Alice, Bob, and Eve's joint final state at the end of the protocol is denoted by $\rho_{U^nV^nI^nS^n\hat S^n \hat C^n K \hat K E}$.
As in the proof of \cref{thm:general_qkd_pm} add to $\rho$ additional systems $C^n$ defined by
\begin{align}
C_i = \ev(V_i, I_i, S_i) \label{eqn_ent:c_def}
\end{align}
and define the event $\Omega_{C}$ by the condition $\ca(\freq{C^n}) \geq k_{\ca}$.
Then, we can follow the same steps as in the proof of \cref{thm:general_qkd_pm} to reduce the \cref{thm:general_qkd_ent} to the following \cref{claim:eat_qkd_ent}.
\end{proof}

\begin{claim}\label{claim:eat_qkd_ent}
Continuing with the notation from before, for any $\alpha \in (1, 3/2)$:
\begin{align}
\hmin^{\eps_s}(S^n |  I^n C^n E)_{\rho_{|\Omega_C}}
\geq n \, k_{\ca} - n \, \frac{\alpha-1}{2-\alpha} \, \frac{\ln(2)}{2} V^2 - \frac{g({\eps_s}) + \alpha \log(1/\prs{\rho}{\Omega_C})}{\alpha-1} -  n \, \left( \frac{\alpha-1}{2-\alpha} \right)^2 K'(\alpha)\, , 
\end{align}
where $g({\eps_s})$, $V$, and $K'(\alpha)$ are defined as in \cref{thm:general_qkd_ent}.
\end{claim}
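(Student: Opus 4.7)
The plan is to apply \cref{thm:with_testing} (the GEAT) with the affine collective attack bound $\ca$ serving as the min-tradeoff function, following essentially the same template as the proof of \cref{claim:eat_qkd_pm}. The entanglement-based setting is in fact strictly simpler than the prepare-and-measure one: Eve's only freedom is her choice of initial state $\psi_{P^n Q^n E}$, so the sequential channels driving the GEAT consist solely of honest measurement and classical post-processing, and Eve's side-information register $E$ is never updated.

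Concretely, I would take $E_i := E$ (unchanged throughout) and use the internal memory $R_i := P_{i+1}\cdots P_n Q_{i+1}\cdots Q_n$ to carry the unmeasured halves of Eve's state through the sequence. The channel $\cM_i : R_{i-1}E_{i-1} \to S_i I_i C_i R_i E_i$ measures $\{M^{(u)}\}$ on $P_i$ and $\{N^{(v)}\}$ on $Q_i$, records $U_i, V_i$, computes $I_i = \pd(U_i, V_i)$, $S_i = \rk(U_i, I_i)$, $C_i = \ev(V_i, I_i, S_i)$, traces out $U_i, V_i$, and acts as the identity on $E$. After copying $I_i$ and $C_i$ into the side information exactly as in the proof of \cref{claim:eat_qkd_pm}, composing the $\cM_i$ on Eve's initial state reproduces the marginal $\rho_{S^n I^n C^n E}$.

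The hypotheses of \cref{thm:with_testing} are immediate: the measurement condition \cref{eqn:measurement_condition} holds because each $C_i$ is a deterministic function of classical registers kept in the final side information, and no-signalling holds because $\cM_i$ is the identity on $E$, so one can take $\cR_i = \setft{Tr}_{R_{i-1}}$. To verify that $\ca$ is a min-tradeoff function, fix $i$ and any $\omega_{R_{i-1}E_{i-1}\tilde E_{i-1}}$, and consider its reduction on $P_i Q_i E \tilde E_{i-1}$ (purifying the remaining unmeasured systems into Eve's register if needed). This is a valid single-round initial state in the sense of \cref{def:coll_attack_ent}, and one round of \cref{prot:qkd-ent} applied to it produces $\cM_i(\omega)$ on systems $S_i I_i C_i E \tilde E_{i-1}$. \cref{def:coll_attack_ent} then gives
\begin{align*}
\ca(\cM_i(\omega)_{C_i}) \leq H(S_i \mid I_i C_i E \tilde E_{i-1})_{\cM_i(\omega)},
\end{align*}
which coincides with $H(S_i \mid E_i \tilde E_{i-1})$ after absorbing the classical $I_i, C_i$ into the post-round side information $E_i$. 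Applying \cref{thm:with_testing} with $h = k_{\ca}$ and $\Omega = \Omega_C$ then delivers the claimed lower bound on $\hmin^{\eps_s}(S^n \mid I^n C^n E)_{\rho_{|\Omega_C}}$.

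I do not anticipate a substantive obstacle. The only bookkeeping point is to embed the static, single-state adversary of the entanglement-based protocol into the GEAT's sequential adversary model, which is handled by sweeping the unmeasured $P, Q$ systems into $R_i$. Compared with \cref{claim:eat_qkd_pm}, the fact that $E$ carries no sequential update makes both the no-signalling check and the min-tradeoff-function verification noticeably shorter; in particular, there is no analogue of the ``custom collective attack'' construction needed in \cref{claim:eat_qkd_pm}, because \cref{def:coll_attack_ent} is phrased directly in terms of an arbitrary initial state, and the reduction of an arbitrary $\omega$ to such a state is immediate.
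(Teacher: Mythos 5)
Your overall strategy (apply \cref{thm:with_testing} with $\ca$ as the min-tradeoff function, using the purifying-system clause of \cref{def:coll_attack_ent}) is the right one, but your channel decomposition breaks the no-signalling hypothesis of the GEAT, and the step where you dismiss that hypothesis is where the argument fails. You place the unmeasured systems $P_{i+1}^n Q_{i+1}^n$ into the memory register $R_i$ and let the side information grow as $E_i = I^i C^i E$. The no-signalling condition requires a channel $\cR_i \in \cptp(E_{i-1}, E_i)$ with $\setft{Tr}_{A_i R_i C_i} \circ \cM_i = \cR_i \circ \setft{Tr}_{R_{i-1}}$, i.e.\ the output marginal on $E_i$ must be recoverable from the input marginal on $E_{i-1}$ alone. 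But $I_i$ and $C_i$ are computed from measurement outcomes on $P_i Q_i$, which in your setup live inside $R_{i-1}$: two inputs with the same marginal on $E_{i-1}$ but different states on $P_i Q_i$ generically yield different distributions of $I_i$, so no such $\cR_i$ exists. (Your proposed ``$\cR_i = \setft{Tr}_{R_{i-1}}$'' does not even have the right type, and the statement that ``$\cM_i$ is the identity on $E$'' conflates the static register $E$ with the full, growing side-information register $E_i$.) This is not a bookkeeping issue: the no-signalling condition exists precisely to forbid information flowing from $R_{i-1}$ into the side information, which is exactly what producing $I_i C_i$ in $E_i$ from $P_i Q_i$ does.

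The fix is the opposite allocation, which is what the paper uses: put the unmeasured future systems into the \emph{side information}, $E_i = P_{i+1}^n Q_{i+1}^n I^i C^i E$, and take $R_i$ trivial. The GEAT explicitly permits the side information to be updated arbitrarily, so measuring $P_i Q_i$ out of $E_{i-1}$ and appending $I_i C_i$ is allowed, and with trivial $R_i$ the no-signalling condition is vacuous (one simply sets $\cR_i = \setft{Tr}_{A_i C_i} \circ \cM_i$). The rest of your argument then survives essentially unchanged: \cref{eqn:measurement_condition} holds because $C^n$ sits in the final conditioning system, the min-tradeoff verification treats $P_{i+1}^n Q_{i+1}^n I^{i-1} C^{i-1} E \tilde E_{i-1}$ collectively as the purifying system allowed by \cref{def:coll_attack_ent}, and applying \cref{thm:with_testing} with $h = k_\ca$ and $\Omega = \Omega_C$ gives the claim.
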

\begin{proof}
To make us of the GEAT, we need to write $\rho_{S^n I^n C^n E | \Omega_C}$ as the result of repeatedly applying quantum channels $\cM_1, \dots, \cM_n$ to Eve's (arbitrary) initial state $\psi_{P^n Q^n E}$ in \cref{prot:qkd-ent}.
For this, we define
\begin{align*}
\cM_i : P_i Q_i \to S_i I_i C_i
\end{align*}
as the following channel: given a quantum system $\omega_{P_i Q_i}$, \begin{enumerate}
\item measure the POVMs $\{M^{(u)}\}_{u \in \cU}$ and $\{N^{(v)}\}_{v \in \cV}$ on $P_i$ and $Q_i$ respectively, and store the results in registers $U_i$ and $V_i$,
\item set $I_i = \pd(U_i, V_i)$,
\item set $S_i = \rk(U_i, I_i)$,
\item set $C_i = \ev(V_i, I_i, S_i)$,
\item trace out registers $U_i$ and $V_i$.
\end{enumerate}
Comparing the steps of \cref{prot:qkd-ent} and \cref{eqn_ent:c_def} with this definition of $\cM_i$, we see that the marginal of $\rho$ on systems $S^n I^n C^n E$ is the same as the output of the maps $\cM_i$: 
\begin{align*}
\rho_{S^n I^n C^n E} = \cM_n \circ \dots \circ \cM_1(\psi_{P^n Q^n E}) \,.
\end{align*}
If we define the systems $E_i = P_{i+1}^n Q_{i+1}^n I^i C^i E$, then by suitable tensoring with the identity map and copying the register $C_i$ we can also view $\cM_i$ as a map 
\begin{align*}
\tilde \cM_i: E_{i-1} \to S_i E_i C_i \,.
\end{align*}
Then, we can also express the final state (which technically now includes two copies of $C^n$) as
\begin{align*}
\rho_{S^n E_n C^n} = \tilde \cM_n \circ \dots \circ \tilde \cM_1(\psi_{P^n Q^n E}) \,.
\end{align*}

To apply \cref{thm:with_testing}, we first need to check that the required conditions on the maps $\tilde \cM_i$ are satisfied.
The condition \cref{eqn:measurement_condition} is clearly satisfied as the systems $C_i$ are themselves included in the conditioning system.
The no-signalling condition in \cref{thm:with_testing} is also trivially satisfied in this case since there is no system $R_i$.

We now want to argue that the collective attack bound $\ca: \mbP(\cC) \to \R$ used as an argument in the protocol is a min-tradeoff function for the maps $\{\cM_i\}$.
By \cref{def:tradeoff}, for this we need to show that for any $i$ and any state $\omega^{i-1}_{E_{i-1} \tilde E_{i-1}}$ (where $\tilde E_{i-1} \equiv E_{i-1}$):
\begin{align}
\ca(\cM_i(\omega^{i-1})_{C_i}) \leq H(S_i | E_i \tilde E_{i-1})_{\tilde \cM_i(\omega^{i-1})} = H(S_i | I_i C_i P_{i+1}^n Q_{i+1}^n I^{i-1} C^{i-1} E \tilde E_{i-1})_{\tilde \cM_i(\omega^{i-1})} \,. \label{eqn_ent:coll_att_mintradeoff}
\end{align}
Remembering that $\tilde \cM_i$ acts as identity on the systems $P_{i+1}^n Q_{i+1}^n I^{i-1} C^{i-1} E \tilde E_{i-1}$, we can consider these systems collectively as a purifying system.
Since \cref{def:coll_attack_ent} allows arbitrary purifying systems, we see that \cref{eqn_ent:coll_att_mintradeoff} holds for any collective attack bound $\ca$ and  conclude that $\ca$ is a min-tradeoff function for  $\{\cM_i\}$.
By definition, for any $c^n \in \Omega_{C}$, $\ca(\freq{c^n}) \geq k_{\ca}$.
Therefore, \cref{claim:eat_qkd_ent} follows by applying \cref{thm:with_testing}.
\end{proof}

\section{Simplification of  the optimisation problem for the B92 protocol} \label{app:b92_simpler}
Here we describe additional simplifications to the numerical optimisation problem from \cref{eqn:c_opt_psi} for the case of the B92 protocol as it is specified in \cref{sec:b92}.
As a first simplification, we exploit the fact that Alice and Bob distinguish between ``test rounds'', where $T = 1$ and they use the function $\ev_{T = 1}$, and ``data rounds'', where $T = 0$ and they perform no statistical check.
We can therefore split the state $\nu$ as 
\begin{align*}
\nu = (1 - \gamma) \nu^{(\setft{data})} + \gamma \, \nu^{\setft{(test)}} \,,
\end{align*}
where $\nu^{(\setft{data})}_C = \proj{\bot}$ and $\nu^{(\setft{test})}_C$ is a distribution over $\cC' = \{\texttt{fail}, \texttt{inc}, \varnothing\}$ determined according to $\ev_{T = 1}$.
We can now apply \cite[Lemma V.5]{dupuis2019entropy}, which states the following (translated to our notation):
if an affine function $g: \mbP(\cC') \to \R$ satisfies that for any initial state $\hat \psi_{PQ}$, 
\begin{align}
g(\nu^{(\setft{test})}_C) \leq H(S|IEC)_{\nu} \,, \label{eqn:g_prop}
\end{align}
then the affine function $\ca: \cC \to \R$ for $\cC = \cC' \cup \{\bot\}$ defined by 
\begin{align*}
\ca(\delta_c) &\deq \Max{g} + \frac{1}{\gamma} (g(\delta_c) - \Max{g}) \quad \setft{for } c \in \cC' \,\\
\ca(\delta_\bot) &\deq \Max{g} 
\end{align*}
is a collective attack bound.
Here, $\Max{g}$ is defined as in \cref{eqn:def_var_f} and $\delta_c$ denotes the point distribution with all weight on element $\delta_c$.
To evaluate $\ca$ on any distribution $\nu_C$, one simply writes that distribution as a convex combination of such point distributions and uses that $\ca$ is affine, i.e.~linear under convex combinations.
In addition, \cite[Lemma V.5]{dupuis2019entropy} also provides simple formulae for the properties of $\ca$ as in \cref{eqn:def_var_f} in terms of the properties of $g$.
The main advantage of this approach over a direct evaluation of the optimisation problem from \cref{eqn:c_opt_psi} is that for small values of $\gamma$, the latter often runs into numerical stability issues, whereas the former does not.

The problem of finding a collective attack bound is therefore reduced to finding a function $g$ that satisfies \cref{eqn:g_prop}.
This can be achieved using the same method as in \cref{sec:col_attack_bounds}.
We make the ansatz $g(\nu^{(\setft{test})}_C) = \vec \lambda' \cdot \vec \nu^{(\setft{test})}_C + c_{\vec \lambda'}$ and choose $\vec \lambda'$ heuristically, e.g.~using Matlab's $\texttt{fminsearch}$.
Given a choice of $\vec \lambda'$, we need to determine $c_{\vec \lambda'}$ such that $g$ satisfies \cref{eqn:g_prop}.
Following the steps of \cref{sec:col_attack_bounds}, we can see that a valid choice of $c_{\vec \lambda'}$ is given by the solution to the following convex optimisation problem:
\begin{align*}
c_{\vec \lambda'} =
&\inf_{\hat \psi_{PQ}} \dnormal{\nu^1_{PQSIC}}{\cP_S(\nu^1_{PQSIC})} - \vec \lambda' \cdot \vec \nu^{(\setft{test})}_C \\
& \sth \hat \psi_{PQ} \geq 0\,, \quad \tr{\hat \psi_{PQ}} = 1\,,\quad\hat \psi_P = \tilde \psi_P\,, 
\end{align*}
where $\nu^1$ is as defined in \cref{eqn:def_nu1}.
We can simplify this optimisation problem by focussing on the case $I = \top$ and $C = \bot$ since we only expect to generate raw data for the secret key in a data round whose measurement was not inconclusive:\footnote{Note that the following inequality always holds by data processing and because $I$ and $C$ is are classical registers, so we always obtain a valid (albeit possibly less tight) key rate after applying it. The preceding explanation argues that in the case of B92, this inequality is (close to) an equality and therefore using it does not significantly lower the key rate.}
\begin{align*}
\dnormal{\nu^1_{PQSIC}}{\cP_S(\nu^1_{PQSIC})} \geq \dnormal{\nu^1_{PQS\wedge I = \top \wedge C = \bot}}{\cP_S(\nu^1_{PQS \wedge I = \top \wedge C = \bot})} \,.
\end{align*}
Computing $M^{(s,i,c)}_{PQ}$ according to \cref{eqn:def_M}, inserting this into \cref{eqn:def_nu1}, and using that the square root acts trivially on projectors, we can write $\nu^1_{PQS \wedge I = \top \wedge C = \bot}$ explicitly as 
\begin{align*}
\nu^1_{PQS \wedge I = \top \wedge C = \bot} = (1 - \gamma) \sum_{s, s' \in \bits} \Big(\proj{s}_P \ot \sqrt{\1 - N^{(\bot)}_Q}\Big) \hat \psi_{PQ} \Big(\proj{s'}_P \ot \sqrt{\1 - N^{(\bot)}_Q}\Big) \ot \ket{s}\!\bra{s'} \,.
\end{align*}
This state arises from
\begin{align*}
\tilde \nu_{PQ \wedge I = \top \wedge C = \bot} = (1 - \gamma)  \Big(\1_P \ot \sqrt{\1 - N^{(\bot)}_Q}\Big) \hat \psi_{PQ} \Big(\1_P \ot \sqrt{\1 - N^{(\bot)}_Q}\Big)
\end{align*}
by application of the isometry $V = \sum_{s \in \bits} \proj{s}_P \ot \ket{s}_S$.
Similarly we can see that 
\begin{align*}
\cP_S(\nu^1_{PQS \wedge I = \top \wedge C = \bot}) = V \cP_P(\tilde \nu_{PQ \wedge I = \top \wedge C = \bot}) V^\dagger \,,
\end{align*}
so by isometric invariance of the relative entropy:
\begin{align*}
\dnormal{\nu^1_{PQS\wedge I = \top \wedge C = \bot}}{\cP_S(\nu^1_{PQS \wedge I = \top \wedge C = \bot})} = \dnormal{\tilde \nu_{PQ\wedge I = \top \wedge C = \bot}}{\cP_P(\tilde \nu_{PQ \wedge I = \top \wedge C = \bot})} \,.
\end{align*}
Therefore, we can find a valid choice for $c_{\vec \lambda}$ by solving the simplified optimisation problem 
\begin{align*}
c_{\vec \lambda'} = &\inf_{\hat \psi_{PQ}} \dnormal{\tilde \nu_{PQ\wedge I = \top \wedge C = \bot}}{\cP_P(\tilde \nu_{PQ \wedge I = \top \wedge C = \bot})} - \vec \lambda' \cdot \vec \nu^{(\setft{test})}_C \\
& \sth \hat \psi_{PQ} \geq 0\,, \quad \tr{\hat \psi_{PQ}} = 1\,,\quad\hat \psi_P = \tilde \psi_P\,.
\end{align*}
We solve this optimisation problem using the package \texttt{CVXQUAD}~\cite{fawzi2019semidefinite} and thus obtain the desired affine collective attack bound.

\section{Parameters for B92 example} \label{sec:b92_params}

In this section, we detail our choice of parameters for the B92 protocol presented in the main text.
Most parameter choices are summarised in the following table.

\begin{center}
\begin{tabularx}{12cm}{|c|c|X|}
\hline
Symbol & Value & Description \\
\hline 
$n$ & see \cref{fig:b92_curves} & number of rounds \\
$\eps_{\kv}$ & $5\cdot10^{-11}$ & tolerated error during key validation step \\
$\eps_\pa$ & $10^{-10}$& tolerated error during privacy amplification step \\
$\eps_a$ & $4 \cdot 10^{-10}$& minimum probability of passing statistical check \\
$\eps_s$ & $2 \cdot 10^{-10}$& smoothing parameter \\
$\eps^{\setft{comp}}_\kv$ & $5 \cdot 10^{-3}$ & abort probability from key validation step \\
$\eps^{\setft{comp}}_\ev$ & $5 \cdot 10^{-3}$ & abort probability from statistical check\\
\hline
\end{tabularx}
\end{center}

We refer to \cref{prot:qkd-pm} and \cref{thm:general_qkd_pm} for a more detailed description of the role of each parameter.
The testing probability $\gamma$ and the parameter $\alpha$ from \cref{thm:general_qkd_pm} are optimised numerically.
We choose the remaining arguments for \cref{prot:qkd-pm} so that they satisfy the conditions in \cref{lem:comp_ec}, \cref{lem:comp_ev}, and \cref{thm:general_qkd_pm} and compute the resulting values of $\eps^{\setft{cor}}$, $\eps^{\setft{sec}}$, and $\eps^{\setft{comp}}$.
Specifically, to choose a value for $\lambda_{\ec}$, a direct calculation shows that for noise level $p$, we have $H(S | V I)_{\nu^{\setft{hon}}} = \frac{1+p}{4} h(1/(1+p))$, where $h(x) = - x \log x - (1-x) \log (1-x)$ is the binary entropy, and $\nu^{\setft{hon}}$ is the state for an honest noisy implementation and depends implicitly on the noise level $p$.
We then set 
\begin{align*}
\lambda_{\ec} = \left\lceil n \frac{1+p}{4} h(1/(1+p)) + 2 \sqrt{n} \sqrt{1 - 2 \log (\eps^{\setft{comp}}_\kv / 2)} \log(7) + 2 \log \frac{2}{\eps^{\setft{comp}}_\kv} \right\rceil \,.
\end{align*}
We furthermore choose $k_{\ca} = \ca(\nu_C^{\setft{hon}}) - \delta$ for
\begin{align*}
\delta = \left( \frac{2 (\Max{\ca} - \Min{\ca}) \ca(\nu_C^{\setft{hon}}) + 6 \Var{\ca}}{3 n} \log \frac{1}{\eps^{\setft{comp}}_\ev - \eps_{\kv}}  \right)^{1/2} \,.
\end{align*}
It is easy to see that the above choices satisfy the conditions in \cref{lem:comp_ec} and \cref{lem:comp_ev},\footnote{Note that we have replaced the factor of $\delta$ on the r.h.s.~in \cref{lem:comp_ev} by $\ca(\nu_C^{\setft{hon}})$. This can be done since we are only interested in cases where $k_\ca > 0$, i.e.~we can assume $\delta \leq \ca(\nu_C^{\setft{hon}})$.} whence it follows that the total completeness error for this choice of parameters is 
\begin{align*}
\eps^{\setft{comp}} = \eps^{\setft{comp}}_\kv + \eps^{\setft{comp}}_\ev = 10^{-2} \,.
\end{align*}

Finally, we choose the key length to be the largest integer $l$ that satisfies the condition in \cref{eqn:key_length}.
Then, we can apply \cref{thm:general_qkd_pm} to find that the B92 protocol with this choice of parameters is $\eps^{\setft{cor}}$-correct and $\eps^{\setft{sec}}$-secret with
\begin{align*}
\eps^{\setft{cor}} = \eps_{\kv} = 5 \cdot 10^{-11}\,, \qquad \eps^{\setft{sec}} = \max\{\eps_{\pa} + 4 \, \eps_s, 2\,\eps_a\} + 2 \, \eps_{\kv} \leq 10^{-9} \,.
\end{align*}

\section{BB84 protocol with decoy states} \label{app:decoy}

The BB84 protocol~\cite{bb84} is the most well-known QKD protocol and we already described how it can be viewed as an instance of \cref{prot:qkd-pm} in \cref{sec:gen_pm_protocol}.
That description assumed that Alice always sends a single qubit to Bob.
This qubit could e.g.~be implemented as a polarised photon.
However, in practice Alice will usually use a highly attenuated laser that does not reliably output a single photon.
Instead, the number $s$ of photons in a laser pulse is distributed according to the Possonian distribution $p_L(s|\mu) = e^{-\mu} \frac{\mu^s}{s!}$, where the average photon number $\mu$ depends on the laser's intensity and is known to Alice.
This means that a single pulse may contain multiple photons (with the same polarisation), allowing Eve to perform a photon number splitting attack~\cite{lutkenhaus2002quantum}: Eve measures the number of photons in a pulse and, if there are multiple photons, measures the polarisation of one of the photons and forwards the others to Bob unchanged; this tells Eve the corresponding bit in the raw key.
Therefore, only raw key bits from rounds with exactly one photon in the laser pulse contribute to the secret key.
We therefore need to lower-bound the fraction $\Omega$ of ``single-photon rounds'' and the error rate $e_1$ in those rounds.
While such a bound can be obtained from the standard BB84 protocol, the resulting key rate is quite poor.
The decoy state method is a modification of the BB84 protocol that allows for better estimation of $\Omega$ and $e_1$ and therefore achieves higher key rates; see \cite{hwang2003quantum,lo2005decoy,ma2005practical,lim2014concise} for a background on the idea of decoy state protocols.

Decoy-state protocols can be implemented with commercially available components and thus serve as good examples for practical QKD. The purpose of this section is to show that the methods developed here can be used to establish their security.
For this, we give a formal description of the BB84 decoy state protocol as an instance of \cref{prot:qkd-pm} and explain how an existing i.i.d.~asymptotic analysis of the BB84 decoy state protocol can be understood as a collective attack bound in our framework.
This puts us in a position to apply \cref{thm:general_qkd_pm}: we simply need to compute the relevant properties of the collective attack bound and numerically optimize over the parameters of the protocol to obtain the maximum key rate.
As our goal is to illustrate the use of our framework, and not to numerically optimize key rates, we leave a detailed numerical analysis with experimentally realistic noise models for future work.
This example again highlights the ease of use of our framework: one only needs to verify that the protocol fits into the template \cref{prot:qkd-pm} and can reuse collective attack bounds from prior work to immediately obtain a finite-size key rate against general attacks.
In contrast to previous techniques such as the de Finetti theorem, using photonic protocols over qubit protocols introduces no additional complications because our \cref{thm:general_qkd_pm} only depends on the collective attack bound, not the underlying quantum states.

We now describe the BB84 decoy state protocol as an instance of \cref{prot:qkd-pm}.\footnote{We note that there are many different variants of the decoy state BB84 protocol. Here, we use the version described in~\cite{lim2014concise}, except that we use a fixed number of total rounds.}
We denote by $\psi_{\mu, x, a}$ the state produced by a laser pulse at intensity $\mu$ with (polarisation) basis $x \in \zx$ and value (polarisation direction) $a \in \bits$.
In the protocol, Alice chooses an intensity $\mu \in \{\mu_1, \mu_2, \mu_3\}$ with probabilities $p_{\mu_1}, p_{\mu_2}, p_{\mu_3}$, respectively, for some $\mu_i$ satisfying $\mu_1 > \mu_2 + \mu_3$ and $\mu_2 > \mu_3 \geq 0$.
Alice also chooses a basis $x \in \zx$ with probability $q_x$, and a value $a \in \bits$ uniformly at random. 
Therefore, 
\begin{align*}
\psi_{UQ} = \sum_{\substack{\mu \in \{\mu_1, \mu_2, \mu_3\} \\ x \in \zx}} p_{\mu} q_x \proj{\mu, x, a}_U \ot \left( \psi_{\mu, x, a} \right)_Q.
\end{align*}

Bob will choose a measurement basis $y \in \zx$ with probability $q_y = q_x$ and measure the signal he receives from Alice.
The measurement will yield an outcome $b \in \{\emptyset, 0, 1\}$, where $\emptyset$ corresponds to Bob not detecting any photon and $0, 1$ denote Bob's measured polarisations.\footnote{In practice, this is implemented using a polarising beam splitter and two detectors. If no detector clicks, Bob records $\emptyset$; if the first (second) detector clicks, Bob record 0 (1); if both detectors click, Bob chooses $b \in \zx$ uniformly at random.}
This measurement can be described by a POVM $\{N^{(y, b)}\}$.
Because we will reuse a collective attack bound from~\cite{lim2014concise} instead of deriving our own, there is no need to write out this POVM explicitly.

During public discussion,  Alice announces the intensity $\mu$, Bob announces whether he received outcome $\emptyset$ or not, and both reveal their basis choices.
Furthermore, if $x = y = \Z$, they announce their values $a$ and $b$.
Formally, denoting by $U_i = (\mu_i, x_i, a_i)$ and $V_i = (y_i, b_i)$ Alice's and Bob's classical values in the $i$-th round,
\begin{align*}
\pd(U_i, V_i) = \begin{cases}
(\mu_i, x_i, y_i, a_i, b_i) & \textnormal{if } x_i = y_i = \Z \wedge b_i \neq \emptyset \\
(\mu_i, x_i, y_i, \emptyset) & \textnormal{if } b_i = \emptyset \\
(x_i, y_i) & \textnormal{else.}
\end{cases}
\end{align*}

Alice uses the measurement outcomes from rounds where both her and Bob chose the $X$-basis as the raw key (with $I_i = \pd(U_i, V_i)$): 
\begin{align*}
S_i = \rk(U_i, I_i) = \begin{cases}
a_i & \textnormal{if } x_i = y_i = \X \,, \\
\bot & \textnormal{else.}
\end{cases}
\end{align*}

Finally, to evaluate each round, Bob records the intensity, whether he received outcome $\emptyset$ when Alice sent a photon in the $X$-basis, and whether their outcomes agree in case $x_i = y_i = \Z$.
Formally, 
\begin{align*}
\ev(V_i, I_i, \hat S_i) = \begin{cases}
(\mu_i, \emptyset_{\X}) & \textnormal{if } x_i = \X \wedge \, b_i = \emptyset,\\
(\mu_i, \err_\Z) 
& \textnormal{if } x_i = y_i = \Z \wedge\, a_i \neq b_i \neq \emptyset, \\
\bot & \textnormal{else.}
\end{cases}
\end{align*}
We therefore see that decoy state protocols naturally fit into the framework of \cref{prot:qkd-pm}.

The decoy state BB84 protocol is simple enough to be analysed analytically in the i.i.d.~asymptotic setting.
Therefore, instead of using the numerical technique from \cref{sec:col_attack_bounds}, we can instead reuse these analytical results as our collective attack bound.
Concretely, we need to bound $H(S|IEC)_\nu$ for any collective attack.
For the decoy state BB84 protocol, this analysis can be performed analytically, so we do not need to invoke the numerical technique described in the main text.
We briefly sketch the derivation of the analytical bound and refer to~\cite{ma2005practical,lim2014concise} for details.

We define the ``transmission probability'' $t^x_s$ as the probability that if Alice sends out an $s$-photon state in basis $x \in \zx$, Bob will detect at least one photon, i.e.~not receive outcome $\emptyset$.
Note that $t^x_0 > 0$, i.e.~Bob may detect a photon even though Alice did not send one, either due to a dark count in Bob's detector or due to Eve sending a photon instead.
Similarly, the ``failure probability'' $f^x_s$ is defined as the probability that if Alice sends out $s$ photons in basis $x$, Bob's measurement outcome $b$ will be different from Alice's chosen value $a$, conditioned on Bob not receiving $\emptyset$.
We note that because Alice only knows the intensity of her laser, not how many photons are in a particular pulse, the above quantities are not directly accessible to Alice and Bob.
Also recall that Alice chooses intensity $\mu \in \{\mu_1, \mu_2, \mu_3\}$ with probability $p_{\mu_i}$, and that for a given choice of $\mu$ the number of photons is distributed as
\begin{align*}
p_L(s|\mu) = e^{-\mu} \frac{\mu^s}{s!}\,.
\end{align*}

The security analysis now proceeds in two steps: first, we bound $H(S|IEC)_\nu$ in terms of $t^x_s$ and $f^x_s$.
Then we bound the latter quantities in terms of statistics Alice and Bob can observe in the protocol.
To bound $H(S|IEC)_\nu$, we observe that if Alice did not send any photons but Bob had a detection, the corresponding raw key bit has 1 bit of entropy.
On the other hand, if Alice sent exactly 1 photon, the standard BB84 analysis based on uncertainty relations~\cite{berta2010uncertainty} applies and the corresponding raw key bit has $1 - h(f^\Z_1)$ bits of entropy, where $h$ is the binary entropy function.
Finally, if Alice sends multiple photons, the corresponding raw key bit cannot be guaranteed to have any entropy due to the photon number splitting attack.
Since we consider the photon number to be a classical quantity, we can simply take the weighted average over these cases and find that for any collective attack, 
\begin{align}
H(S|IEC)_\nu \geq \tau_0 t^\X_0 + \tau_1 t^\X_1 (1 - h(f^\Z_1)) \label{eqn:ds_entr} \,,
\end{align}
where $\tau_s$ is the probability that Alice sends out and $s$-photon state: 
\begin{align*}
\tau_s = \sum_{i = 1}^3 p_{\mu_i} p_L(s|\mu_i) \,.
\end{align*}

To bound $t^x_s$ and $f^x_s$, we define the corresponding quantities $t^x_\mu$ and $f^x_\mu$ as the transmission and failure probabilities if Alice sends out a laser pulse with intensity $\mu$.
The $t^x_\mu$ and $f^x_\mu$ are observable quantities because Alice knows which intensity she set for her laser.
Clearly, 
\begin{align*}
t^x_{\mu_i} = \sum_{s=0}^\infty p_L(\mu_i|s) t^x_s \,, \tand 
f^x_{\mu_i} = \sum_{s=0}^\infty p_L(\mu_i|s) f^x_s \,,
\end{align*}
where 
\begin{align*}
p_L(\mu_i|s) = \frac{p_{\mu_i}}{\tau_s} p_L(s|\mu_i) 
\end{align*}
according to Bayes' rule.
From this, one can derive the following bounds by straightforward algebra (see~\cite{ma2005practical,lim2014concise} for details): 
\begin{align*}
t_0^x &\geq \frac{\mu_2 - \mu_3}{\tau_0} \left( \frac{\mu_2 e^{\mu_3} t_{\mu_3}^x}{p_{\mu_3}} - \frac{\mu_3 e^{\mu_2} t_{\mu_2}^x}{p_{\mu_2}} \right) \,,\\
t_1^x &\geq \frac{\mu_1 \tau_1}{\mu_1(\mu_2 - \mu_3) - (\mu_2^2 - \mu_3^2)} \left( \frac{e^{\mu_2} t^x_{\mu_2}}{p_{\mu_2}} - \frac{e^{\mu_3} t^x_{\mu_3}}{p_{\mu_3}} + \frac{\mu_2^2 - \mu_3^2}{\mu_1^2}\left( \frac{t^x_0}{\tau_0} - \frac{e^{\mu_1}t^x_{\mu_1}}{p_{\mu_1}} \right) \right)\,,\\
f_1^x &\leq \frac{\tau_1}{\mu_2 - \mu_3} \left( \frac{e^{\mu_2} f^x_{\mu_2}}{p_{\mu_2}} - \frac{e^{\mu_3} f^x_{\mu_3}}{p_{\mu_3}} \right) \,. \numberthis \label{eqn:ds_lower_bds}
\end{align*}

We now note that $t^\X_{\mu_i}$ and $f^\Z_{\mu_i}$ are related to the statistics stored in register $C$ by
\begin{align*}
t^\X_{\mu_i} = 1 - \pr{C = (\mu_i, \emptyset_\X)} \,, \quad f^\Z_{\mu_i} = \pr{C = (\mu_i, \err_\Z)} \,.
\end{align*}
Therefore, if we insert \cref{eqn:ds_lower_bds} into \cref{eqn:ds_entr}, we obtain a collective attack bound $\ca(\nu_C)$.
To use \cref{thm:general_qkd_pm}, one can take any affine lower bound to this function.
\end{appendices}

\end{document}